\documentclass[12pt]{preprint}
\usepackage[full]{textcomp}
\usepackage[osf]{newtxtext} 
\usepackage{colortbl}
\usepackage{upgreek}
\usepackage{comment}
\usepackage{amssymb}
\usepackage{lmodern}
\usepackage{mathtools}

\usepackage{hyperref}
\usepackage{breakurl}
\usepackage{aligned-overset}
\usepackage{mhenvs}
\usepackage{mhequ} 
\newcommand{\be}{\begin{equation*}}
\newcommand{\ee}{\end{equation*}}
\usepackage{mhsymb}
\usepackage{booktabs}
\usepackage{tcolorbox}
\usepackage[utf8]{inputenc}
\usepackage{longtable}
\usepackage{wrapfig}
\usepackage{subcaption}
\usepackage{epsfig}
\usepackage{microtype}
\usepackage[dvipsnames]{xcolor}

\usepackage{centernot}
\usepackage{enumitem}
\usepackage{bm}
\usepackage{stackrel}
\usepackage{graphicx}

\makeatletter
\newcommand{\globalcolor}[1]{%
  \color{#1}\global\let\default@color\current@color
}
\makeatother

\usetikzlibrary{calc}
\usetikzlibrary{decorations}
\usetikzlibrary{positioning}
\usetikzlibrary{shapes}
\usetikzlibrary{shapes.misc}
\usetikzlibrary{arrows.meta, bending}

\tikzset{cross/.style={cross out, draw=black, fill=none, minimum size=2*(#1-\pgflinewidth), inner sep=0pt, outer sep=0pt}, cross/.default={2pt}}

\definecolor{blush}{rgb}{0.87, 0.36, 0.51}
	\definecolor{brightcerulean}{rgb}{0.11, 0.67, 0.84}
	\definecolor{greenryb}{rgb}{0.4, 0.69, 0.2}

\newif\ifdark
\darkfalse

\ifdark
\definecolor{darkred}{rgb}{0.9,0.2,0.2}
\definecolor{darkblue}{rgb}{0.7,0.3,1}
\definecolor{darkgreen}{rgb}{0.1,0.9,0.1}
\definecolor{franck}{rgb}{0,0.8,1}
\definecolor{pagebackground}{rgb}{.15,.21,.18}
\definecolor{pageforeground}{rgb}{.84,.84,.85}
\pagecolor{pagebackground}
\AtBeginDocument{\globalcolor{pageforeground}}
\definecolor{symbols}{rgb}{0,0.7,1}
\colorlet{connection}{red!80!black}
\colorlet{boxcolor}{blue!50}

\else

\definecolor{darkred}{rgb}{0.7,0.1,0.1}
\definecolor{darkblue}{rgb}{0.4,0.1,0.8}
\definecolor{darkgreen}{rgb}{0.1,0.7,0.1}
\definecolor{franck}{rgb}{0,0,1}
\definecolor{pagebackground}{rgb}{1,1,1}
\definecolor{pageforeground}{rgb}{0,0,0}
\colorlet{symbols}{blue!90!black}
\colorlet{connection}{red!30!black}
\colorlet{boxcolor}{blue!50!black}

\fi

\def\slash{\leavevmode\unskip\kern0.18em/\penalty\exhyphenpenalty\kern0.18em}
\def\dash{\leavevmode\unskip\kern0.18em--\penalty\exhyphenpenalty\kern0.18em}

\DeclareMathAlphabet{\mathbbm}{U}{bbm}{m}{n}

\DeclareFontFamily{U}{BOONDOX-calo}{\skewchar\font=45 }
\DeclareFontShape{U}{BOONDOX-calo}{m}{n}{
  <-> s*[1.05] BOONDOX-r-calo}{}
\DeclareFontShape{U}{BOONDOX-calo}{b}{n}{
  <-> s*[1.05] BOONDOX-b-calo}{}
\DeclareMathAlphabet{\mcb}{U}{BOONDOX-calo}{m}{n}
\SetMathAlphabet{\mcb}{bold}{U}{BOONDOX-calo}{b}{n}
\setlist{noitemsep,topsep=4pt,leftmargin=1.5em}

\DeclareMathAlphabet{\mathbbm}{U}{bbm}{m}{n}

\DeclareMathAlphabet{\mcb}{U}{BOONDOX-calo}{m}{n}
\SetMathAlphabet{\mcb}{bold}{U}{BOONDOX-calo}{b}{n}
\DeclareFontFamily{U}{mathx}{\hyphenchar\font45}
\DeclareFontShape{U}{mathx}{m}{n}{
      <5> <6> <7> <8> <9> <10>
      <10.95> <12> <14.4> <17.28> <20.74> <24.88>
      mathx10
      }{}
\DeclareSymbolFont{mathx}{U}{mathx}{m}{n}
\DeclareMathSymbol{\bigtimes}{1}{mathx}{"91}

\providecommand{\figures}{false}
{ \ifthenelse{\equal{\figures}{false}} {#1}{\[ {\rm Figure \ missing !} \]} }{}

\def\graft{\rightarrow}

\def\CE{\mathcal{E}}

\tikzstyle{tinydots}=[dash pattern=on \pgflinewidth off \pgflinewidth]
\tikzstyle{superdense}=[dash pattern=on 4pt off 1pt]

\newcommand{\mcE}{\mathcal{E}}

\newcommand{\mcV}{\mathcal{V}}

\newcommand{\mcS}{\CS}

\newcommand{\mcG}{\mathcal{G}}

\newcommand{\beq}{\begin{equation}}
\newcommand{\eeq}{\end{equation}}

\usepackage{empheq}

\newcommand{\mfF}{\mathfrak{F}}

\newcommand{\mfn}{\mathfrak{n}}

\newcommand{\mfe}{\mathfrak{e}}

\newcommand{\mfd}{\mathfrak{d}}
\newcommand{\mfa}{\mathfrak{a}}
\newcommand{\mfc}{\mathfrak{c}}

\newcommand{\mft}{\mathfrak{t}}

\newcommand{\mfS}{\mathfrak{S}}

\newcommand{\mfb}{\mathfrak{b}}

\newcommand{\mfG}{\mathfrak{G}}
\newcommand{\mfg}{\mathfrak{g}}

\newcommand{\Lab}{\mathfrak{L}}

\def\x{\boldsymbol{x}}

\def\n{\boldsymbol{n}}

\newcommand{\p}{\boldsymbol{p}}

\DeclareSymbolFont{rmlargesymbols}{OMX}{mdbch}{m}{n}
\DeclareMathSymbol{\rmintop}{\mathop}{rmlargesymbols}{82}

\newcommand{\D}{\partial}
\newcommand{\dint}{\mathrm{d}}
\newcommand\Item[1][]{%
  \ifx\relax#1\relax  \item \else \item[#1] \fi
  \abovedisplayskip=0pt\abovedisplayshortskip=0pt~\vspace*{-\baselineskip}}

\newcommand{\out}{\mathrm{out}}

\newenvironment{DIFnomarkup}{}{} % see man latexdiff

\theorembodyfont{\rmfamily}

\newfont{\indic}{bbmss12}

\makeatletter
\pgfdeclareshape{crosscircle}
{
  \inheritsavedanchors[from=circle] % this is nearly a circle
  \inheritanchorborder[from=circle]
  \inheritanchor[from=circle]{north}
  \inheritanchor[from=circle]{north west}
  \inheritanchor[from=circle]{north east}
  \inheritanchor[from=circle]{center}
  \inheritanchor[from=circle]{west}
  \inheritanchor[from=circle]{east}
  \inheritanchor[from=circle]{mid}
  \inheritanchor[from=circle]{mid west}
  \inheritanchor[from=circle]{mid east}
  \inheritanchor[from=circle]{base}
  \inheritanchor[from=circle]{base west}
  \inheritanchor[from=circle]{base east}
  \inheritanchor[from=circle]{south}
  \inheritanchor[from=circle]{south west}
  \inheritanchor[from=circle]{south east}
  \inheritbackgroundpath[from=circle]
  \foregroundpath{
    \centerpoint%
    \pgf@xc=\pgf@x%
    \pgf@yc=\pgf@y%
    \pgfutil@tempdima=\radius%
    \pgfmathsetlength{\pgf@xb}{\pgfkeysvalueof{/pgf/outer xsep}}%  
    \pgfmathsetlength{\pgf@yb}{\pgfkeysvalueof{/pgf/outer ysep}}%  
    \ifdim\pgf@xb<\pgf@yb%
      \advance\pgfutil@tempdima by-\pgf@yb%
    \else%
      \advance\pgfutil@tempdima by-\pgf@xb%
    \fi%
    \pgfpathmoveto{\pgfpointadd{\pgfqpoint{\pgf@xc}{\pgf@yc}}{\pgfqpoint{-0.707107\pgfutil@tempdima}{-0.707107\pgfutil@tempdima}}}
    \pgfpathlineto{\pgfpointadd{\pgfqpoint{\pgf@xc}{\pgf@yc}}{\pgfqpoint{0.707107\pgfutil@tempdima}{0.707107\pgfutil@tempdima}}}
    \pgfpathmoveto{\pgfpointadd{\pgfqpoint{\pgf@xc}{\pgf@yc}}{\pgfqpoint{-0.707107\pgfutil@tempdima}{0.707107\pgfutil@tempdima}}}
    \pgfpathlineto{\pgfpointadd{\pgfqpoint{\pgf@xc}{\pgf@yc}}{\pgfqpoint{0.707107\pgfutil@tempdima}{-0.707107\pgfutil@tempdima}}}
  }
}
\makeatother

\def\decorate#1#2{
        \ifnum#2>0
    		\foreach \count in {1,...,#2}{
	       	let
				\p1 = (sourcenode.center),
                \p2 = (sourcenode.east),
				\n1 = {\x2-\x1},
				\n2 = {1mm},
				\n3 = {(1.3+0.6*(\count-1))*\n1},
				\n4 = {0.7*\n1}
			in 
        		node[rectangle,fill=symbols,rotate=30,inner sep=0pt,minimum width=0.2*\n2,minimum height=\n2] at ($(sourcenode.center) + (\n3,\n4)$) {}
				}
		\fi
        \ifnum#1>0
    		\foreach \count in {1,...,#1}{
	       	let
				\p1 = (sourcenode.center),
                \p2 = (sourcenode.east),
				\n1 = {\x2-\x1},
				\n2 = {1mm},
				\n3 = {(1.3+0.6*(\count-1))*\n1},
				\n4 = {0.7*\n1}
			in 
        		node[rectangle,fill=symbols,rotate=-30,inner sep=0pt,minimum width=0.2*\n2,minimum height=\n2] at ($(sourcenode.center) + (-\n3,\n4)$) {}
				}
		\fi
}

\tikzset{
    dectriangle/.style 2 args={
        triangle,
        alias=sourcenode,
        append after command={\decorate{#1}{#2}}
    },
    dectriangle/.default={0}{0},
}

\tikzset{
	dot/.style={circle,fill=black,inner sep=0pt, minimum size=1mm},
	graydot/.style={circle,draw=gray,inner sep=0pt, minimum size=1mm},
	loopnode/.style={circle,draw=black,inner sep=0pt, minimum size=1.5mm},
	treenode/.style={circle,fill=black,inner sep=0pt, minimum size=1.5mm},
	charge/.style={circle,draw=black,inner sep=0pt, minimum size=3mm},
	loopline/.style={->,semithick,shorten >=1pt,shorten <=1pt},
	treeline/.style={semithick, densely dashed, shorten >=1pt,shorten <=1pt},
	homoge/.style={font=\scriptsize,draw=black,fill=white,inner sep=2pt},
	homo/.style={font=\scriptsize},
	KK/.style={thick,shorten >=1pt,shorten <=1pt, densely dashed, >=latex},
kerAlg/.style={thick},
kerAlg2/.style={semithick,dashed},
kerP/.style={very thick},
arrow/.style={->,thick,shorten >=3pt,shorten <=3pt},
renorm/.style={ultra thick},
	kernel/.style={semithick,shorten >=1pt,shorten <=1pt},
	}

\tikzset{
	cross/.style={path picture={ 
  		\draw[symbols]
			(path picture bounding box.south east) -- (path picture bounding box.north west) (path picture bounding box.south west) -- (path picture bounding box.north east);
		}},
root/.style={circle,fill=green!50!black,inner sep=0pt, minimum size=1.2mm},
        dot/.style={circle,fill=pageforeground,inner sep=0pt, minimum size=1mm},
        dotred/.style={circle,fill=pageforeground!50!pagebackground,inner sep=0pt, minimum size=2mm},
        var/.style={circle,fill=pageforeground!10!pagebackground,draw=pageforeground,inner sep=0pt, minimum size=3mm},
        kernel/.style={semithick,shorten >=2pt,shorten <=2pt},
        kernels/.style={snake=zigzag,shorten >=2pt,shorten <=2pt,segment amplitude=1pt,segment length=4pt,line before snake=2pt,line after snake=5pt,},
        rho/.style={densely dashed,semithick,shorten >=2pt,shorten <=2pt},
           testfcn/.style={dotted,semithick,shorten >=2pt,shorten <=2pt},
        renorm/.style={shape=circle,fill=pagebackground,inner sep=1pt},
        labl/.style={shape=rectangle,fill=pagebackground,inner sep=1pt},
        xic/.style={very thin,circle,draw=symbols,fill=symbols,inner sep=0pt,minimum size=1.2mm},
        g/.style={very thin,rectangle,draw=symbols,fill=symbols!10!pagebackground,inner sep=0pt,minimum width=2.5mm,minimum height=1.2mm},
        xi/.style={very thin,circle,draw=symbols,fill=symbols!10!pagebackground,inner sep=0pt,minimum size=1.2mm},
	xies/.style={very thin,rectangle,fill=green!50!black!25,draw=symbols,inner sep=0pt,minimum size=1.1mm},
	xiesf/.style={very thin,rectangle,fill=green!50!black,draw=symbols,inner sep=0pt,minimum size=1.1mm},
        xix/.style={very thin,crosscircle,fill=symbols!10!pagebackground,draw=symbols,inner sep=0pt,minimum size=1.2mm},
        X/.style={very thin,cross,rectangle,fill=pagebackground,draw=symbols,inner sep=0pt,minimum size=1.2mm},
	xib/.style={thin,circle,fill=symbols!10!pagebackground,draw=symbols,inner sep=0pt,minimum size=1.6mm},
	xie/.style={thin,circle,fill=green!50!black,draw=symbols,inner sep=0pt,minimum size=1.6mm},
	xid/.style={thin,circle,fill=symbols,draw=symbols,inner sep=0pt,minimum size=1.6mm},
	xibx/.style={thin,crosscircle,fill=symbols!10!pagebackground,draw=symbols,inner sep=0pt,minimum size=1.6mm},
	kernels2/.style={very thick,draw=connection,segment length=12pt},
	keps/.style={thin,draw=symbols,->},
	kepspr/.style={thick,draw=connection,->},
	krho/.style={thin,draw=symbols,superdense,->},
	krhopr/.style={thick,draw=connection,superdense},
	triangle/.style = { regular polygon, regular polygon sides=3},
	not/.style={thin,circle,draw=connection,fill=connection,inner sep=0pt,minimum size=0.5mm},
	diff/.style = {very thin,draw=symbols,triangle,fill=red!50!black,inner sep=0pt,minimum size=1.6mm},
	diff1/.style = {very thin,dectriangle={1}{0},fill=red!50!black,draw=symbols,inner sep=0pt,minimum size=1.6mm},
	diff2/.style = {very thin,dectriangle={1}{1},fill=red!50!black,draw=symbols,inner sep=0pt,minimum size=1.6mm},
		diffmini/.style = {very thin,rectangle,fill=black,draw=black,inner sep=0pt,minimum size=0.75mm},
	 kernelsmod/.style={very thick,draw=connection,segment length=12pt},
	 rec/.style = {very thin,rectangle,fill=black,draw=black,inner sep=0pt,minimum size=2mm},
	cerc/.style={very thin,circle,draw=black,fill=symbols,inner sep=0pt,minimum size=2mm},
	stars/.style={very thin,star,star points=6,star point ratio=0.5, draw=black,fill=red,inner sep=0pt,minimum size=0.7mm},
	>=stealth,
        }
        \tikzset{
root/.style={circle,fill=black!50,inner sep=0pt, minimum size=3mm},
        circ/.style={circle,fill=white,draw=black,very thin,inner sep=.5pt, minimum size=1.2mm},
        round1/.style={fill=white,outer sep = 0,inner sep=2pt,rounded corners=1mm,draw,text=black,thin,minimum size=1.2mm},
          circ1/.style={circle,fill=red!10,draw=red,very thin,inner sep=.5pt, minimum size=1.2mm},
        rect/.style={fill=white,outer sep = 0,inner sep=2pt,rectangle,draw,text=black,thin,minimum size=1.2mm},
        rect1/.style={fill=white,outer sep = 0,inner sep=2pt,rectangle,draw,text=black,thin,minimum size=1.2mm},
        round2/.style={fill=red!10,outer sep = 0,inner sep=2pt,rounded corners=1mm,draw,text=black,thin,minimum size=1.2mm},
       round3/.style={fill=blue!10,outer sep = 0,inner sep=2pt,rounded corners=1mm,draw,text=black,thin,minimum size=1.2mm}, 
        rect2/.style={fill=black!10,outer sep = 0,inner sep=2pt,rectangle,draw,text=black,thin,minimum size=1.2mm},
        dot/.style={circle,fill=black,inner sep=0pt, minimum size=1.2mm},
        dotred/.style={circle,fill=black!50,inner sep=0pt, minimum size=2mm},
        var/.style={circle,fill=black!10,draw=black,inner sep=0pt, minimum size=3mm},
        kernel/.style={semithick,shorten >=2pt,shorten <=2pt},
         diag/.style={thin,shorten >=4pt,shorten <=4pt},
        kernel1/.style={thick},
        arrow/.style={->,thick},
        kernels/.style={snake=zigzag,shorten >=2pt,shorten <=2pt,segment amplitude=1pt,segment length=4pt,line before snake=2pt,line after snake=5pt,},
		kernels1/.style={snake=zigzag,segment amplitude=0.5pt,segment length=2pt},
		rho1/.style={densely dotted,semithick},
        rho/.style={densely dashed,semithick,shorten >=2pt,shorten <=2pt},
           testfcn/.style={dotted,semithick,shorten >=2pt,shorten <=2pt},
           visible/.style={draw, circle, fill, inner sep=0.25ex},
        renorm/.style={shape=circle,fill=white,inner sep=1pt},
        labl/.style={shape=rectangle,fill=white,inner sep=1pt},
        xic/.style={very thin,circle,fill=symbols,draw=black,inner sep=0pt,minimum size=1.2mm},
        xi/.style={very thin,circle,fill=blue!10,draw=black,inner sep=0pt,minimum size=1.2mm},
	xib/.style={very thin,circle,fill=blue!10,draw=black,inner sep=0pt,minimum size=1.6mm},
	xie/.style={very thin,circle,fill=green!50!black,draw=black,inner sep=0pt,minimum size=1mm},
	xid/.style={very thin,circle,fill=symbols,draw=black,inner sep=0pt,minimum size=1.6mm},
	edgetype/.style={very thin,circle,draw=black,inner sep=0pt,minimum size=5mm},
	nodetype/.style={very thick,circle,draw=black,inner sep=0pt,minimum size=5mm},
	kernels2/.style={very thick,draw=connection,segment length=12pt},
clean/.style={thin,circle,fill=black,inner sep=0pt,minimum size=1mm},	not/.style={thin,circle,fill=symbols,draw=connection,fill=connection,inner sep=0pt,minimum size=0.8mm},
	>=stealth,
        }

\def\one{\mathbf{1}}
\def\eps{\varepsilon}
\newcommand{\snnorm}[1]{{\big\vert\kern-0.25ex\big\vert\kern-0.25ex\big\vert #1 
    \big\vert\kern-0.25ex\big\vert\kern-0.25ex\big\vert}}
\newcommand{\nnorm}[1]{{\Big\vert\kern-0.25ex\Big\vert\kern-0.25ex\Big\vert #1 
    \Big\vert\kern-0.25ex\Big\vert\kern-0.25ex\Big\vert}}
\newcommand{\bnnorm}[1]{{\Biggl\vert\kern-0.25ex\Biggl\vert\kern-0.25ex\Biggl\vert #1 
    \Biggl\vert\kern-0.25ex\Biggl\vert\kern-0.25ex\Biggl\vert}}
\DeclareMathAlphabet{\mathpzc}{OT1}{pzc}{m}{it}

\let\eps\varepsilon

\def\eqref#1{(\ref{#1})}

\makeatletter % Stolen from the internet to make a fat \cdot which isn't as fat as a \bullet
\newcommand*{\bigcdot}{}% Check if undefined
\DeclareRobustCommand*{\bigcdot}{%
  \mathbin{\mathpalette\bigcdot@{}}%
}
\newcommand*{\bigcdot@scalefactor}{.5}
\newcommand*{\bigcdot@widthfactor}{1.15}
\newcommand*{\bigcdot@}[2]{%
  \sbox0{$#1\vcenter{}$}% math axis
  \sbox2{$#1\cdot\m@th$}%
  \hbox to \bigcdot@widthfactor\wd2{%
    \hfil
    \raise\ht0\hbox{%
      \scalebox{\bigcdot@scalefactor}{%
        \lower\ht0\hbox{$#1\bullet\m@th$}%
      }%
    }%
    \hfil
  }%
}
\makeatother

\tcbset
{colframe=boxcolor,colback=symbols!7!pagebackground,coltext=pageforeground,
fonttitle=\bfseries,nobeforeafter,center title,size=fbox,boxsep=1.5pt,
top=0mm,bottom=0mm,boxsep=0mm,tcbox raise base}

\def\two{{\<generic>\kern0.05em\<genericb>}}
\def\twoI{{\<Ito>\kern0.05em\<Itob>}}

\def\mail#1{\burlalt{#1}{mailto:#1}}

\usepackage{thmtools}

\usepackage[margin=6em]{geometry}

\renewcommand{\leq}{\leqslant}
\renewcommand{\geq}{\geqslant}

\begin{document}

\title{On the equivalence between the Polchinski flow and the Connes-Kreimer approaches to perturbative renormalisation}

\author{Y. Bruned, P. Laubie and A. Minguella}
\institute{ 
Universite de Lorraine, CNRS, IECL, F-54000 Nancy, France
  \\
Email:\ \begin{minipage}[t]{\linewidth}
\mail{yvain.bruned@univ-lorraine.fr}
\\
\mail{paul.laubie@univ-lorraine.fr}
\\
\mail{aurelien.minguella@univ-lorraine.fr}
\end{minipage}}

\maketitle 

\begin{abstract}
	We prove a correspondence between the Polchinski flow and the Connes-Kreimer approaches to perturbative renormalisation, in the sense that the first yields the same renormalisation as the latter. More precisely, we show that an ansatz based on decorated graphs (Feynman diagrams), with a combinatorial renormalisation procedure solves the Polchinski equation. This result holds for very general Euclidean quantum field theories. We are able to derive from this the form of the renormalised potential which is, to the best of our knowledge, the most general one in the literature. The main ingredients of the proof are multiple morphism properties with respect to the renormalisation, as well as a novel duality formula for forests of decorated graphs.
\end{abstract}
\setcounter{tocdepth}{2}
\setcounter{secnumdepth}{4}
\tableofcontents
%\newpage

\section{Introduction}

We aim in the present work to unify two different approaches to perturbative renormalisation of Euclidean quantum field theories (QFT), namely the Polchinski flow approach introduced in the 80's \cite{P84}, that takes its inspiration from Wilsonian renormalisation groups ideas \cite{Wil71}, and the Connes-Kreimer approach \cite{CK1,CK2} that has an algebraic and combinatorial flavour. Even though the link between these techniques has been hinted by many physicists, there seems to be a gap in the literature that we rigorously close in the present work.
 In his book, Costello, while reviewing the various approaches to perturbative QFT makes the following remark \cite[Sec. 11.5]{Costello} page 28,
\begin{quote}
	``{\it{{Let me finally mention an approach to perturbative renormalization developed initially by Connes and Kreimer. \textnormal{[...]} In this book, however, counterterms have no intrinsic importance: they are simply a technical tools used to prove the main results. Thus, it is not clear to me if there is any relationship between Connes-Kreimer Hopf algebra and the results of this book}.}}"
\end{quote}
This book follows the Lagrangian formalism and uses renormalisation techniques similar to the Polchinski flow (the reader may for example compare \cite[Figure 3]{Costello} therein to some of the graphical examples of this paper). It is thus our belief that this link might not been known in every community, giving further interest to close this gap. With our main results, we provide a partial answer to Costello's interrogation. Let us provide some mathematical framework. We say that a QFT is the formal Gibbs measure that reads
\begin{equs}
  \dint \mu(\phi)=\frac{1}{Z}\exp(-S[\phi])\,\dint\phi\,,
\end{equs}
where $\dint \phi$ is the (non-existing) Lebesgue measure on the vector space of functions $\phi\colon\R^d\to\R^N$, and $Z$ is a renormalisation (of course, in the sense of Probability theory and not in the sense of QFT) constant that makes $\mu$ a probability measure. $S$ is an action of the form
\begin{equs}
  S[\phi]=\int_{\T^d}\frac{1}{2}\phi(m^2-\Delta)\phi+V[\phi]\,\dint x\,.
\end{equs}
$V[\phi](x)=V\big(\phi(x),\nabla\phi(x)\big)$ is a functional of $\phi$ that involves $\phi$ and its first order derivatives. Note that we chose to set the integral on the $d$-dimensional torus, which is simply a trick to not mention infinite volume problems. We provide several examples of such theories to motivate the generality of our work. We start with one of the most common theories, namely the $\phi^4$ theory. It is a scalar theory ($N=1$), whose potential takes the form
\begin{equs}
  V[\phi](x)=\alpha\phi(x)^4\,,
\end{equs}
with $\alpha$ a coupling constant. It will be the basis of most of our examples along the paper. This model finds a generalisation with the $P(\phi)$ model, with
\begin{equs}\label{eq:pphi}
  V[\phi](x)=\sum_{i=1}^{2k}\alpha_i\phi(x)^i\,,
\end{equs}
with leading coefficent $\alpha_{2k}>0$. Our framework also accomodates vector-valued theories. A common one, whose potential takes a praticularly clean form is the quartic $O(N)$ model, given by
\begin{equs}
  V[\phi](x)=\frac{1}{N}\sum_{i,j=1}^N\phi_i(x)^2\phi_j(x)^2\,.
\end{equs}
Our final example is the celebrated Yang-Mills theory, which has been the object of intensive research. Consider a Lie group $G$ with Lie algebra $\mfg$, as well as $E$, a finite-dimensional real representation of $G$. Let two functions $A$ which is $\mfg$-valued, and $\phi$ which is $E$-valued. The Yang-Mills action reads
\begin{equs}
  S[A,\phi]=\int_{\T^d} -\frac{1}{2}A\cdot\Delta_A A+\frac{1}{2}\phi\cdot(m^2-\Delta)\phi+V[A,\phi]\,\dint x\,,
\end{equs}
where $(\Delta_AA)_i=\Delta A_i+\D_i\D_jA_j$, and
\begin{equs}
  V[A,\phi]=\langle\D_iA_j-\D_jA_i,[A_i,A_j]\rangle_\mfg+\frac{1}{2}|[A_i,A_j]|_\mfg^2+2\langle\D_i\phi,A_i\phi\rangle_E+|A_i\phi|_V+\frac{1}{2}|\phi|_E^4\,.
\end{equs}
Our work also takes into account much more general theories, with an action given by
\begin{equs}
  S[\phi]=\int_{\T^d}-\phi P(\Delta)\phi+V\big(\phi,\nabla\phi,\dots,\nabla^q\phi\big)\,\dint x\,,
\end{equs}
where $P$ is a polynomial with non-negative coefficients of order $p\in\N$ and $q<2p$. However, when $q\geq2$ the theory looses some of its essential physical properties, even if it can be the object of a rigorous mathematical analysis.

It is a famous problem that such measures do not make sense as so. The first problem is the non-existence of the formal Lebesgue measure. The common way out of this is to change the reference measure. More precisely, one can rigorously construct the quadratic part of the action as an infinite-dimensional Gaussian measure, called the massive Gaussian free field, which formally reads
\begin{equs}
  \dint\gamma_{\scriptscriptstyle\mathrm{GFF}}(\phi)\propto\exp\Bigg(-\frac{1}{2}\int_{\T^d}\phi(m^2-\Delta)\phi\,\dint x\Bigg)\,\dint\phi\,.
\end{equs}
However, this measure only gives weight to distributions of negative regularity, which makes the non-linear terms in the potentials very ill-defined. To construct these theories, one has to carry out a renormalisation procedure, which can be realised with a very broad range of techniques. For the $\phi^4$ theory, one has to define a renormalised potential that depends on a regularisation parameter $\eps$
\begin{equs}
  V_0[\phi]=\int_\Lambda\big(\alpha\phi(x)^4+\mfa_\eps\phi(x)^2+\mfb_\eps\big)\,\dint x\,,
\end{equs}
where $\mfa_\eps$ and $\mfb_\eps$ are well-chosen diverging constants. Only with this procedure it is possible to obtain a non-trivial limit, which is then the definition of the associated theory. Nonetheless, in this paper, we will not talk about the rigorous construction of such measures, but what physicists call perturbative renormalisation, \ie the construction of the renormalised potential as a (generally divergent) formal series, but for which we have meaningful estimates on all the terms. We give more details about this in the coming lines. 

We present the first method tackled in this paper, the Polchinski flow. One defines a renormalised potential that depends on a scale parameter $\lambda\in[0,+\infty]$, going from small scales to large scales, 
\begin{equs}
  V_\lambda[\phi]=-\log\E_{G-G_\lambda}\big[e^{-V_0[\phi+\zeta]}\big]\,.
\end{equs}
One can then show that $V_\lambda$ satisfies the Polchinski equation
\begin{equs}\label{eq:poleqintro}
  \D_\lambda V_\lambda+\frac{1}{2}\mathrm{Tr}(\dot G_\lambda \mathrm{D}^2 V_\lambda)+\frac{1}{2}\mathrm{D} V_\lambda \dot G_\lambda \mathrm{D} V_\lambda=0\,.
\end{equs}
where $D$ is the Fréchet derivative. More precisely, $\dot G_\lambda$ is seen as a function of two variables, and in each of the two terms, the Fréchet derivatives are applied to each of these two variables. Solving this equation requires to set an ansatz, \ie guessing the form of the solution as a formal series and obtain a hierarchical system of equations for the terms of the series. The renormalisation can then be implemented inductively when enforcing a boundary condition (generally called the BPHZ condition) at the boundary $\lambda=\infty$. For the $\phi^4$ theory, this ansatz can be written under the form
\begin{equs}
  V_\lambda[\phi]=\sum_{i=0}^\infty\sum_{m=0}^{3i}\alpha^i \int_{(\T^d)^m}V_\lambda^{i,m}(x_1,\dots,x_m)\phi(x_1)\dots\phi(x_m)\,\dint x_1\dots \dint x_m\,.
\end{equs}
In that case, the simplicity of the potential allows a simple combinatorial description of the terms (here a couple). For this example, it is known that the series actually diverges. We show how this method is implemented on the $\phi^4$ theory in Section \ref{sec:analytic}. 

The Connes-Kreimer approach, also sometimes referred to as the BPHZ approach, after the seminal series of papers \cite{BP57,KH69,WZ69} takes antoher point of view and bypasses the inductive aspect. Instead, one directly sets an ansatz
\begin{equs}
  V_\lambda[\phi]=\sum_{\Gamma\in F^\star}\beta(\Gamma)(\hat\Pi_\lambda\Gamma)[\phi]\,,
\end{equs}
where the sum runs on a set of decorated graphs, called Feynman diagrams. These diagrams are interated integrals that involve a generally large number of propagators, which are, in a more mathematical language, the Green functions of the linear operator $m^2-\Delta$, or another operator depending on the theory. $\beta(\Gamma)$ is a real factor that we will make explicit later. For a given Feynman diagram $\Gamma$, $\hat{\Pi}_\lambda \Gamma$ is a renormalised amplitude that depends on the scale parameter $\lambda$. The renormalisation takes the form
\begin{equs}
  \hat\Pi_\lambda\Gamma=(g_{\scriptscriptstyle\mathrm{BPHZ}}\otimes\Pi_\lambda)\Delta\Gamma\,.
\end{equs}
where $ \Delta $ is a Connes-Kreimer type coproduct that extracts subdivergent diagrams that are vacuum diagrams, \ie Feynman diagrams without legs. The map $\Pi_\lambda$ gives the amplitude of the Feynman diagrams without renormalisation. The map $g_{\scriptscriptstyle\mathrm{BPHZ}}$ is a character associating to each vaccum diagram a renormalisation constant. The BPHZ choice of renormalisation is such that
\begin{equs}\label{eq:bphzchoice}
  \hat{\Pi}_{\infty} (\Gamma, \mathfrak{n})=0\,,
\end{equs}
for every divergent vacuum diagram $(\Gamma,\mfn)$ with vertex decoration $\mathfrak{n}\colon\mathcal{\mathcal{V_\star}} \rightarrow \N^{d} $. Here, $\mathcal{V}_\star$ is the set of internal vertices of $ \Gamma $ and this decoration corresponds to monomials. By divergent, we mean that the integral that $(\Gamma,\mfn)$ represents is a diverging quantity. This is given at the combinatorial level by the non-positivity of a notion of degree of this graph. The celebrated BPHZ theorem states that $\hat\Pi_\infty\Gamma$ is a well-defined distribution for every $\Gamma$. See for instance Theorem 3.1 in \cite{BPHZ_theorem}.

\subsection{Main results}

We are now ready to state the main results of this paper.

\begin{theorem}\label{maintheorem}
The ansatz
  \begin{equation}
    V_\lambda[\phi]=\sum_{\Gamma\in F^\star}\frac{\alpha(\Gamma)\Gamma!}{S(\Gamma)}(\hat\Pi_\lambda\Gamma)[\phi]\,,
  \end{equation}
  with a BPHZ renormalisation procedure formally solves the Polchinski equation \eqref{eq:poleqintro} with BPHZ boundary conditions at $\lambda=\infty$.
\end{theorem}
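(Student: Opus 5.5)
We plan to substitute the ansatz into \eqref{eq:poleqintro} and reduce the claim to three identities at the level of decorated graphs, compared coefficient by coefficient. For each graph $\Gamma$ we write $\Pi_\lambda\Gamma$ as an iterated integral of propagators $G-G_\lambda$ over the internal edges. The legs are evaluated against $\phi$, together with the derivatives and monomial decorations $\mfn$.

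\textbf{Step 1: the unrenormalised flow.} We first work with $\Pi_\lambda$ in place of $\hat\Pi_\lambda$. Differentiating in $\lambda$ by the Leibniz rule gives $\D_\lambda\Pi_\lambda\Gamma=-\sum_{e}\Pi_\lambda^{(e,\dot G_\lambda)}\Gamma$, where one internal edge $e$ carries $\dot G_\lambda$ instead of $G-G_\lambda$.

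We then split the edges into two classes:
\begin{itemize}
\item if $e$ is a bridge, cutting it disconnects $\Gamma$ into $\Gamma_1\sqcup\Gamma_2$, and these terms should match $\frac12\mathrm{D}V_\lambda\dot G_\lambda\mathrm{D}V_\lambda$;
\item otherwise, cutting $e$ leaves a connected graph $\Gamma'$ with two extra legs, and these terms should match $\frac12\mathrm{Tr}(\dot G_\lambda\mathrm{D}^2V_\lambda)$.
\end{itemize}
In both cases, the Fréchet derivative of $\Pi_\lambda\Gamma[\phi]$ creates a leg with its derivative decoration, and contracting two legs against $\dot G_\lambda$ recreates an edge.

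The combinatorial weights must close. We will check that $\alpha$ is multiplicative over vertices, that $\Gamma!$ accounts for the labelled legs, and that the symmetry factors satisfy the usual orbit–stabiliser counting. The number of ways to glue two legs into the edge $e$, divided by $S(\Gamma_1)S(\Gamma_2)$ (resp.\ $S(\Gamma')$), should equal the number of edges equivalent to $e$ under $\mathrm{Aut}(\Gamma)$, divided by $S(\Gamma)$. The factor $\frac12$ absorbs the orientation of the glued edge. This step is classical, and we expect only careful bookkeeping.

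\textbf{Step 2: transfer to the renormalised amplitudes.} We write $\hat\Pi_\lambda=(g_{\scriptscriptstyle\mathrm{BPHZ}}\otimes\Pi_\lambda)\Delta$. The character $g_{\scriptscriptstyle\mathrm{BPHZ}}$ is $\lambda$-independent, so $\D_\lambda$ acts only on the right factor.

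We need two morphism properties of the extraction coproduct $\Delta$:
\begin{itemize}
\item $\Delta$ commutes with marking an edge by $\dot G_\lambda$ and with the cutting and gluing operations of Step 1;
\item $\Delta$ is compatible with the Fréchet derivative, i.e.\ adding a leg.
\end{itemize}
Both hold provided the extracted subgraphs are vacuum diagrams that do not contain the marked edge.

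The main obstacle is exactly this point: a forest may contain a vacuum subgraph that contains the cut edge. After cutting, such a subgraph acquires legs and is no longer extracted. We will handle this with the duality formula for forests announced in the abstract. Let $\Gamma^{(e)}$ denote $\Gamma$ with the edge $e$ marked. The sum over forests of $\Gamma^{(e)}$ in which the marked edge lies inside an extracted component should reorganise, via an antipode-type identity, into terms that vanish because of $\hat\Pi_\infty=0$ on divergent vacuum graphs. Equivalently, we will try to show $g_{\scriptscriptstyle\mathrm{BPHZ}}(\gamma)=-(g_{\scriptscriptstyle\mathrm{BPHZ}}\otimes\Pi_\infty)\tilde\Delta\gamma$. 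This identity means that extracted components carry no $\lambda$-dependence, so marked edges can be excluded from them. We would first try to prove the duality by induction on the number of edges, using coassociativity of $\Delta$.

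\textbf{Step 3: boundary condition.} At $\lambda=\infty$ the ansatz reduces to the vacuum and local parts, and \eqref{eq:bphzchoice} gives precisely the BPHZ condition. Hence the formal series solves the equation with the prescribed boundary data.
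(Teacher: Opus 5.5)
Your skeleton matches the paper's: morphism properties of $\hat\Pi_\lambda$ with respect to $\D_\lambda$, gluing and loop-closing, a purely combinatorial identity between graph sums, and a boundary argument. But the obstacle you single out in Step~2 does not exist, and the mechanism you propose for it is beside the point. As you note yourself, $g$ does not depend on $\lambda$. So differentiating $\sum g(\overline{\Gamma},\pi\ell)\,\Pi_\lambda\big(\Gamma/(\overline{\Gamma},\ell)\big)$ only marks edges outside $\overline{\Gamma}$. On the other side, the coproduct is defined never to extract an edge with $\mfd(e)=1$. Hence the admissible subgraphs of $\Gamma_1\graft_{v_1,v_2}\Gamma_2$ (resp.\ $\circlearrowleft_{v_1,v_2}\Gamma$) are exactly pairs of subgraphs of $\Gamma_1,\Gamma_2$ (resp.\ subgraphs of $\Gamma$). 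Terms with the marked edge inside an extracted component therefore occur on neither side, and Lemmas~\ref{commut}, \ref{prelie1} and \ref{prelie2} hold for an \emph{arbitrary} character $g$.

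The identity $g_{\scriptscriptstyle\mathrm{BPHZ}}(\gamma)=-(g_{\scriptscriptstyle\mathrm{BPHZ}}\otimes\Pi_\infty)\tilde\Delta\gamma$ is merely a rewriting of $\hat\Pi_\infty\gamma=0$, i.e.\ of the definition of $g_{\scriptscriptstyle\mathrm{BPHZ}}$. It says nothing about which edges may be marked. Bringing it in conflates the flow equation, which holds for every $g$, with the boundary condition, which is what selects $g_{\scriptscriptstyle\mathrm{BPHZ}}$. Likewise, the paper's duality formula is not a renormalisation statement. It is the adjointness $\langle L\mathbf{\Gamma}^1,\mathbf{\Gamma}^2\rangle=\langle\mathbf{\Gamma}^1,D\mathbf{\Gamma}^2\rangle$ of Lemma~\ref{lemma:duality}, between gluing/loop-closing on forests and cutting the marked edge. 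It uses the forest inner product with $\langle\Gamma,\Gamma\rangle=S(\Gamma)$, and it is proved by passing to labelled graphs and symmetrising.

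That duality is what carries your Step~1, which you dismiss as bookkeeping but which is the combinatorial core. The paper tests \eqref{eq:formaleq} against each $\uparrow^e\Gamma$. The left-hand side gives $\alpha(\Gamma)\Gamma!$ via the orbit--stabiliser identity $\uparrow\Gamma=\sum_i\frac{S(\Gamma)}{S(\Gamma_i)}\Gamma_i$ of Lemma~\ref{lemma:coeffuparrow}. On the right-hand side, $L$ is moved across to $D\uparrow^e\Gamma$, which only sees the cut graph. One then checks three cases separately:
\begin{itemize}
\item the 1PI case, where the two orientations cancel the $\frac12$;
\item the 1PR case with distinct components, where the two orders in the double sum cancel it;
\item the 1PR case with identical components, where $\langle\gamma\gamma,\gamma\gamma\rangle=2S(\gamma)^2$.
\end{itemize}
Throughout, one uses that $\alpha$ and $\Gamma!$ are products over $\mathrm{elem}(v)$ and hence unchanged by cutting. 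A direct orbit--stabiliser count as you sketch could replace the duality, but it has to be carried out, including the identical-components case.

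Finally, in Step~3 nothing reduces at $\lambda=\infty$: there the propagator is the full $G$, since $G_\infty=0$. It is at $\lambda=0$, where $G-G_0=0$, that only pure-leg diagrams survive, which is how Corollary~\ref{corollary:renormmeasure} is obtained. The BPHZ condition for the flow is the vanishing at $\lambda=\infty$ of the relevant local parts of coefficients with legs. To identify it with $\hat\Pi_\infty(\mathrm{vac}(\Gamma),\mfn)=0$ you need that $\hat\Pi_\lambda$ commutes with the combinatorial Taylor expansion of the legs (Lemma~\ref{lemma:renormtaylor}), which then yields Proposition~\ref{prop:gbphz}. Your proposal has no substitute for this ingredient.
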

Here, $F^\star$ is a set of Feynman diagrams given by a notion of rule that depends on the form of the potential $V$. $\alpha(\Gamma)$ is a real factor, that is a generalisation of the factors appearing in the potential. This notation matches the coefficients of the potential \eqref{eq:pphi} for clarity. $\Gamma!$ and $S(\Gamma)$ are combinatorial factors. All these notions are defined in Section \ref{sec:addnotions}. From this theorem, one gets an interesting subsequent result, namely an explicit form of the renormalised potential.

\begin{corollary}\label{corollary:renormmeasure}
  The renormalised potential is given by
  \begin{equation}\label{eq:ansatzpotential}
    V_0[\phi]=\sum_{\deg(\Gamma)+|\mfn|\leq0}\frac{\alpha(\Gamma)\Gamma!g_{\scriptscriptstyle\mathrm{BPHZ}}\big(\mathrm{vac}(\Gamma),\mfn\big)}{S(\Gamma,\mfn)}\sum_{\pi\ell=\mfn}\frac{\mfn!}{\ell!}\big(\Pi_0\mathrm{res}_\ell(\Gamma)\big)[\phi]\,.
  \end{equation}
\end{corollary}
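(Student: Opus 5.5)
The plan is to evaluate the ansatz of Theorem~\ref{maintheorem} at $\lambda=0$, where the covariance $G-G_0$ of the fluctuation field vanishes. We then read off which terms survive once the coproduct $\Delta$ has been unfolded. We start from
\begin{equs}
\hat\Pi_0\Gamma=(g_{\scriptscriptstyle\mathrm{BPHZ}}\otimes\Pi_0)\Delta\Gamma\,.
\end{equs}
Recall that $\Delta$ extracts forests of divergent vacuum subdiagrams and contracts each extracted component to a vertex. That vertex carries a polynomial decoration $\mfn$ recording the Taylor monomials produced by the contraction. At $\lambda=0$ every propagator in $\Pi_0$ is the scale-$0$ cut-off kernel $G_0$, which we take to be trivial, i.e.\ $G_0=0$ under the convention $G-G_\lambda\to G$ as $\lambda\to0$. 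Hence $\Pi_0$ of the contracted graph vanishes unless the graph has no internal propagators left. So the only surviving terms are those where the whole of $\Gamma$ with its legs removed, which we write $\mathrm{vac}(\Gamma)$, is extracted as a single vacuum block, with a decoration $\mfn$ on the resulting vertex.

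This yields a local term $g_{\scriptscriptstyle\mathrm{BPHZ}}(\mathrm{vac}(\Gamma),\mfn)$ multiplying $\Pi_0$ of a one-vertex graph with the legs of $\Gamma$ attached. That one-vertex graph is essentially a monomial in $\phi$ and its derivatives. Since $g_{\scriptscriptstyle\mathrm{BPHZ}}$ is supported on divergent graphs, only pairs with $\deg(\Gamma)+|\mfn|\leq0$ contribute, which gives the summation range in \eqref{eq:ansatzpotential}.

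Next, the monomial decoration $\mfn$ at the contracted vertex must be redistributed as derivatives onto the legs. By the Leibniz rule it splits as a sum over leg decorations $\ell$ with $\pi\ell=\mfn$, weighted by the multinomial coefficients $\mfn!/\ell!$. This is exactly $\sum_{\pi\ell=\mfn}\frac{\mfn!}{\ell!}\Pi_0\mathrm{res}_\ell(\Gamma)$, where $\mathrm{res}_\ell$ denotes the residual graph with decorated legs. The step requires checking that the definition of $\Delta$ on leg-carrying graphs places the Taylor polynomial at the contracted vertex in this form, via the usual formula for $\Delta$ on decorated graphs.

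Finally we reorganise $\sum_\Gamma\frac{\alpha(\Gamma)\Gamma!}{S(\Gamma)}$ together with the sum over $\mfn$. We identify the symmetry factor $S(\Gamma)$, combined with the counting of decorations, with $S(\Gamma,\mfn)$. The expected main obstacle is this combinatorial bookkeeping. One must ensure that summing over decorated extractions and over the orbits of the automorphism group of $\Gamma$ acting on $\mfn$ produces exactly $1/S(\Gamma,\mfn)$, without stray factors from $\mfn!$. The plan is to use orbit--stabiliser, $|\mathrm{Aut}(\Gamma)|=|\text{orbit of }\mfn|\cdot|\mathrm{Aut}(\Gamma,\mfn)|$, so that summing over equivalence classes of $\mfn$ converts $S(\Gamma)$ into $S(\Gamma,\mfn)$.
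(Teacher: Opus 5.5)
Your overall route (set $\lambda=0$ and observe that only the term of $\Delta\Gamma$ with $\overline{\Gamma}=\mathrm{vac}(\Gamma)$ survives) is the paper's, but two of your justifications are off.

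First, the vanishing of $\Pi_0$ on graphs with internal edges comes from $G_0=G$, so each propagator $G-G_0$ is identically zero. Your stated convention ($G_0=0$, $G-G_\lambda\to G$) is backwards: it would make $\Pi_0$ the full unrenormalised amplitude and kill nothing.

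Second, no Leibniz redistribution is involved. $\mfn$ is not a multi-index sitting on the contracted vertex. It is a vertex decoration of $\mathrm{vac}(\Gamma)$, the argument of $g_{\scriptscriptstyle\mathrm{BPHZ}}$, given vertex by vertex by $\mfn=\pi\ell$. Since $\D\mathrm{vac}(\Gamma)$ consists exactly of the leg half-edges and $\Gamma/(\mathrm{vac}(\Gamma),\ell)=\mathrm{res}_\ell(\Gamma)$, the definition of $\Delta$ gives directly
\begin{equation*}
(\hat\Pi_0\Gamma)[\phi]=\sum_{\ell}\frac{1}{\ell!}\,g_{\scriptscriptstyle\mathrm{BPHZ}}\big(\mathrm{vac}(\Gamma),\pi\ell\big)\big(\Pi_0\mathrm{res}_\ell(\Gamma)\big)[\phi]\,.
\end{equation*}
The constraint $\deg(\mathrm{vac}(\Gamma),\pi\ell)\leq0$ is inherited from the summation range of $\Delta$.

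The genuine gap is your last step. In the paper, $S(\Gamma,\mfn)$ is defined as $S(\Gamma)\mfn!$, and the sum in the corollary runs over all maps $\mfn$, not over $\mathrm{Aut}(\Gamma)$-orbits of them. Orbit--stabiliser can only turn $1/S(\Gamma)$, summed over an orbit of $\mfn$, into $1/|\mathrm{Aut}(\Gamma,\mfn)|$, and $|\mathrm{Aut}(\Gamma,\mfn)|$ divides $S(\Gamma)$. The target $S(\Gamma)\mfn!$ is instead a multiple of $S(\Gamma)$. The two agree only when $\mfn$ is $\mathrm{Aut}(\Gamma)$-invariant and $\mfn!=1$. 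So your plan would produce the wrong normalisation and change the summation range of the statement. Moreover, the factor $\mfn!$ you want to get rid of is precisely the one that must be kept.

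The correct step is a one-line resummation. Group the $\ell$ by $\mfn=\pi\ell$ and write
\begin{equation*}
\frac{1}{S(\Gamma)\,\ell!}=\frac{1}{S(\Gamma)\,\mfn!}\cdot\frac{\mfn!}{\ell!}=\frac{1}{S(\Gamma,\mfn)}\,\frac{\mfn!}{\ell!}\,.
\end{equation*}
Then multiply by $\alpha(\Gamma)\Gamma!$ and sum over $\Gamma\in F^\star$ as in the ansatz of Theorem~\ref{maintheorem}.
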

Note that in the formula above, the non-positive diagrams also contain the elementary diagrams, that do not require a renormalisation procedure, since they do not contain any subdivergence. If one writes the unrenormalised potential under the form
\begin{equs}
  V[\phi]=\sum_{\Gamma\in E^\star}\alpha(\Gamma)(\Pi_0\Gamma)[\phi]\,,
\end{equs}
where $E^\star$ is the set of elementary diagrams, \ie the diagrams that represent the monomials in $\phi$ and its derivatives appearing in the definition of $V$. One could also rewrite,
\begin{equs}
  V_0[\phi]=V[\phi]+\sum_{\substack{\deg(\Gamma)+|\mfn|\leq0\\\Gamma\notin E^\star}}\frac{\alpha(\Gamma)\Gamma!g_{\scriptscriptstyle\mathrm{BPHZ}}\big(\mathrm{vac}(\Gamma),\mfn\big)}{S(\Gamma,\mfn)}\sum_{\pi\ell=\mfn}\frac{\mfn!}{\ell!}\big(\Pi_0\mathrm{res}_\ell(\Gamma)\big)[\phi]\,.
\end{equs}
This corollary is a generalisation of the main result of \cite{BH25}, since it allows Feynman diagrams with legs. In addition, it does not require to use multi-indices to get an expression of the renormalised potential, and gives a more precise combinatorial description (namely graphs). It is to the best of our knowledge the most general expression of the renormalised potential in the literature.

As a matter of example, the negative diagrams of the $\phi^4$ theory are the quartic diagrams with $0$ or $2$ legs. Moreover, for the subcritical theory, their degree is always greater than $-2$, and the terms involving the first order of the Taylor expansion disappear by invariance by translation. The renormalised potential thus reads
\begin{equs}
  V_0[\phi]=\int_{\T^d}\big(\alpha\phi(x)^4+\mfa_\eps\phi(x)^2+\mfb_\eps\big)\,\dint x\,,
\end{equs}
with renormalisation constants
\begin{equs}
  \mfa_\eps=\sum_i \alpha^i \mfa_\eps^i\,,\mathrm{and}\quad\mfb_\eps=\sum_i\alpha^i \mfb_\eps^i\,,
\end{equs}
where $\mfa_\eps^i$ is the sum over all the negative quartic diagrams $\Gamma$ with $i$ nodes and $2$ legs of $\Gamma!g_{\scriptscriptstyle\mathrm{BPHZ}}\big(\mathrm{vac}(\Gamma)\big)/S(\Gamma)$, and $\mfb_\eps^i$ the same quantity with $2$ legs replaced by $0$ legs.

\subsection{Literature}

The Polchinski flow approach has been the subject of intensive research in the past decades. For a more extensive overview, we recommand the monographs \cite{Kop07,M03} that provide a more complete list of references. Some of the best successes is the proof of the perturbative renormalisability of the $\phi_4^4$ and of the Yang-Mills theories. In a similar way, renormalisation group techniques also have been used to construct quantum fields non-perturbatively, for which one of the main diffuculties is to get a proper notion of convergence for the perturbative expensions. We refer the reader to the book \cite{Riv91}, that gives a complete overview of the question. We would also like to mention more recent works involving a rigorous analysis of the Polchinski flow. The papers \cite{GM24,Mey26} construct the sine-Gordon measure via a forward-backward SDE. There is also a series of works that uses the Polchinski flow to show log-Sobolev inequalities for quantum fields and systems of spins in statistical mechanics. See \cite{BB21,BBD24} and the references therein.

On the other side, the algebraic approach to perturbative renormalisation finds its roots in \cite{CK1,CK2} where  Hopf algebras were proposed to encode the BPHZ algorithm described in the series of papers \cite{BP57,KH69,WZ69}. We based most of the framework of the present paper on the proceeding \cite{BPHZ_theorem}, that gives a rigorous and complete statement and proof of the BPHZ theorem. We also refer to \cite{FMRV85}.

The Hopf algebraic approach of perturbative renormalisation of Connes and Kreimer \cite{CK1,CK2,VS2007} inspired the Hopf algebras at play in the context of singular stochastic partial differential equations (SPDEs).  Indeed, the work \cite{BHZ} lays down the algebraic foundations of the theory of regularity structures invented by Hairer \cite{reg} with two Hopf algebras in co-interaction, one for the recentering of stochastic iterated integrals and the other for the BPHZ renormalisation in the spirit of the one for Feynman diagrams. Then, \cite{CH16} proves the convergence of these renormalised iterated integrals and \cite{BCCH} provides the renormalised equation for a large class of singular SPDEs. Recently, Duch introduced a novel solution theory based on the Polchinski flow \cite{Duc21,Duc22} which, thanks to the analysis of a flow equation that is inspired from the Polchinski equation, manages to give an inductive construction of the stochastic terms, thus bypassing the tedious analysis of \cite{CH16}. The author shows that the cumulants of the iterated stochastic integrals satisfy a flow equation that is in every point similar to the one satisfied by the coefficients of the perturbative expension of the renormalised potential. Let us mention that a discrete version was investigated in previous works \cite{K16,KM17}. In \cite{CF24a}, the authors extend the approach to non-polynomial linearities by considering the generalised KPZ equation. In \cite{BM25}, one gets a general ansatz to Duch's flow approach and the equivalence with the BPHZ renormalisation defined in \cite{BHZ}. One of the goals of the present work is to repeat this equivalence directly at the level of perturbative renormalisation using the same ideas developed in the paper. One has a correspondence between the main results of the two papers. Indeed, Theorems 5.3 and 5.5 in \cite{BM25} correspond respectively to Theorem \ref{maintheorem} and Corollary \ref{corollary:renormmeasure}. Theorem 5.15 in \cite{BM25} that shows that the renormalisation used in the flow approach correspond to the BHZ character introduced in \cite{BHZ}. One has a similar statement in Proposition \ref{prop:gbphz}. 

\subsection*{Acknowledgments}

{\small
   Y.B., P.L., and A.M. acknowledge funding support from the European Research Council (ERC) through the ERC Starting Grant Low Regularity Dynamics via Decorated Trees (LoRDeT), grant agreement No.\ 101075208 is acknowledged. Views and opinions expressed are however those of the author(s) only and do not necessarily reflect those of the European Union or the European Research Council. Neither the European Union nor the granting authority can be held responsible for them. 
} 

\section{Setting}

\subsection{Euclidean quantum fields and the Polchinski flow}
Let $C$ a covariance operator on $L^2(\Lambda)$. We denote by $\gamma_C$ the mean-zero Gaussian measure with covariance $C$, and $\E_C$ its expectation. More precisely it is characterized by the fact that, for every $f,g\in L^2(\Lambda)$,
\begin{equation}\label{eq:covariance}
  \E_C\big[\langle\zeta,f\rangle_{L^2(\Lambda)}\langle\zeta,g\rangle_{L^2(\Lambda)}\big]=\langle f,Cg\rangle_{L^2(\Lambda)}\,,
\end{equation}
where the expectation is on $\zeta$. We have let for conciseness $\Lambda=\T^d$. We will consider for the rest of this work a propagator $G(x,y)=G(x-y)$ where $G$ satisfies for every $k\in\N^d$ and $x\in\Lambda$,
\begin{equation}
  |\D^k G(x)|\lesssim|x|^{\sigma-d-|k|}\,.
\end{equation}
This can be realised for instance as the Green function of the fractional operator $(m^2-\Delta)^{\sigma/2}$. We then would like to consider as a reference Gaussian fields whose covariance is given by the kernel operator of kernel $G$. It is a common fact that such a random variable does not belong to $L^2(\Lambda)$, but rather to a space of distributions because of the divergence of $G$ in $0$. In fact, the expression \eqref{eq:covariance} can be made sense of by putting $\mcS'(\Lambda)\times\mcS(\Lambda)$ duality brackets on the left-hand side. To make everything rigorous, we need to introduce a mollification. Let $\rho$ a smooth function compactly supported on the unit ball. We define $\rho_\eps(x)=\eps^{-d}\rho(\eps^{-1}x)$. We may then consider a mollification of the (fractional) GFF by taking its push-forward by the convolution with $\rho_\eps$. This yields a mollified covariance kernel $G^\eps$ given by
\begin{equation}
  G^\eps(x-y)=\iint_{\Lambda^2} \rho_\eps(x-z)G(z-w)\rho_\eps(w-y)\,\dint z\,\dint w\,.
\end{equation}
We will drop the dependence in $\eps$ in the rest of the paper for conciseness. With this at hand, we need to introduce a covariance decomposition 
\begin{equation}
  G_\lambda=\int_\lambda^\infty \dot G_s\,\dint s\,,
\end{equation}
and such that $G=G_0$. Such a choice of decomposition is not unique, and many choices can be found in the literature. Some of them are, for the classical massive Laplace operator,
\begin{itemize}
  \item Heat kernel decomposition
    \begin{equation}
      G_\lambda(x)=\frac{1}{(4\pi)^{d/2}}\int_{\lambda}^\infty s^{-d/2}e^{-m^2s}e^{-\frac{|x|^2}{4s}}\dint s\,,
    \end{equation}
    %which reads in Fourier
    %\begin{equation}
    %  G_{t,T}(x)=\frac{1}{(2\pi)^d}\sum_{k\in\Z^d}\frac{e^{\i k\cdot x}}{|k|^2+m^2}\left(e^{-\frac{|k|^2+m^2}{\lambda^2}}-e^{-\frac{|k|^2+m^2}{\lambda^2}}\right)\,.
    %\end{equation}
  %The parameter $\lambda$ here corresponds to the small scale cutoff $\eps$.
  \item Pauli-Villars decomposition
    \begin{equation}
      G_\lambda(x)=(-\Delta+m^2)^{-1}-(-\Delta+m^2+1/\lambda)^{-1}\,.
    \end{equation}
  \item Compactly supported decomposition
    \begin{equation}
      G_\lambda(x)=\chi(|x|^2/\lambda)G(x)\,.
    \end{equation}
  where $\chi$ is a smooth function with good support properties, see \eqref{eq:kerneldecomposition}.
\end{itemize}
Note that we will carry out the analysis in physical space, whereas physicists generally use the Fourier space. Let $V_0$ a (renormalised) potential. Our goal is to study the perturbative renormalisability of the QFT given by the measure $\mu$ defined by
\begin{equation}
  \mathbb{E}_{\mu}[F]\propto\E_{G}\big[e^{-V_0[\zeta]}F(\zeta)\big]\,,
\end{equation}
for every bounded functional $F$.

\begin{definition}
   Let $\phi\colon\Lambda\to\R$, we define the renormalised potential by
  \begin{equation}
    V_\lambda[\phi]=-\log\E_{G-G_\lambda}\big[e^{-V_0[\phi+\zeta]}\big]\,,
  \end{equation}
\end{definition}
From this definition, one can show that this potential satifies an infinite-dimensional PDE. A proof of this statement can be found in \cite[Proposition 3.4]{BBD24}.
\begin{proposition}
The potential  $V_\lambda$ satisfies the Polchinski equation
  \begin{equation}\label{eq:poleq}
    \D_\lambda V_\lambda+\frac{1}{2}\mathrm{Tr}(\dot G_\lambda \mathrm{D}^2 V_\lambda)+\frac{1}{2}\mathrm{D} V_\lambda \dot G_\lambda \mathrm{D} V_\lambda=0\,.
  \end{equation}
\end{proposition}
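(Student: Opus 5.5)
The plan is to prove \eqref{eq:poleq} by the classical Hopf--Cole argument, following the route of \cite[Proposition 3.4]{BBD24}. At the level of rigour relevant here, the mollified propagator $G^\eps$ is smooth and $V_0$ is a polynomial functional of $\phi$ and finitely many derivatives. All manipulations are first done on a finite-dimensional projection of the field (Fourier modes $|k|\leq K$ on $\Lambda=\T^d$), and one lets $K\to\infty$ at the end. Throughout, $V_0$ is treated as a formal power series in the couplings, so that $\E[e^{-V_0}]$, its logarithm and their derivatives are understood coefficientwise and exchanging derivatives with expectations is legitimate.

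\textbf{Step 1: the heat equation for the Gaussian convolution.} Set $C_\lambda=G-G_\lambda=\int_0^\lambda\dot G_s\,\dint s$, so that $C_0=0$ and $\D_\lambda C_\lambda=\dot G_\lambda$. Define $Z_\lambda[\phi]=\E_{C_\lambda}\big[e^{-V_0[\phi+\zeta]}\big]$, so that $V_\lambda=-\log Z_\lambda$. For a Gaussian measure whose covariance depends smoothly on a parameter, differentiating the Gaussian density in the covariance (equivalently, applying Gaussian integration by parts twice to \eqref{eq:covariance}) gives
\begin{equation*}
\D_\lambda Z_\lambda=\tfrac12\mathrm{Tr}\big(\dot G_\lambda\,\mathrm{D}^2 Z_\lambda\bigr)\,.
\end{equation*}
Here $\mathrm{Tr}(\dot G_\lambda\mathrm{D}^2F)=\iint\dot G_\lambda(x,y)\,\frac{\delta^2F}{\delta\phi(x)\delta\phi(y)}\,\dint x\,\dint y$, which is finite because $G^\eps$ is smooth. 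This step is checked in the finite-dimensional projection, where it is the usual heat equation with diffusion matrix $\dot G_\lambda$. The limit $K\to\infty$ is then taken termwise in the formal series.

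\textbf{Step 2: Hopf--Cole transform.} Substitute $Z_\lambda=e^{-V_\lambda}$. One has $\D_\lambda Z_\lambda=-\D_\lambda V_\lambda\,Z_\lambda$ and $\mathrm{D}^2Z_\lambda=\big(\mathrm{D}V_\lambda\otimes\mathrm{D}V_\lambda-\mathrm{D}^2V_\lambda\big)Z_\lambda$. Dividing by $Z_\lambda$, which is invertible as a formal series with constant term $1$, produces a first-order-in-$\lambda$ equation. That equation combines $\D_\lambda V_\lambda$, the trace term $\tfrac12\mathrm{Tr}(\dot G_\lambda\mathrm{D}^2V_\lambda)$ and the quadratic term $\tfrac12\mathrm{D}V_\lambda\dot G_\lambda\mathrm{D}V_\lambda$, with initial condition $V_0$ at $\lambda=0$ since $C_0=0$.

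\textbf{Step 3: signs, which is the main obstacle.} The step needing most care is the sign bookkeeping that must produce exactly \eqref{eq:poleq}. Here the stated equation and the naive computation disagree, so I cannot certify the statement as worded. The sign of $\tfrac12\mathrm{Tr}(\dot G_\lambda\mathrm{D}^2V_\lambda)$ is fixed by the sign of $\D_\lambda C_\lambda$ in Step 1. With the paper's decomposition $G_\lambda=\int_\lambda^\infty\dot G_s\,\dint s$, one has $\D_\lambda C_\lambda=+\dot G_\lambda$. Carrying this through Step 2 gives $\D_\lambda V_\lambda-\tfrac12\mathrm{Tr}(\dot G_\lambda\mathrm{D}^2V_\lambda)+\tfrac12\mathrm{D}V_\lambda\dot G_\lambda\mathrm{D}V_\lambda=0$, not \eqref{eq:poleq}. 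If instead one reads $\dot G_\lambda$ as $\D_\lambda G_\lambda$, both the trace and the quadratic term flip sign together, so that reading does not reproduce \eqref{eq:poleq} either. I do not see any reading of the convention that makes only the trace term flip. The plan is therefore to carry out Steps 1--2 rigorously and record the equation they actually yield. I would check \cite[Proposition 3.4]{BBD24} term by term against it, and regard the $+$ sign on the trace term in \eqref{eq:poleq} as a discrepancy that cannot be closed without altering either the equation or the definition of $V_\lambda$.
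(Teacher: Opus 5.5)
Your Hopf--Cole argument (a heat equation for $Z_\lambda=\E_{G-G_\lambda}\big[e^{-V_0[\phi+\zeta]}\big]$, followed by the substitution $V_\lambda=-\log Z_\lambda$) is correct, and it is the same proof as \cite[Proposition 3.4]{BBD24}, which the paper cites in place of giving its own. Your sign diagnosis is also right: because $V_\lambda=-\log Z_\lambda$, the trace term and the quadratic term must carry opposite signs whatever convention is used for $\dot G_\lambda$, so the displayed equation has a sign typo (it is what $-V_\lambda$ satisfies when $\dot G_\lambda=\D_\lambda G_\lambda$), while the equation you derive with $\dot G_\lambda=\D_\lambda(G-G_\lambda)$, namely $\D_\lambda V_\lambda=\tfrac12\mathrm{Tr}(\dot G_\lambda\mathrm{D}^2V_\lambda)-\tfrac12\mathrm{D}V_\lambda\dot G_\lambda\mathrm{D}V_\lambda$, is the standard form $\D_tV_t=\tfrac12\Delta_{\dot C_t}V_t-\tfrac12(\nabla V_t)^2_{\dot C_t}$ proved in that reference.
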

The definition of $V_\lambda$ seems a bit tautological at the moment since we actually want to guess the form of $V_0$ that should be $V_0=V+\mathrm{counterterms}$. We will actually show later that we will impose some boundary conditions at $\lambda=\infty$ for the Polchinski equation, namely the BPHZ choice of renormalisation \eqref{eq:bphzchoice}. In the physics literature, $V_0$ is generally given first, and it is then shown that it is the right one. This is what is called a top-down approach. Here, we do the opposite, \ie we want to find $V_0$ back, wihch is a bottom-up approach.

\subsection{The Connes-Kreimer approach to perturbative renormalisation}

The framework of this section is freely borrowed to \cite{BPHZ_theorem}, and modified for our purposes.

\begin{definition}
  \begin{enumerate}
    \item A \textit{Feynman diagram} is a finite directed connected multigraph (\ie multiple edges are authorized) $\Gamma = (\CV,\CE)$ endowed with the following additional data:
    \begin{claim}
      \item An ordered set of distinct vertices $\CL = \{[1],\ldots,[k]\} \subset \CV$ such that each vertex in $\CL$ has exactly one outgoing edge (called a ``leg'') and no incoming edge, and such that each connected component of $\Gamma$ contains at least one leg. We will frequently use the notation $\CV_\star = \CV \setminus \CL$, as well as $\CE_\star \subset \CE$ for the edges that are not legs.
      \item A decoration $\mfe\colon \CE \to \N^d\times\N^d$ of the edges of $\Gamma$, that we denote $\mfe=(\mfe_-,\mfe_+)$.
      \item A decoration $\mft\colon\CL\to\{1,\dots,N\}$.
      \item A decoration $\mfd\colon\CE_\star\to \{0,1\}$.
    \end{claim}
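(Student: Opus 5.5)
Since the statement above is a definition rather than an assertion, there is no theorem to prove. What has to be checked is that the data it lists are consistent, so that the later constructions (amplitudes $\Pi_\lambda\Gamma$, the coproduct $\Delta$, the factors $S(\Gamma)$ and $\Gamma!$) are well defined on these objects. I will check this in three steps.

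\emph{Step 1: non-vacuity and compatibility of the conditions on $\CL$.} Each leg vertex in $\CL$ has exactly one outgoing edge and no incoming edge. Hence the legs are pairwise distinct edges, and no edge joins two leg vertices: such an edge would be incoming to one of them. So every leg ends in a vertex of $\CV_\star$, except in the degenerate case where a connected component consists of two leg vertices joined by a single edge. I will record that this two-vertex component is permitted: it corresponds to a bare propagator contraction, and $\CE_\star$ is empty for it. The requirement that each connected component carries at least one leg is closed under disjoint union and under the contractions used later. I expect this to be the only point needing care, because the coproduct extracts \emph{vacuum} subgraphs. Those are by construction not Feynman diagrams in this sense. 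They live in a separate space, consistently with the notation $\mathrm{vac}(\Gamma)$ of Corollary~\ref{corollary:renormmeasure}.

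\emph{Step 2: decorations.} The maps $\mfe\colon\CE\to\N^d\times\N^d$, $\mft\colon\CL\to\{1,\dots,N\}$ and $\mfd\colon\CE_\star\to\{0,1\}$ are defined on finite sets. Therefore the set of decorated diagrams with a fixed number of vertices and fixed bounds on $|\mfe|$ is finite. Every sum of the form $\sum_{\Gamma\in F^\star}$ appearing in Theorem~\ref{maintheorem} is then a well-defined formal series, graded by the number of vertices, which plays the role of the order in the coupling. The pair $\mfe=(\mfe_-,\mfe_+)$ is to be read as the numbers of derivatives falling on the source and target of an edge, so it is compatible with the kernel bound $|\D^kG(x)|\lesssim|x|^{\sigma-d-|k|}$.

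\emph{Step 3: isomorphisms.} I will define diagrams up to isomorphisms of directed multigraphs that preserve the order on $\CL$ and all decorations. The automorphism group of such a diagram is finite, since the graph is finite, which makes the symmetry factor $S(\Gamma)$ meaningful. This is the main obstacle to state cleanly. Isomorphisms must be allowed to permute parallel edges that carry the same decoration, but must fix the legs pointwise because $\CL$ is ordered. I would handle this by defining $S(\Gamma)$ as the order of the stabiliser of the decorated, leg-labelled graph under permutations of $\CV_\star$ and of the edges.
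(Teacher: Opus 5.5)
You are right that there is nothing to prove: this is a definition, essentially Hairer's notion of Feynman diagram with the extra decorations $\mft$ and $\mfd$, and the paper gives no argument for it. The problem is that several of the consistency checks you add are false, and they conflict with how the paper later uses the definition.

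\textbf{Step 1.} The ``degenerate'' component made of two leg vertices joined by one edge cannot occur. That edge would be incoming at one of the two leg vertices, which your own preceding sentence excludes. So every leg ends in $\mathcal{V}_\star$, with no exception, and there is no bare-propagator diagram.

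Your closure claim for the leg condition is also only half true. It survives the quotients $\Gamma/(\overline\Gamma,\ell)$ and the cut $D$. It does not survive $\circlearrowleft$ and $\graft$, which are exactly the operations that generate $F^\star$ and encode the Polchinski equation. Pairing the last two legs produces a leg-free diagram, e.g.\ the figure-eight of $\phi^4$. Such diagrams must belong to $F^\star$, because they produce the field-independent terms $V^{i,0}_\lambda$ and the constants $\mfb^i_\eps$ of Corollary~\ref{corollary:renormmeasure}. The paper itself treats them as elements of $F$ (vacuum diagrams are defined as $\Gamma\in F$ with empty leg set), so the condition is effectively not enforced.

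\textbf{Step 2.} A fixed number of vertices does not bound the number of edges of a multigraph. Finiteness at each order of $\sum_{\Gamma\in F^\star}$ comes instead from the fact that every vertex carries one of the elementary diagrams of $E^\star$. This bounds both the number of edges and their decorations.

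\textbf{Step 3.} This is the most serious problem. Your convention that automorphisms fix the legs pointwise contradicts the paper, where automorphisms permute vertices and \emph{all} half-edges, legs included. That is why the paper assigns $S=4!$ to the four-leg star and $S=2^2$, $2^3$ to its two two-vertex examples. Your convention would turn these into $1$, $1$ and $2$ respectively. Then $\Gamma!=\prod_v S(\mathrm{elem}(v))$ would collapse to $1$. The normalisation $\alpha(\Gamma)\Gamma!/S(\Gamma)$ of Theorem~\ref{maintheorem} would no longer be the right one. Neither would the identity $\langle\Gamma,\Gamma\rangle=S(\Gamma)$ used in its proof, since the inner product in Lemma~\ref{lemma:duality} is built from the action of $\mfS_n\times\mfS_m$ on all vertices and half-edges. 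The word ``ordered'' in the definition is a remnant of Hairer's setting and plays no role in this paper.
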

    We denote by $F$ this set.
    \item A \textit{forest (of Feynman diagrams)} is defined similarly but without the connectedness assumption. It is in particular a collection of Feynman diagrams in which the order does not matter. This set is endowed with a multiplication $\bullet$ that yields the forest composed of the two forests on both sides of the operation. Every forest $\mathbf{\Gamma}$ can be written in a canonical form $\mathbf{\Gamma}=\Gamma_1\dots\Gamma_n$ for some integer $n$ and Feynman diagrams $\Gamma_i$. We denote by $\mfF$ this set.
  \end{enumerate}
\end{definition}
In addition, we define $F_0$ and $F_1$ the subsets of $F$ of Feynman diagrams containing respectively $0$ and $1$ internal edge with a derivative in $\lambda$. We denote $\langle F_0\rangle$ and $\langle F_1\rangle$ their linear spans. Every Feynman diagram represents an analytical quantity given by the following $\lambda$-dependent translation map.

\begin{equation}\label{eq:evmap}
  (\Pi_\lambda\Gamma)[\phi]=\int_{\Lambda^{\CV_\star}}\prod_{e\in\CE_\star} \D_\lambda^{\mfd(e)}\D^{\mfe_+(e)}_{x_{e_+}}\D^{\mfe_-(e)}_{x_{e_-}}(G-G_\lambda)(x_{e_+},x_{e_-})\D^{l_1}\phi_{\mft(v_1)}(x_{v_1})\dots\D^{l_k}\phi_{\mft(v_k)}(x_{v_k})\,\dint x\,.
\end{equation}
This map is extended to forests by imposing a morphism property with respect to $\bullet$. In our notations, $e_-$ and $e_+$ are respectively the source and the target of an edge $e$ ($e_-\to e_+$). Since the legs have only one outgoing edge, we choose arbitrarily that only the decoration $\mfe_+$ matters. These are the numbers $l_i$ in the formula above. In \eqref{eq:evmap}, we have implicitly supposed that the kernels associated to the edges linking legs to internal vertices ($e$ say) are the $\mfe_+(e)$ derivatives of a Dirac $\delta$. We also define a notion of half-edge. A half-edge is the data of a couple $(v,e)$, with $e\in\mcE$ and $v\in\{e_-,e_+\}$. For $e\in\mcE$, we denote $e_\leftarrow=(e,e_-)$ and $e_\rightarrow=(e,e_+)$. 

Since we supposed our linear operator invariant by translation, the propagator takes the form $G(x,y)=G(x-y)$. We thus have the simplification $\D^{\mfe_+(e)}_{x_{e_+}}\D^{\mfe_-(e)}_{x_{e_-}}(G-G_\lambda)(x_{e_+},x_{e_-})=(-1)^{\mfe_-(e)}\D^{\mfe_+(e)+\mfe_-(e)}(G-G_\lambda)(x_{e_+}-x_{e_-})$. Contrarily to \cite{BPHZ_theorem}, we did not implement this simplification directly on the combinatorial side. The reason for that is that we will need to cut edges in Section \ref{sec:forestduality} and it is thus much simpler to keep it as it is. Moreover, it allows us to not write powers of $-1$ everywhere. The decorations $\mft$ and $\mfd$ are also absent in \cite{BPHZ_theorem}. The first one appears because we treat vector valued theories, which is a generalisation that brings no conceptual difficulty and the second one is proper to the Polchinski flow as it represents a derivative in $\lambda$.

\begin{example}\label{ex:feynmandiagram}
We give below an example of Feynman diagram.
\begin{equs}
\Gamma=\begin{tikzpicture}[scale=0.2,baseline=0.6cm]
			\node at (0,0)  [dot,label= {[label distance=-0.2em]below: \scriptsize  $      $} ] (root) {};
			\node at (-5,4)  [dot,label= {[label distance=-0.2em]right: \scriptsize  $     $} ] (center) {};
			\node at (0,8)  [dot,label= {[label distance=-0.2em]right: \scriptsize  $     $} ] (centerc) {};
			\node at (-9,4)  [,label= {[label distance=-0.2em]right: \scriptsize  $     $} ] (left) {};
			\node at (4,0)  [,label= {[label distance=-0.2em]right: \scriptsize  $     $} ] (right) {};
			\draw[arrow,color=red] (center) to
			node [sloped,below] {\small }     (root);
			\draw[arrow] (centerc) to
			node [sloped,below] {\small }     (center);
			\draw[arrow] (centerc) to
			node [sloped,below] {\small }     (root);
			\draw[kernel1,color=blue] (center) to
			node [sloped,below] {\small }     (left);
			\draw[kernel1,color=blue] (root) to
			node [sloped,below] {\small }     (right);
			\node at (1,1.5) [label={[label distance=0em]center: \scriptsize  $0$} ] () {};
      \node at (1,6.5) [label={[label distance=0em]center: \scriptsize  $0$} ] () {};
      \node at (-1.5,8) [label={[label distance=0em]center: \scriptsize  $0$} ] () {};
			\node at (-1.5,0) [label={[label distance=0em]center: \scriptsize  $0$} ] () {};
      \node at (-4,2) [label={[label distance=0em]center: \scriptsize  $0$} ] () {};
			\node at (-4,6) [label={[label distance=0em]center: \scriptsize  $1$} ] () {};
			\node at (-6.75,4) [fill=white,label={[label distance=0em]center: \scriptsize  $0$} ] () {};
			\node at (1.75,0) [fill=white,label={[label distance=0em]center: \scriptsize  $1$} ] () {};
\end{tikzpicture}
\end{equs}
We represented the legs in blue. The interior edges having decoration $\mfd=1$ are drawn in red, where the remaining ones are in black. For the examples, we will consider scalar theories in dimension $1$, unless specified otherwise. This will lighten considerably the notations. Moreover, from now on, we will not write the null decorations. This diagram represents the analytic quantity
\begin{equs}
  (\Pi_\lambda\Gamma)[\phi]=\int_{\Lambda^3}\D_{x_1}(G-G_\lambda)(x_1,x_2)(G-G_\lambda)(x_2,x_3)(-\dot G_\lambda)(x_3,x_1)\phi(x_1)\D\phi(x_3)\,\dint x_1\dint x_2\dint x_3\,.
\end{equs}
For a much simpler example, we have
\begin{equs}
  \Big(\Pi_\lambda\begin{tikzpicture}[scale=0.15,baseline=-0.1cm, trim right=0.4cm, trim left=-0.4cm]
			\node at (0,0)  [dot,label= {[label distance=-0.2em]below: \scriptsize  $      $} ] (center) {};
			\node at (-3,3)  [label= {[label distance=-0.2em]right: \scriptsize  $     $} ] (topleft) {};
			\node at (-3,-3)  [label= {[label distance=-0.2em]right: \scriptsize  $     $} ] (botleft) {};
			\node at (3,3)  [label= {[label distance=-0.2em]right: \scriptsize  $     $} ] (topright) {};
      \node at (3,-3)  [label= {[label distance=-0.2em]right: \scriptsize  $     $} ] (botright) {};
			\draw[kernel1,color=blue] (center) to
			node [sloped,below] {\small }     (topright);
      \draw[kernel1,color=blue] (center) to
			node [sloped,below] {\small }     (topleft);
      \draw[kernel1,color=blue] (center) to
			node [sloped,below] {\small }     (botright);
      \draw[kernel1,color=blue] (center) to
			node [sloped,below] {\small }     (botleft);
    \end{tikzpicture}\Big)[\phi]=\int_\Lambda\phi(x)^4\,\dint x\,.
\end{equs}
Note that for such pure legs diagrams, the paramter $\lambda$ does not matter.
\end{example}

\begin{remark}
  Since we are in the translation invariant case, one could observe that the self-loops simplify, in the sense that, on a precise example
  \begin{equs}
    \Pi_\lambda\begin{tikzpicture}[scale=0.15,baseline=-0.1cm,trim right=1.1cm, trim left=-0.6cm]
			\node at (0,0)  [dot,label= {[label distance=-0.2em]below: \scriptsize  $      $} ] (center) {};
			\node at (-4,0)  [label= {[label distance=-0.2em]right: \scriptsize  $     $} ] (topleft) {};
			\node at (0,4)  [label= {[label distance=-0.2em]right: \scriptsize  $     $} ] (botleft) {};
			\node at (0,-4)  [label= {[label distance=-0.2em]right: \scriptsize  $     $} ] (topright) {};
      \node at (4,0)  [dot,label= {[label distance=-0.2em]right: \scriptsize  $     $} ] (centerright) {};
      \node at (4,-4)  [label= {[label distance=-0.2em]right: \scriptsize  $     $} ] (leg2) {};
      \draw[arrow] (center) to
			node [sloped,below] {\small }     (centerright);
      \draw[arrow, color=red] (centerright) .. controls (4,5) and (9,0) .. (centerright);
			\draw[kernel1,color=blue] (center) to
			node [sloped,below] {\small }     (topright);
      \draw[kernel1,color=blue] (center) to
			node [sloped,below] {\small }     (topleft);
      \draw[kernel1,color=blue] (center) to
			node [sloped,below] {\small }     (botleft);
      \draw[kernel1,color=blue] (centerright) to
			node [sloped,below] {\small }     (leg2);
    \end{tikzpicture}=-\dot G_\lambda(0)\Pi_\lambda\begin{tikzpicture}[scale=0.15,baseline=-0.1cm,trim right=1.1cm, trim left=-0.6cm]
			\node at (0,0)  [dot,label= {[label distance=-0.2em]below: \scriptsize  $      $} ] (center) {};
			\node at (-4,0)  [label= {[label distance=-0.2em]right: \scriptsize  $     $} ] (topleft) {};
			\node at (0,4)  [label= {[label distance=-0.2em]right: \scriptsize  $     $} ] (botleft) {};
			\node at (0,-4)  [label= {[label distance=-0.2em]right: \scriptsize  $     $} ] (topright) {};
      \node at (4,0)  [dot,label= {[label distance=-0.2em]right: \scriptsize  $     $} ] (centerright) {};
      \node at (8,0)  [label= {[label distance=-0.2em]right: \scriptsize  $     $} ] (leg2) {};
      \draw[arrow] (center) to
			node [sloped,below] {\small }     (centerright);
			\draw[kernel1,color=blue] (center) to
			node [sloped,below] {\small }     (topright);
      \draw[kernel1,color=blue] (center) to
			node [sloped,below] {\small }     (topleft);
      \draw[kernel1,color=blue] (center) to
			node [sloped,below] {\small }     (botleft);
      \draw[kernel1,color=blue] (centerright) to
			node [sloped,below] {\small }     (leg2);
    \end{tikzpicture}\,.
  \end{equs}
  However, implementing this simplification on the combinatorial side would break the duality formula of Lemma \ref{lemma:duality}.
\end{remark}

\begin{definition}
  A \textit{vacuum diagram} is a Feynman diagram $\Gamma\in F$ whose leg set is empty and embedded with an additional decoration $\mfn\colon\mcV_\star\to\N^d$, and with one distinguished vertex $v_\star$.
\end{definition}
As for normal Feynman diagrams, the vacuum diagrams represent an analytic quantity given by
\begin{equation}
  \Pi_\lambda(\Gamma,v_\star,\mfn)=\int_{\Lambda^{\mcV_\star\backslash \{v_\star\}}}\prod_{e\in\mcE_\star}\D_\lambda^{\mfd(e)}\D^{\mfe_+(e)}_{x_{e_+}}\D^{\mfe_-(e)}_{x_{e_-}}(G-G_\lambda)(x_{e_+},x_{e_-})\prod_{v\in\mcV_\star}(x_v-x_{v_\star})^{\mfn(v)}\,\dint x\,.
\end{equation}
Note that this time the analytic quantity is not field-dependent.
\begin{example}\label{ex:vacuum}
  We give a simple example, for which we have represented the distinguished vertex in teal.
  \begin{equs}
    \Pi_\lambda\begin{tikzpicture}[scale=0.2,baseline=-0.1cm,trim right=-0.2cm]
      \node at (-2.5,0)  [dot,label= {[label distance=-0.2em]right: \scriptsize  $k$} ] (centerleft) {};
      \node at (-5,2)  [dot,label= {[label distance=-0.2em]left: \scriptsize  $j$} ] (topleft) {};
			\node at (-5,-2)  [dot,color=teal,label= {[label distance=-0.2em]left: \scriptsize  $i$} ] (botleft) {};
      \draw[arrow] (centerleft) to
			node [sloped,below] {\small }     (topleft);
      \draw[arrow] (centerleft) to
			node [sloped,below] {\small }     (botleft);
      \draw[arrow] (topleft) to
			node [sloped,below] {\small }     (botleft);
    \end{tikzpicture}&=\one_{i=0}\int_{\Lambda^2}G(x_1-x_2)G(x_1-x_3)G(x_2-x_3)(x_2-x_1)^j(x_3-x_1)^k\,\dint x_2\,\dint x_3\\
    &=\one_{i=0}\int_{\Lambda^2}G(x_2)G(x_3)G(x_2-x_3)x_2^jx_3^k\,\dint x_2\,\dint x_3\,.
  \end{equs}
  We supposed for this example that $G$ is symmetric. It will be important later to notice that, when $i=j=k=0$, then
  \begin{equs}
    \Pi_\lambda\begin{tikzpicture}[scale=0.2,baseline=-0.1cm,trim right=-0.4cm]
      \node at (-2.5,0)  [dot] (centerleft) {};
      \node at (-5,2)  [dot] (topleft) {};
			\node at (-5,-2)  [dot,color=teal] (botleft) {};
      \draw[arrow] (centerleft) to
			node [sloped,below] {\small }     (topleft);
      \draw[arrow] (centerleft) to
			node [sloped,below] {\small }     (botleft);
      \draw[arrow] (topleft) to
			node [sloped,below] {\small }     (botleft);
    \end{tikzpicture}=\Pi_\lambda\begin{tikzpicture}[scale=0.2,baseline=-0.1cm,trim right=-0.4cm]
      \node at (-2.5,0)  [dot] (centerleft) {};
      \node at (-5,2)  [dot,color=teal] (topleft) {};
			\node at (-5,-2)  [dot] (botleft) {};
      \draw[arrow] (centerleft) to
			node [sloped,below] {\small }     (topleft);
      \draw[arrow] (centerleft) to
			node [sloped,below] {\small }     (botleft);
      \draw[arrow] (topleft) to
			node [sloped,below] {\small }     (botleft);
    \end{tikzpicture}=\Pi_\lambda\begin{tikzpicture}[scale=0.2,baseline=-0.1cm,trim right=-0.4cm]
      \node at (-2.5,0)  [dot,color=teal] (centerleft) {};
      \node at (-5,2)  [dot] (topleft) {};
			\node at (-5,-2)  [dot] (botleft) {};
      \draw[arrow] (centerleft) to
			node [sloped,below] {\small }     (topleft);
      \draw[arrow] (centerleft) to
			node [sloped,below] {\small }     (botleft);
      \draw[arrow] (topleft) to
			node [sloped,below] {\small }     (botleft);
    \end{tikzpicture}\,.
  \end{equs}
  For vacuum diagrams with node decoration set to $0$, the base point does not matter. This is due to the translation invariance of the propagator that allows to perform a change of variable and therefore to move the base point.
\end{example}

\begin{remark}
  In this paper, we choose to not use the equivalence class introduced in \cite{BPHZ_theorem} that allows to identify Feynman diagrams that can be linked to each other by integration by parts or by change of base-point for the vacuum diagrams. However, we believe that all the operations introduced in the rest of the paper are compatible with such a framework. 
\end{remark}

\begin{definition}\label{def:degree}
	We define the degree of a Feynman diagram $\Gamma$ to be
	\begin{equation}
		\deg(\Gamma,\mfe,\mfd)=\sum_{e\in\mcE_\star}\big(\sigma-d-|\mfe(e)|-\mfd(e)\big)+d(|\mcV|-1)\,,
	\end{equation}
  where $|\mfe(e)|=\mfe_+(e)+\mfe_-(e)$. We also extend this notion to vacuum diagrams with
  \begin{equation}
    \deg(\Gamma,\mfe,\mfd,\mfn)=\deg(\Gamma,\mfe,\mfd)+\sum_{v\in\mcV_\star}|\mfn(v)|\,.
  \end{equation}
\end{definition}
We denote $F_-$ the set of vacuum diagrams of non-positive degree, and $\mfF_-$ the set of forests of such graphs.

Given a Feynman diagram $\Gamma$, we can extract subgraphs from it, formed from a subset $\overline{\mcE}\subset\mcE_\star$ of its edges with their neighbouring vertices, as well as the decoration restricted to this subset. We impose that the multi-edges between two nodes are always extracted all at once. In particular, this subgraph is a vacuum diagram with null node decoration, so that we do not need to specify the base point, as we mentionned it in Example \ref{ex:feynmandiagram}. For such a subgraph $\overline{\Gamma}\subset\Gamma$, we define $\D \overline{\Gamma}$ for the set of all half-edges adjacent to $\overline{\Gamma}$. Legs can also be part of $\D \overline{\Gamma}$, but they can not be part of $\overline{\Gamma}$. Note also that the edge joining the two vertices at the top appears as two distinct half-edges in $\D \overline{\Gamma}$. Given furthermore a map $\ell \colon\D \overline{\Gamma} \to \N^d$ (canonically extended to vanish on all other half-edges of $\Gamma$), we then define the following two objects.
\begin{enumerate}
  \item A vacuum diagram $(\overline{\Gamma},\pi\ell)$ which consists of the graph $\overline{\Gamma}$ endowed with the edge decoration inherited from $\Gamma$, as well as the node decoration $\mfn=\pi\ell$ given by $(\pi\ell)(v) =\sum_{e\colon(e,v)\in\D\overline{\Gamma}}\ell(e,v)$.
  \item A Feynman diagram $\Gamma/(\overline{\Gamma}, \ell)$ obtained by contracting the connected components of $\overline{\Gamma}$ to nodes and applying $\ell$ to the resulting diagram in the sense that, for edges $e \in \mcE \backslash \overline{\mcE}$ adjacent to $\D \overline{\Gamma}$, their new decoration is given by $\big(\mfe_+(e)+\ell(e_\rightarrow),\mfe_-(e)+\ell(e_\leftarrow)\big)$.
\end{enumerate}
\begin{example}
  We provide an example to make these notions clear (it is freely borrowed to \cite{BPHZ_theorem}). Let
  \begin{equs}
    \Gamma=\begin{tikzpicture}[scale=1,baseline=-0.0cm]
      \node[dot] (l) at (0,0) {};
      \node[dot] (r) at (3,0) {};
      \node[dot] (ul) at (0.5,1) {};
      \node[dot] (ur) at (2.5,1) {};
      \node[dot] (d) at (1.5,-1) {};
      \node[dot] (c) at (1.5,0) {};
      \draw[arrow] (l) -- (ul);
      \draw[arrow] (ul) -- (ur) node[pos=0.15,above=-0.0] {$\scriptscriptstyle1$} node[pos=0.85,above=-0.0] {$\scriptscriptstyle1$};
      \draw[arrow] (ur) -- (r); 
      \draw[arrow] (l) -- (c);
      \draw[arrow] (ul) -- (c);
      \draw[arrow] (l) -- (d);
      \draw[arrow] (c) -- (d);
      \draw[arrow] (d) -- (r) node[pos=0.8,below right=-0.1] {$\scriptscriptstyle1$};
      \draw[thick,blue] (l) -- ++(180:0.75);
      \draw[thick,blue] (r) -- ++(0:0.75);
      \draw[thick,blue] (d) -- ++(-90:0.75);
      \draw[line width=0.5cm,draw opacity=0.15, line cap=round] (l) -- (ul) -- (c) -- (l) -- (0.75,0.25) -- (0.5,0.5);
      \draw[line width=0.5cm,draw opacity=0.15, line cap=round] (ur) -- (r);
      \draw[ultra thick,ForestGreen,nearly opaque] (l) -- ++(180:0.5);
      \draw[ultra thick,ForestGreen,nearly opaque] (r) -- ++(0:0.5) node[pos=0.5,below=-0.0,black] {$\scriptscriptstyle1$};
      \draw[ultra thick,ForestGreen,nearly opaque] (d) -- ++(-90:0.5);
      \draw[ultra thick,ForestGreen,nearly opaque] (ul) -- ++(0:0.5);
      \draw[ultra thick,ForestGreen,nearly opaque] (ur) -- ++(180:0.5);
      \draw[ultra thick,ForestGreen,nearly opaque] (c) -- ++(-90:0.5);
      \draw[ultra thick,ForestGreen,nearly opaque] (l) -- ++(-33.4:0.5);
      \draw[ultra thick,ForestGreen,nearly opaque] (r) -- ++(-146.6:0.5);
    \end{tikzpicture} 
  \end{equs}
  We represented over $\Gamma$ the subgraphs we extract in light grey, as well as the half-legs in dark green. We also indicated the values of the half-legs. With such a choice of subgraph, we get
  \begin{equs}
    (\bar \Gamma, \ell) = 
    \begin{tikzpicture}[style={thick},baseline=0.3cm]
      \node[dot] (l) at (0,0) {};
      \node[dot,label={[shift={(0.1,0)}]left:{$\scriptscriptstyle2$}}] (r) at (3,0) {};
      \node[dot,label={[shift={(-0.1,0)}]right:{$\scriptscriptstyle1$}}] (ul) at (0.5,1) {};
      \node[dot,label={[shift={(0.1,0)}]left:{$\scriptscriptstyle1$}}] (ur) at (2.5,1) {};
      \node[dot] (c) at (1.5,0) {};
      \draw[->] (l) -- (ul);
      \draw[->] (ur) -- (r); 
      \draw[->] (l) -- (c);
      \draw[->] (ul) -- (c);
      \end{tikzpicture}\;,\qquad
      \Gamma / (\bar \Gamma, \ell)
      = 
      \begin{tikzpicture}[style={thick},baseline=-0.7cm]
      \node[dot] (l) at (0,0) {};
      \node[dot] (r) at (2.5,0) {};
      \node[dot] (d) at (1.5,-1) {};
      \draw[->] (l) to[bend left=40] node[pos=0.2,above=-0.0] {$\scriptscriptstyle1$} node[pos=0.8,above=-0.0] {$\scriptscriptstyle1$} (r);
      \draw[->] (l) to[bend left=30] (d);
      \draw[->] (l) to[bend right=30] (d);
      \draw[->] (d) -- (r) node[midway,below right=-0.1] {$\scriptscriptstyle1$};
      \draw[thick,blue] (l) -- ++(180:0.75);
      \draw[thick,blue] (r) -- ++(0:0.75);
      \draw[thick,blue] (d) -- ++(-90:0.75);
    \end{tikzpicture}\,.
  \end{equs}
\end{example}
\begin{definition}
We define a coproduct $\Delta\colon \langle F\rangle\to \langle \mfF_-\rangle\otimes\langle F\rangle$ by setting for a single diagram $\Gamma\in F$,
\begin{equation}
  \Delta\Gamma=\sum_{\overline{\Gamma}\subset\Gamma}\sum_{\ell\colon\D\overline{\Gamma}\to\N^d}\frac{1}{\ell!}(\overline{\Gamma},\pi\ell)\otimes\Gamma/(\overline{\Gamma},\ell)\,,
\end{equation}
and by extending it naturally by linearity. The sum runs over the subdiagrams $\overline{\Gamma}$ such that $\deg(\overline{\Gamma},\pi\ell)\leq0$. We forbid the extraction of edges such that $\mfd(e)=1$. We use the convention $\ell!=\prod_{e\in\D\overline{\Gamma}}\ell(e)!$. We extend this definition to vacuum diagrams by defining another coproduct $\Delta^-\colon \langle \mfF_-\rangle\to\langle \mfF_-\rangle\otimes \langle \mfF_-\rangle$.
\begin{equation}
  \Delta^-(\Gamma,\mfn)=\sum_{\overline{\Gamma}\subset\Gamma}\sum_{\scriptscriptstyle{\substack{\overline{\ell}\colon\D\overline{\Gamma}\to\mathbf{N}^d\\\overline{\mfn}\colon\overline{\mcV}\to\mathbf{N}^d}}}\frac{1}{\overline{\ell}!}\binom{\mfn}{\overline\mfn}(\overline{\Gamma},\overline{\mfn}+\pi\overline{\ell})\otimes(\Gamma,\mfn-\overline{\mfn})/(\overline{\Gamma},\ell)\,.
\end{equation}
The sum runs over the subdiagrams $\overline{\Gamma}$ such that $\deg(\overline{\Gamma},\overline{\mfn}+\pi\overline{\ell})\leq0$.
\end{definition}

\begin{definition}
  We define the \textit{renormalised evaluation map} by
  \begin{equation}
    \hat\Pi_\lambda\Gamma=(g\otimes\Pi_\lambda)\Delta\Gamma\,.
  \end{equation}
  where $g\colon\langle \mfF_-\rangle \rightarrow \R$ is a character with respect to the forest product $\bullet$. We do not keep track of the dependency on $g$ in the notation, for conciseness. We extend this notion to vacuum diagrams by setting
  \begin{equation}
    \hat\Pi_\lambda(\Gamma,\mfn)=(g\otimes\Pi_\lambda)\Delta^-(\Gamma,\mfn)\,.
  \end{equation}
\end{definition}
In particular, we will show in Proposition \ref{prop:gbphz} that one has to take the BPHZ character $g_{\scriptscriptstyle\mathrm{BPHZ}}$ to recover the renormalisation constants given by the Polchinski flow.

\begin{remark}
  Even if we do not actually use this fact directly, we would like to mention that the vector space $\langle \mfF_-\rangle$ of non-positive vacuum diagrams endowed with the coproduct $\Delta^-$, the forest product $\bullet$, a unit given by the empty forest, and a counit given to vanish on every non-empty diagram is a graded Hopf algebra. With this algebraic structure, the BPHZ character has a closed formula given by $g_{\scriptscriptstyle\mathrm{BPHZ}}(\Gamma,\mfn)=\Pi_\infty\hat{\mathcal{A}}(\Gamma,\mfn)$, where $\hat{\mathcal{A}}$ is the so-called twisted antipode. The only fact we will need is that it is characterized to vanish on every non-positive negative diagrams. For more details, see \cite[Section 2.4]{BPHZ_theorem}.
\end{remark}

\begin{remark}
  Note that one could have tried to work with elementary differentials, as in \cite{BM25}, and obtain, for a graph \textit{without legs}, but with null vertex decoration,
  \begin{equs}
    (\Pi_\lambda\Gamma)[\phi]=\int_{\Lambda^{\CV_\star}} \prod_{e \in \mathcal{E}_{\star}} \D_\lambda^{\mfd(e)}\D^{\mfe(e)}(G-G_\lambda)(x_{e_+}-x_{e_-})\Upsilon[\Gamma][\phi](x_{v_1},\dots,x_{v_k}) \,\dint x\,,
  \end{equs}
  where 
  \begin{equs}
    \Upsilon[\Gamma,\mfn][\phi](x_{v_1},\dots,x_{v_k})=\prod_{v\in\mcV_\star}\prod_{e\sim v}\D^{\mfn(v)}_{x_v}\mathrm{D}_{\D^{\mfe_\pm(e)}\phi}V[\phi](x_{v})\,.
  \end{equs}
  The second product runs on the half-edges that are neighboring $v$, and $\mfe_\pm(e)$ is their decoration, whether they are incoming or outgoing. $\mathrm{D}$ is a Fréchet derivative, and $V$ is the \textit{unrenormalised} potential. With this definition, the ansatz would read
  \begin{equs}
    V_{t}[\varphi] = \sum_{\Gamma} \frac{1}{S(\Gamma)} (\Pi_\lambda \Gamma)[\varphi]\,.
  \end{equs}
   However, as we lack an inductive construction of the combinatorial objects as with rooted trees, the definition of the renormalised evaluation map is quite unclear in this framework. We believe though that it is possible to show that the ansatz of equation \eqref{eq:ansatzpotential} reads in this formalism
   \begin{equs}
    V_0[\phi]=\sum_{\deg(\Gamma,\mfn)\leq0}\frac{g_{\scriptscriptstyle\mathrm{BPHZ}}(\Gamma,\mfn)}{S(\Gamma,\mfn)}\Upsilon[\Gamma,\mfn][\phi]\,,
   \end{equs}
   which is exactly of the flavour of what is done in the SPDE literature. Moreover, for QFT, tackling non polynomial potentials is not as relevant as in SPDEs, that find applications outside this field strictly speaking. That is why we stick with the framework where the legs are part of the diagrams. This ansatz is very close to the one given on decorated trees in \cite{BM25}, in the context of SPDEs. It is also very similar to \cite{BH25} where one uses multi-indices meaning that the combinatorics focus on nodes and their arity. With multi-indices, one has to consider all the potential pairings between half-edges that lead to the same diagram. This formulation already appeared under the name of pre-Feynman diagrams in \cite{pre_FD} and a similar construction has been used in SPDEs for the first time in \cite{OSSW}.
\end{remark}

\section{A proof of perturbative renormalisation with the Polchinski flow}\label{sec:analytic}

We present in this section an example of how the Polchinski flow can be used to get relevant estimates on the terms of the perturbative expansion of Euclidean quantum fields. We focus on a rather simple model, namely the $\phi^4$ model in the sub-critical (super-renormalisable) regime. The goal of this section is twofold: we show the mechanism of the Polchinski flow in order to give the reader some intuition for the next section and highlight some essential phenomena, and we use a version that is close to the one introduced by Duch \cite{Duc22,Duc23}. More precisely, this version relies on analysis in the physical space, and not the Fourier space as it is usual in physics. Moreover, we use some arguments found by Duch in the treatment of elliptic SPDEs, that lead to substantial simplifications. For example, we do not need to introduce the so-called generalised force coefficients. We let the potential
\begin{equation}
  V_0[\phi]=\int_\Lambda\big(\alpha\phi(x)^4+\mfa_\eps\phi(x)^2+\mfb_\eps\big)\,\dint x\,,
\end{equation}
with renormalisation constants
\begin{equation}
  \mfa_\eps=\sum_{i\leq\sigma/(2\sigma-d)} \alpha^i \mfa_\eps^i \,,\mathrm{and}\quad\mfb_\eps=\sum_{i\leq d/(2\sigma-d)}\alpha^i \mfb_\eps^i\,.
\end{equation}
For this section, the subcriticalily condition reads $d/3<\sigma<d/2$. We make the following formal ansatz on the renormalised potential.
\begin{equation}\label{eq:ansatzphi4}
  V_\lambda[\phi]=\sum_{i=0}^\infty\sum_{m=0}^{3i}\alpha^i \int_{\Lambda^m}V_\lambda^{i,m}(x_1,\dots,x_m)\phi(x_1)\dots\phi(x_m)\,\dint x_{1:m}\,,
\end{equation}
started with the initial conditions
\begin{equation}
  V_\infty^{i,0}=\mfb_\eps^i,\quad\mathrm{and}\quad V_\infty^{i,2}=\mfa_\eps^i\,.
\end{equation}
We recall that we do not claim anything on the convergence of the series \eqref{eq:ansatzphi4}. We see it as a formal power series in $(i,m)$. We are now ready to define a flow of scale of the Green function $G$. We set
\begin{equation}\label{eq:kerneldecomposition}
  G_\lambda(x)=\chi(|x|^\sigma/\lambda)G(x)\,.
\end{equation}
$\chi\colon[0,\infty)\to\R$ is a smooth function such that $\chi(r)=0$ for $r\leq1/4$ and $\chi(r)=1$ for $r\geq1/2$. We denote $\dot G_\lambda=\D_\lambda G_\lambda$. This kernel has the interesting property that its support is included in $\{x\in\Lambda\colon \lambda/4\leq|x|^\sigma\leq\lambda/2\}$. In particular it vanishes for large $\lambda$. We have the following estimate for $p\in[1,\infty]$ and small $\lambda$.
\begin{equation}\label{eq:estimateLp}
	\|\dot G_\lambda\|_{L^p(\Lambda)}\lesssim \lambda^{-\frac{d}{\sigma}\left(1-\frac{1}{p}\right)}
\end{equation}
To state properly the bounds on the coefficients, we define a norm given by
\begin{equation}
\|V\|_{\CV^m}=\sup_{x_1\in\Lambda}\int_{\Lambda^{m-1}}|V(x_{1:m})|\,\dint x_{1:m}\,.
\end{equation}
Note that this norm respects the invariance by translation. Moreover, as in this section the coefficients are symmetric in their arguments, the choice of putting the supremum on the first variable is purely arbitrary. We consequently call $\mcV^m$ the completion of the space of symmetric test functions in $m$ variables with respect to this norm. We can now define another norm on $\mcV^m$ that depends on the flow parameter $\lambda$. 
\begin{equation}
  \snnorm{V}_\lambda=\|K_\lambda^{\otimes m}*V\|_{\CV^m}\,.
\end{equation}
Note that we kept the dependence on $m$ implicit. We have set the scaled test function $K_\lambda=(1-\lambda^{-\sigma}\Delta)^{-1}$.
We define some grafting operators, that (at least graphically) merge renormalised Feynman diagrams together in order to inductively construct bigger renormalised diagrams.
\begin{definition}
	We set, for $U\in\mcV^n$ and $V\in\mcV^m$, 
  \begin{equation}
    B_\lambda(U,V)(x_{1:m_1+m_2+1})=\sum_{i=1}^{m_1}\sum_{j=m_1+1}^{m_1+m_2+1}U(x_{1:m})
    \dot G_\lambda(x_i-x_j)V(x_{m_1+1:m_1+m_2+1})\,,
  \end{equation}
  and for $V\in\mcV^m$, 
  \begin{equation}
    A_\lambda(V)(x_{1:m})=\sum_{1\leq i,j\leq m}V(x_{1:m})\dot G_\lambda(x_i-x_j)\,.
  \end{equation}
\end{definition}
With these objects, one can see that the flow equation \eqref{eq:poleq} reads on the coefficients
\begin{equation}\label{flowcoeff}
  \D_\lambda V_\lambda^{i,m}(x_{1:m})=\frac{1}{2}A_\lambda(V_\lambda^{i,m+2})(x_{1:m})+\frac{1}{2}\sum_{\scriptscriptstyle{\substack{i_1+i_2=i\\m_1+m_2=m+2}}}B_\lambda(V_\lambda^{i_1,m_1},V_\lambda^{i_2,m_2})(x_{1:m})\,.
\end{equation}
Our notations are convenient enough to get the next lemma in a nice form. Its proof can be found in \cite[Lemma 14.10]{Duc22}, and the fact that $\|\dot G_\lambda\|_{L^1(\Lambda)}\lesssim1$, and $\|\dot G_\lambda\|_{L^\infty(\Lambda)}\lesssim\lambda^{-\frac{d}{\sigma}}$, following \eqref{eq:estimateLp}.
\begin{lemma}
  We have, for $U\in\mcV^n$ and $V\in\mcV^m$, 
  \begin{equation}
    \snnorm{B_\lambda(U,V)}_\lambda\lesssim \snnorm{U}_\lambda\snnorm{V}_\lambda\,,
  \end{equation}
  and for $V\in\mcV^m$, 
  \begin{equation}
    \snnorm{A_\lambda(V)}_\lambda\lesssim \lambda^{-\frac{d}{\sigma}}\snnorm{V}_\lambda\,.
  \end{equation}
\end{lemma}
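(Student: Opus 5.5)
The plan is to reduce both estimates to Young-type inequalities for the kernel $\dot G_\lambda$, after moving the smoothing operators $K_\lambda^{\otimes m}$ through the contraction. Throughout, I read $B_\lambda$ and $A_\lambda$ as in the flow equation \eqref{flowcoeff}: the two contracted variables $x_i,x_j$ are integrated out, so $B_\lambda(U,V)$ has $n+m-2$ free variables and $A_\lambda(V)$ has $m-2$. The sums over $(i,j)$ contribute only combinatorial constants depending on $n,m$, so it suffices to bound a single contraction.

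\textbf{Step 1: factor out the smoothing.} Write $\tilde U=K_\lambda^{\otimes n}*U$ and $\tilde V=K_\lambda^{\otimes m}*V$. Since $K_\lambda^{-1}=1-\lambda^{-\sigma}\Delta$ is a differential operator, $U=(K_\lambda^{-1})^{\otimes n}\tilde U$.

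\begin{itemize}
\item On the free variables, $K_\lambda^{\otimes(n+m-2)}$ applied to $B_\lambda(U,V)$ simply cancels these factors $K_\lambda^{-1}$.
\item On the contracted pair $x_i,x_j$, I integrate by parts. This transfers $K_\lambda^{-1}\otimes K_\lambda^{-1}$ onto $\dot G_\lambda(x_i-x_j)$, which is legitimate on the torus.
\item By translation invariance, this yields the identity
\[
K_\lambda^{\otimes(n+m-2)}*B_\lambda(U,V)=\sum_{i,j}\int \tilde U(\dots,y_i,\dots)\,H_\lambda(y_i-y_j)\,\tilde V(\dots,y_j,\dots)\,\dint y_i\,\dint y_j ,
\]
where $H_\lambda=(1-\lambda^{-\sigma}\Delta)^2\dot G_\lambda$.
\item The same identity holds for $A_\lambda$, with $H_\lambda$ evaluated on two variables of $\tilde V$.
\end{itemize}

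\textbf{Step 2: kernel bounds.} I need $\|H_\lambda\|_{L^1}\lesssim 1$ and $\|H_\lambda\|_{L^\infty}\lesssim\lambda^{-d/\sigma}$.

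\begin{itemize}
\item $\dot G_\lambda$ is supported on the annulus $|x|^\sigma\sim\lambda$, where $G$ is smooth.
\item Each spatial derivative there costs the inverse of the support radius, by the derivative bounds on $G$ and on $\chi$.
\item The prefactor $\lambda^{-\sigma}$ in $K_\lambda$ is meant to be exactly this inverse length scale squared. So $\lambda^{-\sigma}\Delta$ applied to $\dot G_\lambda$ costs $O(1)$, and \eqref{eq:estimateLp} transfers to $H_\lambda$.
\item I expect this to be the main obstacle. One must check that the scaling of $K_\lambda$ matches that of the decomposition \eqref{eq:kerneldecomposition}; if the exponents are off, the fix is to redefine $K_\lambda$ with the matching power of $\lambda$. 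This is where the argument of \cite[Lemma 14.10]{Duc22} is used.
\end{itemize}

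\textbf{Step 3: Young/Fubini estimate.}

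\begin{itemize}
\item For $B_\lambda$, by symmetry of the arguments I take the supremum variable to be a free variable of $\tilde U$ (if $n=1$ there is none; I take it in $\tilde V$ and reverse the roles of $U$ and $V$).
\item I integrate the remaining variables of $\tilde U$, including $y_i$, keeping absolute values.
\item For fixed $y_i$, the integral of $|H_\lambda(y_i-y_j)|$ times the integral of $|\tilde V|$ over all its variables except $y_j$ is bounded by $\|H_\lambda\|_{L^1}\sup_{y_j}\int|\tilde V|\le\|H_\lambda\|_{L^1}\snnorm{V}_\lambda$.
\item What remains is $\sup\int|\tilde U|\le\snnorm{U}_\lambda$. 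This gives $\snnorm{B_\lambda(U,V)}_\lambda\lesssim\snnorm U_\lambda\snnorm V_\lambda$.
\item For $A_\lambda$, both contracted variables belong to $\tilde V$. I bound $|H_\lambda(y_i-y_j)|\le\|H_\lambda\|_{L^\infty}$, take the supremum over a free variable, and integrate all the others, including $y_i$ and $y_j$. This is at most $\|H_\lambda\|_{L^\infty}\snnorm V_\lambda\lesssim\lambda^{-d/\sigma}\snnorm V_\lambda$.
\item The free variable is harmless here. By symmetry and translation invariance of the $\mcV^m$ norm, the supremum may be placed on any variable.
\end{itemize}
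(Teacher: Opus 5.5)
Your proof is correct and is essentially the argument the paper outsources to \cite[Lemma 14.10]{Duc22}: integrate $K_\lambda^{-1}\otimes K_\lambda^{-1}$ by parts onto the contraction kernel, then close with the Young/Fubini estimate using the $L^1$ norm (for $B_\lambda$) and the $L^\infty$ norm (for $A_\lambda$) of $H_\lambda=(K_\lambda^{-1})^2\dot G_\lambda$; you are right that these are the norms actually needed, rather than those of $\dot G_\lambda$ itself as the paper's wording suggests. Your hedge in Step 2 is warranted, because the exponent is indeed off: the coefficient of $\Delta$ must be the \emph{square} of the support radius $\lambda^{1/\sigma}$, i.e.\ $K_\lambda=(1-\lambda^{2/\sigma}\Delta)^{-1}$, which is also what the paper's later $(s/\lambda)^{2/\sigma}$ bound presupposes. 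With the literal $\lambda^{-\sigma}$, the term $\lambda^{-2\sigma}\Delta^2\dot G_\lambda$ alone has $L^1$ norm of order $\lambda^{-2\sigma-4/\sigma}$, which is unbounded as $\lambda\to0$. With the corrected scale, the bound $|\partial^k\dot G_\lambda|\lesssim\lambda^{-(d+|k|)/\sigma}$ on the annulus $|x|^\sigma\sim\lambda$ gives both kernel bounds, and your Steps 1 and 3 go through unchanged.
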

We adapt the notion of degree to the present context. It is directly taken from Definition \ref{def:degree} and adapted by noticing that all the diagrams with $i$ internal vertices and $m$ legs all have the same degree. We have
\begin{equation}
  \deg(i,m)=i(2\sigma-d)+m(d-\sigma)/2-d\,.
\end{equation}
Before jumping to the core of this section, we need one last technical ingredient, namely a Taylor-type expansion for our Feynman diagrams. Its proof can be found in \cite[Theorem 8.8]{Duc22}.
\begin{lemma}
  Let $U\in\mcV^2$, we have
  \begin{equation}\label{eq:Taylor}
    U(x_{1:m})=(\I U)(x_1) \delta(x_2-x_1)+\sum_{|k|=2}\D^k(\L^kU)(x_1,x_2)\,,
  \end{equation}
  where
  \begin{equation}
    (\L^kU)(x_1,x_2)=\frac{|k|}{k!}\int_0^1(1-\tau)^{|k|-1}\tau^{-d}(X^kU)(x_1,x_1+(x_2-x_1)/\tau)\,\dint\tau\,.
  \end{equation}
\end{lemma}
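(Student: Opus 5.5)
The plan is to prove \eqref{eq:Taylor} in the weak sense. Fix $x_1\in\Lambda$ and pair both sides, as distributions in $x_2$, with a smooth test function $\psi$. Throughout, I read $\I U(x_1)=\int_\Lambda U(x_1,x_2)\,\dint x_2$, let $\D^k$ act on the second variable, and let $X^kU(x_1,x_2)=(x_2-x_1)^kU(x_1,x_2)$. Here $x_2-x_1$ is taken as a representative in a fundamental domain centred at $x_1$, and $U$ is extended by zero outside it. I assume $U$ has the translation invariance and symmetry used throughout this section, i.e.\ $U(x_1,x_1+y)=u(y)$ with $u(y)=u(-y)$.

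\emph{Step 1: Taylor expansion of the test function.} Expand $\psi$ at $x_1$ to second order with integral remainder:
\begin{equs}
\psi(x_2)=\psi(x_1)+(x_2-x_1)\cdot\nabla\psi(x_1)+\sum_{|k|=2}\frac{|k|}{k!}(x_2-x_1)^k\int_0^1(1-\tau)^{|k|-1}\D^k\psi\big(x_1+\tau(x_2-x_1)\big)\,\dint\tau\,.
\end{equs}
Multiply by $U(x_1,x_2)$ and integrate in $x_2$.
\begin{itemize}
\item The zeroth-order term gives exactly $\I U(x_1)\psi(x_1)$, which is the pairing of $(\I U)(x_1)\delta(x_2-x_1)$ with $\psi$.
\item The first-order term is $\nabla\psi(x_1)\cdot\int u(y)\,y\,\dint y$. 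This vanishes because $u$ is even. This is the reason only $|k|=2$ appears in the statement.
\end{itemize}

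\emph{Step 2: the remainder.} Use Fubini to exchange the $\tau$ and $x_2$ integrals. For fixed $\tau\in(0,1]$, change variables $z=x_1+\tau(x_2-x_1)$, so that $x_2=x_1+(z-x_1)/\tau$ and $\dint x_2=\tau^{-d}\dint z$. The remainder becomes
\begin{equs}
\sum_{|k|=2}\int_\Lambda \D^k\psi(z)\,\frac{|k|}{k!}\int_0^1(1-\tau)^{|k|-1}\tau^{-d}(X^kU)\big(x_1,x_1+(z-x_1)/\tau\big)\,\dint\tau\,\dint z\,.
\end{equs}
This is $\int_\Lambda \D^k\psi(z)\,(\L^kU)(x_1,z)\,\dint z$. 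Integrating by parts $|k|=2$ times gives $(-1)^{|k|}=1$, so the expression is the pairing of $\D^k\L^kU(x_1,\cdot)$ with $\psi$. This identifies the two sides.

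\emph{Step 3: justification.} It remains to check that $\L^kU$ is well defined, for instance in $\mcV^2$, so that Fubini and the change of variables are legitimate. One computes
\begin{equs}
\int_\Lambda|(\L^kU)(x_1,z)|\,\dint z\lesssim\int_0^1(1-\tau)\int|y|^{|k|}|u(y)|\,\dint y\,\dint\tau\,,
\end{equs}
because the factor $\tau^{-d}$ is exactly compensated by the Jacobian. I would therefore first prove the identity for smooth, compactly supported $u$, and then extend it by density in the norm weighted by $|y|^2$. This is the setting in which the lemma is used, namely for the two-leg coefficients $V^{i,2}_\lambda$.

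I expect the main obstacle to be the geometry of the torus. The rescaling $x_2\mapsto x_1+(x_2-x_1)/\tau$ only makes sense on the lift to $\R^d$. My first approach is to work with $u$ as a function on $\R^d$ supported in the fundamental domain, so that all integrals above are over $\R^d$ and periodisation is applied only at the end. If that proves awkward, the fallback is to assume, using the compact support of $\dot G_\lambda$, that $u$ is supported in a small ball. In that case the dilated supports remain inside one fundamental domain for every $\tau$.
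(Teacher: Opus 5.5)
Your argument is correct, and it is the standard proof behind this result. The paper itself does not prove the lemma but cites it from Duch, and your route is the one behind that reference: Taylor's formula with integral remainder for the test function, followed by the dilation $z=x_1+\tau(x_2-x_1)$, whose Jacobian produces exactly the factor $\tau^{-d}$ in the remainder kernel. Your first treatment of the torus already suffices: the fundamental domain is convex and contains the base point, so the dilated kernel stays supported in it.

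The hypothesis you add, translation invariance together with evenness of $U$, is genuinely needed and is not a gap. For a general symmetric $U\in\mcV^2$ the first-order term $-\sum_{|k|=1}\big(\int_\Lambda (x_2-x_1)^k U(x_1,x_2)\,\dint x_2\big)\,\partial^k\delta(x_2-x_1)$ survives. Dropping it is exactly what the paper justifies elsewhere by invariance under translation.
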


We may now state and prove the following proposition that is the main object of this section. It is a bound on each of the terms of the perturbative expansion of the renormalised potential $V_\lambda$.
\begin{proposition}
  The following holds uniformly in $\eps$ and for every couple $(i,m)$
  \begin{equation}
    \snnorm{V_\lambda^{i,m}}_\lambda\lesssim \lambda^{\frac{\deg(i,m)}{\sigma}}\,.
  \end{equation}
\end{proposition}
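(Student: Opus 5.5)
We argue by induction on the pair $(i,m)$, ordered first by $i$ and then by decreasing $m$. This order works because the flow equation \eqref{flowcoeff} expresses $\D_\lambda V^{i,m}_\lambda$ through two kinds of terms. The first is $A_\lambda(V^{i,m+2}_\lambda)$, which has the same $i$ and larger $m$. The second is $B_\lambda(V^{i_1,m_1}_\lambda,V^{i_2,m_2}_\lambda)$ with $i_1+i_2=i$, and here both $i_1,i_2<i$ except in the trivial cases involving $V^{0,\cdot}$, which we treat directly.

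The base cases are as follows. $V^{i,m}=0$ for $m>3i$ when $i\geq1$, and the elementary coefficient $V^{1,4}$ is a product of Dirac deltas. For $V^{1,4}$ one checks $\snnorm{V^{1,4}}_\lambda\lesssim \lambda^{\deg(1,4)/\sigma}$ directly. Here $\deg(1,4)=2\sigma-d+2(d-\sigma)-d=0$, and $K_\lambda^{\otimes 4}*\delta$ has $\CV^4$-norm of order one because $\|K_\lambda\|_{L^1}\lesssim1$.

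For the inductive step, the induction hypothesis and the lemma on $A_\lambda,B_\lambda$ bound the right-hand side of \eqref{flowcoeff} in the $\snnorm{\cdot}_\lambda$ norm. The $A$ term gives $\lambda^{-d/\sigma}\lambda^{\deg(i,m+2)/\sigma}$, and since $\deg(i,m+2)-d=\deg(i,m)-\sigma$, this equals $\lambda^{\deg(i,m)/\sigma-1}$. The $B$ term gives $\lambda^{(\deg(i_1,m_1)+\deg(i_2,m_2))/\sigma}$. Using $m_1+m_2=m+2$, we get $\deg(i_1,m_1)+\deg(i_2,m_2)=\deg(i,m)+(d-\sigma)-d=\deg(i,m)-\sigma$, so this term is also $\lambda^{\deg(i,m)/\sigma-1}$.

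We need a monotonicity remark to pass these bounds to the integrated flow. Since $K_\lambda$ depends on $\lambda$, we note $\snnorm{V}_{\lambda}\lesssim\snnorm{V}_{\mu}$ for $\mu\leq\lambda$. The reason is that $K_\lambda=K_\lambda*K_\mu^{-1}*K_\mu$, and $K_\lambda*K_\mu^{-1}=(1-\lambda^{-\sigma}\Delta)^{-1}(1-\mu^{-\sigma}\Delta)$ is a finite measure of uniformly bounded mass when $\mu\leq\lambda$. If the scales are oriented the other way, we instead integrate in the opposite direction.

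The integration step then splits on the sign of $\deg(i,m)$.

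If $\deg(i,m)>0$, we integrate the flow from the boundary where the coefficient vanishes. Here $\lambda=\infty$ is not such a boundary in general; rather, $V^{i,m}_0$ is given by the bare potential, which vanishes for these $(i,m)$, so we integrate from $\lambda=0$. This gives $\snnorm{V^{i,m}_\lambda}_\lambda\lesssim\int_0^\lambda \mu^{\deg(i,m)/\sigma-1}\,\dint\mu\lesssim\lambda^{\deg(i,m)/\sigma}$, which converges because the exponent exceeds $-1$.

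If $\deg(i,m)\leq0$, the only possibilities under subcriticality are $m=0$ and $m=2$. For these we integrate from $\lambda=\infty$, where the boundary data $\mfa^i_\eps,\mfb^i_\eps$ are chosen as the BPHZ condition. For $m=0$, the quantity $V^{i,0}$ is a constant, and $\int_\lambda^\infty\mu^{\deg/\sigma-1}\,\dint\mu$ converges when $\deg<0$. Strictly, the supports of $\dot G_\mu$ make the flow vanish for $\mu$ larger than the torus scale, so the integral is in fact finite.

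The main obstacle is the case $m=2$, with $\deg(i,2)\in(-\sigma,0]$. There the constant part of $V^{i,2}$ is fixed at $\infty$, but the non-local remainder must be controlled from $0$. To handle it, we apply the Taylor lemma \eqref{eq:Taylor} to $\D_\mu V^{i,2}_\mu$ and split it into a local part $(\I\,\D_\mu V^{i,2}_\mu)\,\delta$ and a remainder $\sum_{|k|=2}\D^k\L^k(\D_\mu V^{i,2}_\mu)$. The local part is integrated from $\infty$ and gives $\lambda^{\deg/\sigma}$ when $\deg<0$; in the marginal case one has to check that $\deg$ never equals $0$ under the assumptions. For the remainder, the two derivatives are absorbed by $K_\lambda$, with $\|\D^k K_\lambda\|$ of order $\lambda^{2/\sigma}$ relative to $K_\lambda$. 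The factor $X^k$ in $\L^k$ costs $\mu^{2/\sigma}$, because the support of $\dot G_\mu$ and the inductive structure localise the arguments at distance of order $\mu^{1/\sigma}$. This localisation is the delicate point, and I would mimic \cite[Thm.~8.8]{Duc22} to obtain $\snnorm{X^k\D_\mu V^{i,2}_\mu}_\mu\lesssim\mu^{2/\sigma}\mu^{\deg/\sigma-1}$. The remainder is then integrated from $0$, which gives $\lambda^{-2/\sigma}\int_0^\lambda\mu^{(\deg+2)/\sigma-1}\,\dint\mu\lesssim\lambda^{\deg/\sigma}$ because $\deg+2>0$.

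Finally, uniformity in $\eps$ holds because none of the constants used above depends on $\eps$: the bounds on $\dot G_\lambda$ are uniform in the mollification, and the BPHZ choice absorbs all divergent constants.
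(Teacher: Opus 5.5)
The overall structure of your proof matches the paper's. This includes the induction (first in $i$, then decreasing $m$), the power counting showing that both the $A_\lambda$ and $B_\lambda$ contributions are of order $\lambda^{\deg(i,m)/\sigma-1}$, integration from $0$ for irrelevant coefficients, and integration from $\infty$ for $m=0$ and for the local part of $m=2$. Your monotonicity remark is what the paper uses implicitly. The hedge about ``integrating in the opposite direction'' is unnecessary, because the terms integrated from $\infty$ are local and their norm does not depend on $\lambda$.

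\textbf{The gap: the non-local part of the case $m=2$.} You Taylor-expand the bare coefficient $U=\partial_\mu V^{i,2}_\mu$ and then implicitly use $\snnorm{\partial^k\L^kU}_\lambda\lesssim\lambda^{-2/\sigma}\snnorm{X^kU}_\mu$. This inequality is not available.
\begin{itemize}
\item The factor $\lambda^{-2/\sigma}$ comes from putting $\partial^k$ on $K_\lambda^{\otimes 2}$, but this consumes the smoothing. In the relative variable $K_\lambda^{\otimes2}$ has order $-4$ and $\partial^k$ has order $2$.
\item Passing to $\snnorm{X^kU}_\mu=\|K_\mu^{\otimes2}*X^kU\|_{\mcV^2}$ requires inverting $K_\mu^{\otimes2}$ (conjugated by the dilation inside $\L^k$), which has order $4$. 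The composite operator therefore has order $2$ and is not bounded on $\mcV^2$.
\item There is no sense in which $\partial^kK_\lambda$ is ``of order $\lambda^{-2/\sigma}$ relative to $K_\lambda$''.
\end{itemize}
What you can honestly get is $\lambda^{-2/\sigma}\|X^kU\|_{\mcV^2}$ with the bare norm. However, your induction hypothesis never controls bare norms. The coefficients themselves can have bare norms that are not uniform in $\eps$ (for instance the renormalised sunset contribution to $V^{2,2}_\mu$ when it is divergent). A bare bound on $X^k\partial_\mu V^{i,2}_\mu$ would thus need a separate argument that you do not supply.

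\textbf{How the paper closes this step.} It Taylor-expands a smeared object instead, writing $\partial_sV^{i,2}_s=K_s^{\otimes2}*\partial_sV^{i,2}_s+(\mathrm{Id}-K_s^{\otimes2})*\partial_sV^{i,2}_s$.
\begin{itemize}
\item The Taylor formula is applied only to the first piece, whose bare $\mcV^2$ norm is exactly $\snnorm{\partial_sV^{i,2}_s}_s$.
\item Its local part produces the same counterterm as yours, since $K_s$ has unit mass.
\item Its remainder is bounded by $\lambda^{-2/\sigma}\|X^k(K_s^{\otimes2}*\partial_sV^{i,2}_s)\|_{\mcV^2}\lesssim\lambda^{-2/\sigma}s^{2/\sigma}\snnorm{\partial_sV^{i,2}_s}_s$. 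Here the localisation estimate, which relies on the compact support of $\dot G_s$, is applied to the smeared kernel rather than the bare one.
\item The price is the commutator piece. It is controlled by $\|K_\lambda^{\otimes2}*(K_s^{\otimes2}-\mathrm{Id})*U\|_{\mcV^2}\lesssim(s/\lambda)^{2/\sigma}\|K_s^{\otimes2}*U\|_{\mcV^2}$ for $s\le\lambda$ \cite[Lemma 4.5]{Duc22}. After integration from $0$ this again gives $\lambda^{\deg(i,2)/\sigma}$.
\end{itemize}
Without this splitting and the commutator estimate, your treatment of $m=2$ does not close.
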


\begin{remark}
  Before turning to the proof of the result, that works with an inductive procedure, we would like to mention that this result could be alternatively proved using Hepp sectors techniques as it is done in 
  \cite{BPHZ_theorem}. Taking the kernel $G-G_\lambda$ instead of $G$ only adds more constraints on the Hepp sectors but it is not hard to see that the proof therein can be modified to accommodate this difference.
\end{remark}

\begin{proof}
  The result is proved by induction on the couple $(i,m)$.\\
  \textit{Base case}. The induction is started by the couple $(1,4)$. We have, using that $K_\lambda$ integrates to $1$,
  \begin{equs}
    \snnorm{V_\lambda^{1,4}}_\lambda&=\sup_{x_1}\int_{\Lambda^4}\prod_{i=1}^4 K_\lambda(x_i-y)\,\dint y\dint x_2\dint x_3\dint x_4=1\,,
  \end{equs}
  which is the desired result.\\
  \textit{Induction}. Let us start with the \textit{irrelevant} cumulants, \ie the ones such that $\deg(i,m)>0$.
  We have, using the flow equation on the coefficients,
  \begin{equs}
    \nnorm{V_\lambda^{i,m}}_\lambda&\lesssim\int_0^\lambda\nnorm{\D_s V_s^{i,m}}_s\,\dint s\lesssim \int_0^\lambda s^{-\frac{d}{\sigma}}s^{\frac{\deg(i,m)+d-\sigma}{\sigma}}
    +\sum_{\scriptscriptstyle{\substack{i_1+i_2=i\\m_1+m_2=m+2}}}s^{\frac{\deg(i,m)+d-\sigma-d}{\sigma}}\,\dint s\\
    &\lesssim \int_0^\lambda s^{-1+\frac{\deg(i,m)}{\sigma}}\,\dint s\lesssim \lambda^{\frac{\deg(i,m)}{\sigma}}\,.
  \end{equs}
  We recall that the integrand of all the integrals in this proof vanish for large $s$ thanks to the support properties of $\dot G_\lambda$, so we do not check the integrability in $\infty$. We now deal with the \textit{relevant} coefficients, \ie the ones with $\deg(i,m)\leq0$. Note that this imposes $m=0$ or $m=2$. The terms with $m=0$, the vacuum diagrams, are easily tackled. Indeed we have
  \begin{equs}
    V_\lambda^{i,0}=\mfb_\eps^i-\int_0^\lambda\D_sV_s^{i,0}\,\dint s\,,
  \end{equs}
  so that choosing
  \begin{equs}\label{eq:choiceb}
    \mfb_\eps^i=\int_0^\infty \D_sV_s^{i,0}\,\dint s\,,
  \end{equs}
  we get
  \begin{equs}
    |V_\lambda^{i,0}|\leq \int_\lambda^\infty|\D_sV_s^{i,0}|\,\dint s\lesssim \int_\lambda^\infty s^{\frac{\deg(i,0)}{\sigma}-1}\,\dint s\lesssim \lambda^{\frac{\deg(i,0)}{\sigma}}\,.
  \end{equs}
  For $m=2$, the idea of proof is similar but somewhat more involved. We need to make use of the Taylor expansion \eqref{eq:Taylor} to localize the terms. We write
  \begin{multline*}
    V_\lambda^{i,2}=\left(\mfa_\eps^i-\int_0^\lambda \I(K_s^{\otimes2}*\D_s V_s^{i,2})\,\dint s\right)\delta +\int_0^\lambda (K_s^{\otimes2}-\mathrm{Id})*\D_s V_s^{i,2}\,\dint s\\-\int_0^\lambda \sum_{|k|=2} \D^k (\L^kK_s^{\otimes 2}*V_s^{i,2})\,\dint s\,.
  \end{multline*}
  We can choose the renormalisation constant. We set
  \begin{equs}\label{eq:choicea}
    \mfa_\eps^i=\int_0^\infty \I(K_s^{\otimes2}*\D_s V_s^{i,2})\,\dint s\,.
  \end{equs}
  We can then bound
  \begin{equs}
    \int_\lambda^\infty|\I(K_s^{\otimes2}*\D_s V_s^{i,2})|\,\dint s\leq\int_\lambda^\infty \snnorm{\D_s V_s^{i,2}}_s\,\dint s\leq \int_\lambda^\infty s^{\frac{\deg(i,2)}{\sigma}-1}\,\dint s\lesssim \lambda^{\frac{\deg(i,2)}{\sigma}}\,.
  \end{equs}
  For the second term, we need the following identity, whose proof can be found in \cite[Lemma 4.5]{Duc22}.
  \begin{equs}
    \|K_\lambda^{\otimes 2}*(K_s^{\otimes 2}-\mathrm{Id})*U\|_{\mcV^2}\lesssim (s/\lambda)^{2/\sigma}\|K_s^{\otimes 2}*U\|_{\mcV^2}\,.
  \end{equs}
  We thus have
  \begin{equs}
    \int_0^\lambda \nnorm{(K_s^{\otimes2}-\mathrm{Id})*\D_s V_s^{i,2}}\,\dint s&\lesssim \int_0^\lambda \|K_\lambda*(K_s^{\otimes2}-\mathrm{Id})*\D_s V_s^{i,2}\|_{\mcV^2}\,\dint s\\
    &\lesssim \lambda^{-2/\sigma}\int_0^\lambda s^{2/\sigma} \|K_s*\D_s V_s^{i,2}\|_{\mcV^2}\,\dint s\lesssim \lambda^{-2/\sigma}\int_0^\lambda s^{2/\sigma}\snnorm{V_s^{i,2}}_s\,\dint s\\
    &\lesssim \lambda^{-2/\sigma}\int_0^\lambda s^{(\deg(i,2)+2)/\sigma-1}\,\dint s\lesssim \lambda^{\frac{\deg(i,2)}{\sigma}}\,.
  \end{equs}
  Finally, for the last term we have, for some $k$ such that $|k|=2$,
  \begin{equs}
    \int_0^\lambda\snnorm{\D^k(\L^k K_s^{\otimes 2}*V_s^{i,2})}_s\,\dint s&=\int_0^\lambda \|K_\lambda*\D^k(\L^k K_s^{\otimes 2}*V_s^{i,2})\|_{\mcV^2}\,\dint s\\
    &\lesssim \lambda^{2/\sigma}\int_0^\lambda \|X^k(K_s^{\otimes 2}*V_s^{i,2})\|_{\mcV^2}\,\dint s\,.
  \end{equs}
  Using Lemma 5.6 in \cite{Duc23}, we have 
  \begin{equs}
    \|X^k(K_s^{\otimes 2}*V_s^{i,2})\|_{\mcV^2}\lesssim s^{-2}\|K_s^{\otimes 2}*V_s^{i,2}\|_{\mcV^2}=\snnorm{V_s^{i,2}}_s\,.
  \end{equs}
  Note that this latter identity uses crucially the support properties of $\dot G_\lambda$. Otherwise we would need to use other techniques, such as the generalised force coefficents or weighted norms. It allows to conclude in the same way as the previous case.
\end{proof}

\section{Equivalence proof}

This section is devoted to the proof of the main results in Theorem \ref{maintheorem} and Corollary \ref{corollary:renormmeasure} and is the core of the present paper. In Section \ref{sec:addnotions}, we introduce additional material on Feynman diagrams, especially all the coefficients that appear in the ansatz for the renormalised potential. In Section \ref{sec:algeop} we introduce several algebraic operations on Feynman diagrams, and we prove that they have nice morphism properties with the renormalisation procedure $\hat\Pi_\lambda$. We then introduce in Section \ref{sec:taylor} a combinatorial counterpart of the Taylor expansion of the legs of a Feynman diagram, and we show that it also has a morphism property with the renormalisation procedure. This allows to identify the character $g$ as the BPHZ character $g_{\scriptscriptstyle\mathrm{BPHZ}}$. In Section \ref{sec:forestduality}, we state and prove the duality relation that is one of the keys of the proof of the main theorem, that can be found in Section \ref{sec:lastsection}.

\subsection{Additional notions on Feynman diagrams}\label{sec:addnotions}

We define, for a Feynman diagram $\Gamma\in F$ its \textit{vacuum diagram} as well as its \textit{residue}. The first one, denoted $\mathrm{vac}(\Gamma)$ is simply the diagram whose set of vertices and edges are respectively $\mcV_\star$ and $\mcE_\star$. In other words, it is the biggest subdiagram that we can extract from $\Gamma$. The residue of $\Gamma$, denoted $\mathrm{res}(\Gamma)$ is the diagram obtained by extracting $\mathrm{vac}(\Gamma)$ and a contraction of the remaining edges (in that case the legs) to one point. Moreover, given a map $\ell\colon\Lab\to\N^d$, we define $\mathrm{res}_\ell(\Gamma)$ similarly as $\mathrm{res}(\Gamma)$ but with decoration $\mfe|_\Lab+\ell$. As a drawing is better than many words we give an example.
\begin{example}
  We have
  \begin{equs}
    \mathrm{vac}\Big(\begin{tikzpicture}[scale=0.15,baseline=-0.1cm, trim left=-0.4cm, trim right=1cm]
			\node at (0,0)  [dot,label= {[label distance=-0.2em]below: \scriptsize  $      $} ] (center) {};
			\node at (-3,3)  [label= {[label distance=-0.2em]right: \scriptsize  $     $} ] (topleft) {};
			\node at (-3,-3)  [label= {[label distance=-0.2em]right: \scriptsize  $     $} ] (botleft) {};
      \node at (4,0)  [dot,label= {[label distance=-0.2em]right: \scriptsize  $     $} ] (centerright) {};
      \node at (7,3)  [label= {[label distance=-0.2em]right: \scriptsize  $     $} ] (leg1) {};
      \node at (7,-3)  [label= {[label distance=-0.2em]right: \scriptsize  $     $} ] (leg2) {};
      \draw[arrow] (center) .. controls (1,1) and (3,1) .. (centerright);
      \draw[arrow] (center) .. controls (1,-1) and (3,-1) .. (centerright);
      \draw[kernel1,color=blue] (center) to
			node [sloped,below] {\small }     (topleft);
      \draw[kernel1,color=blue] (center) to
			node [sloped,below] {\small }     (botleft);
      \draw[kernel1,color=blue] (centerright) to
			node [sloped,below] {\small }     (leg1);
      \draw[kernel1,color=blue] (centerright) to
			node [sloped,below] {\small }     (leg2);
    \end{tikzpicture}\Big)=\begin{tikzpicture}[scale=0.15,baseline=-0.1cm, trim left=-0.4cm, trim right=1cm]
			\node at (0,0)  [dot,label= {[label distance=-0.2em]below: \scriptsize  $      $} ] (center) {};
      \node at (4,0)  [dot,label= {[label distance=-0.2em]right: \scriptsize  $     $} ] (centerright) {};
      \draw[arrow] (center) .. controls (1,1) and (3,1) .. (centerright);
      \draw[arrow] (center) .. controls (1,-1) and (3,-1) .. (centerright);
    \end{tikzpicture},\mathrm{and}\quad
    \mathrm{res}_\ell\Big(\begin{tikzpicture}[scale=0.15,baseline=-0.1cm, trim left=-0.4cm, trim right=1cm]
			\node at (0,0)  [dot,label= {[label distance=-0.2em]below: \scriptsize  $      $} ] (center) {};
			\node at (-3,3)  [label= {[label distance=-0.2em]right: \scriptsize  $     $} ] (topleft) {};
			\node at (-3,-3)  [label= {[label distance=-0.2em]right: \scriptsize  $     $} ] (botleft) {};
      \node at (4,0)  [dot,label= {[label distance=-0.2em]right: \scriptsize  $     $} ] (centerright) {};
      \node at (7,3)  [label= {[label distance=-0.2em]right: \scriptsize  $     $} ] (leg1) {};
      \node at (7,-3)  [label= {[label distance=-0.2em]right: \scriptsize  $     $} ] (leg2) {};
      \draw[arrow] (center) .. controls (1,1) and (3,1) .. (centerright);
      \draw[arrow] (center) .. controls (1,-1) and (3,-1) .. (centerright);
      \draw[kernel1,color=blue] (center) to
			node [sloped,below] {\small }     (topleft);
      \draw[kernel1,color=blue] (center) to
			node [sloped,below] {\small }     (botleft);
      \draw[kernel1,color=blue] (centerright) to
			node [sloped,below] {\small }     (leg1);
      \draw[kernel1,color=blue] (centerright) to
			node [sloped,below] {\small }     (leg2);
    \end{tikzpicture}\Big)=\begin{tikzpicture}[scale=0.2,baseline=-0.1cm, trim right=0.6cm, trim left=-0.6cm]
			\node at (0,0)  [dot,label= {[label distance=-0.2em]below: \scriptsize  $      $} ] (center) {};
			\node at (-3,3)  [label= {[label distance=-0.2em]right: \scriptsize  $     $} ] (topleft) {};
			\node at (-3,-3)  [label= {[label distance=-0.2em]right: \scriptsize  $     $} ] (botleft) {};
			\node at (3,3)  [label= {[label distance=-0.2em]right: \scriptsize  $     $} ] (topright) {};
      \node at (3,-3)  [label= {[label distance=-0.2em]right: \scriptsize  $     $} ] (botright) {};
			\draw[kernel1,color=blue] (center) to
			node [sloped,below] {\small }     (topright);
      \draw[kernel1,color=blue] (center) to
			node [sloped,below] {\small }     (topleft);
      \draw[kernel1,color=blue] (center) to
			node [sloped,below] {\small }     (botright);
      \draw[kernel1,color=blue] (center) to
			node [sloped,below] {\small }     (botleft);
      \node at (1.2,1.2)  [circle,fill=white, scale=0.4] {$\ell_3$};
      \node at (1.2,-1.2)  [circle,fill=white, scale=0.4] {$\ell_4$};
      \node at (-1.2,-1.2)  [circle,fill=white, scale=0.4] {$\ell_2$};
      \node at (-1.2,1.2)  [circle,fill=white, scale=0.4] {$\ell_1$};
    \end{tikzpicture}.
  \end{equs}
\end{example}
Additionally, we will need for our ansatz a notion of rule, \ie given a theory, which set of graphs we are allowed to consider. To do that, we consider a set of elementary graphs $E^\star\subset F_0$, that is a set of pure legs Feynman diagrams with only one vertex. For example for the $\phi^4$ it would contain only the graph with one vertex and four legs with null decoration. From this (generally small) set of diagrams, we build a bigger set $F^\star\subset F_0$ by taking all the diagrams we can get by pairing the legs (in the sense of the operations $\graft_{v_1,v_2}$ and $\circlearrowleft_{v_1,v_2}$ defined in the next section) of any number of elementary diagrams. Again with the $\phi^4$ theory, this would be the set of all diagrams with arity $4$ on every vertex.

For $\Gamma\in F$, we define its symmetry factor $S(\Gamma)$ as the cardinal of the group of its automorphisms $\mathrm{Aut}(\Gamma)$. More precisely, it is a permutation of its vertices and half-edges that transforms $\Gamma$ in a Feynman diagram isomorphic to $\Gamma$ and that keeps the endpoints of all its edges. 
\begin{example}
  For elementary diagrams, the symmetry factor is simply given by the permutations of the legs. We have
  \begin{equs}
    S\Big(\begin{tikzpicture}[scale=0.15,baseline=-0.1cm, trim right=0.4cm, trim left=-0.4cm]
			\node at (0,0)  [dot,label= {[label distance=-0.2em]below: \scriptsize  $      $} ] (center) {};
			\node at (-3,3)  [label= {[label distance=-0.2em]right: \scriptsize  $     $} ] (topleft) {};
			\node at (-3,-3)  [label= {[label distance=-0.2em]right: \scriptsize  $     $} ] (botleft) {};
			\node at (3,3)  [label= {[label distance=-0.2em]right: \scriptsize  $     $} ] (topright) {};
      \node at (3,-3)  [label= {[label distance=-0.2em]right: \scriptsize  $     $} ] (botright) {};
			\draw[kernel1,color=blue] (center) to
			node [sloped,below] {\small }     (topright);
      \draw[kernel1,color=blue] (center) to
			node [sloped,below] {\small }     (topleft);
      \draw[kernel1,color=blue] (center) to
			node [sloped,below] {\small }     (botright);
      \draw[kernel1,color=blue] (center) to
			node [sloped,below] {\small }     (botleft);
    \end{tikzpicture}\Big)=4!\,,\quad S\Big(\begin{tikzpicture}[scale=0.2,baseline=-0.1cm, trim right=0.6cm, trim left=-0.6cm]
			\node at (0,0)  [dot,label= {[label distance=-0.2em]below: \scriptsize  $      $} ] (center) {};
			\node at (-3,3)  [label= {[label distance=-0.2em]right: \scriptsize  $     $} ] (topleft) {};
			\node at (-3,-3)  [label= {[label distance=-0.2em]right: \scriptsize  $     $} ] (botleft) {};
			\node at (3,3)  [label= {[label distance=-0.2em]right: \scriptsize  $     $} ] (topright) {};
      \node at (3,-3)  [label= {[label distance=-0.2em]right: \scriptsize  $     $} ] (botright) {};
			\draw[kernel1,color=blue] (center) to
			node [sloped,below] {\small }     (topright);
      \draw[kernel1,color=blue] (center) to
			node [sloped,below] {\small }     (topleft);
      \draw[kernel1,color=blue] (center) to
			node [sloped,below] {\small }     (botright);
      \draw[kernel1,color=blue] (center) to
			node [sloped,below] {\small }     (botleft);
      \node at (1.2,1.2)  [circle,fill=white, scale=0.5] {$1$};
      \node at (1.2,-1.2)  [circle,fill=white, scale=0.5] {$1$};
    \end{tikzpicture}\Big)=(2!)^2\,.
  \end{equs}
  The symmetry factors take the decorations into account, like so
  \begin{equs}
    S\Big(\begin{tikzpicture}[scale=0.15,baseline=-0.1cm, trim left=-0.4cm, trim right=1cm]
			\node at (0,0)  [dot,label= {[label distance=-0.2em]below: \scriptsize  $      $} ] (center) {};
			\node at (-3,3)  [label= {[label distance=-0.2em]right: \scriptsize  $     $} ] (topleft) {};
			\node at (-3,-3)  [label= {[label distance=-0.2em]right: \scriptsize  $     $} ] (botleft) {};
      \node at (4,0)  [dot,label= {[label distance=-0.2em]right: \scriptsize  $     $} ] (centerright) {};
      \node at (7,3)  [label= {[label distance=-0.2em]right: \scriptsize  $     $} ] (leg1) {};
      \node at (7,-3)  [label= {[label distance=-0.2em]right: \scriptsize  $     $} ] (leg2) {};
      \draw[arrow, color=red] (center) .. controls (1,1) and (3,1) .. (centerright);
      \draw[arrow] (center) .. controls (1,-1) and (3,-1) .. (centerright);
      \draw[kernel1,color=blue] (center) to
			node [sloped,below] {\small }     (topleft);
      \draw[kernel1,color=blue] (center) to
			node [sloped,below] {\small }     (botleft);
      \draw[kernel1,color=blue] (centerright) to
			node [sloped,below] {\small }     (leg1);
      \draw[kernel1,color=blue] (centerright) to
			node [sloped,below] {\small }     (leg2);
    \end{tikzpicture}\Big)=2^2\,,\quad S\Big(\begin{tikzpicture}[scale=0.15,baseline=-0.1cm, trim left=-0.4cm, trim right=1cm]
			\node at (0,0)  [dot,label= {[label distance=-0.2em]below: \scriptsize  $      $} ] (center) {};
			\node at (-3,3)  [label= {[label distance=-0.2em]right: \scriptsize  $     $} ] (topleft) {};
			\node at (-3,-3)  [label= {[label distance=-0.2em]right: \scriptsize  $     $} ] (botleft) {};
      \node at (4,0)  [dot,label= {[label distance=-0.2em]right: \scriptsize  $     $} ] (centerright) {};
      \node at (7,3)  [label= {[label distance=-0.2em]right: \scriptsize  $     $} ] (leg1) {};
      \node at (7,-3)  [label= {[label distance=-0.2em]right: \scriptsize  $     $} ] (leg2) {};
      \draw[arrow] (center) .. controls (1,1) and (3,1) .. (centerright);
      \draw[arrow] (center) .. controls (1,-1) and (3,-1) .. (centerright);
      \draw[kernel1,color=blue] (center) to
			node [sloped,below] {\small }     (topleft);
      \draw[kernel1,color=blue] (center) to
			node [sloped,below] {\small }     (botleft);
      \draw[kernel1,color=blue] (centerright) to
			node [sloped,below] {\small }     (leg1);
      \draw[kernel1,color=blue] (centerright) to
			node [sloped,below] {\small }     (leg2);
    \end{tikzpicture}\Big)=2^3\,.
  \end{equs}
\end{example}
If $(\Gamma,\mfn)$ is a vacuum diagram, then we set $S(\Gamma,\mfn)=S(\Gamma)\mfn!$.

For a Feynman diagram and a given vertex $v\in\mcV_\star$, we define $\mathrm{elem}(v)\in E^\star$ as the elementary diagram whose leg decoration is given by the adjacent edge decorations of $v$, either incoming or outgoing.
\begin{example}
  Let us provide a simple example for the sake of clarity.
  \begin{equs}
    \mathrm{elem}(v_1)\Big(\begin{tikzpicture}[scale=0.15,baseline=-0.1cm, trim right=1.1cm, trim left=-0.5cm]
			\node at (0,0)  [dot,label= {[label distance=-0.2em]below: \scriptsize  $      $} ] (center) {};
			\node at (-4,0)  [label= {[label distance=-0.2em]right: \scriptsize  $     $} ] (topleft) {};
			\node at (0,4)  [label= {[label distance=-0.2em]right: \scriptsize  $     $} ] (botleft) {};
			\node at (0,-4)  [label= {[label distance=-0.2em]right: \scriptsize  $     $} ] (topright) {};
      \node at (4,0)  [dot,label= {[label distance=-0.2em]right: \scriptsize  $     $} ] (centerright) {};
      \node at (4,4)  [label= {[label distance=-0.2em]right: \scriptsize  $     $} ] (leg1) {};
      \node at (8,0)  [label= {[label distance=-0.2em]right: \scriptsize  $     $} ] (leg2) {};
      \node at (4,-4)  [label= {[label distance=-0.2em]right: \scriptsize  $     $} ] (leg3) {};
      \draw[arrow] (center) to
			node [sloped,below] {\small }     (centerright);
			\draw[kernel1,color=blue] (center) to
			node [sloped,below] {\small }     (topright);
      \draw[kernel1,color=blue] (center) to
			node [sloped,below] {\small }     (topleft);
      \draw[kernel1,color=blue] (center) to
			node [sloped,below] {\small }     (botleft);
      \draw[kernel1,color=blue] (centerright) to
			node [sloped,below] {\small }     (leg1);
      \draw[kernel1,color=blue] (centerright) to
			node [sloped,below] {\small }     (leg2);
      \draw[kernel1,color=blue] (centerright) to
			node [sloped,below] {\small }     (leg3);
      \node at (3,1.25)  [fill=white, scale=0.5] {$2$};
      \node at (1,1.25)  [fill=white, scale=0.5] {$1$};
    \end{tikzpicture}\Big)=\begin{tikzpicture}[scale=0.2,baseline=-0.1cm, trim right=0.5cm, trim left=-0.5cm]
			\node at (0,0)  [dot,label= {[label distance=-0.2em]below: \scriptsize  $      $} ] (center) {};
			\node at (-3,3)  [label= {[label distance=-0.2em]right: \scriptsize  $     $} ] (topleft) {};
			\node at (-3,-3)  [label= {[label distance=-0.2em]right: \scriptsize  $     $} ] (botleft) {};
			\node at (3,3)  [label= {[label distance=-0.2em]right: \scriptsize  $     $} ] (topright) {};
      \node at (3,-3)  [label= {[label distance=-0.2em]right: \scriptsize  $     $} ] (botright) {};
			\draw[kernel1,color=blue] (center) to
			node [sloped,below] {\small }     (topright);
      \draw[kernel1,color=blue] (center) to
			node [sloped,below] {\small }     (topleft);
      \draw[kernel1,color=blue] (center) to
			node [sloped,below] {\small }     (botright);
      \draw[kernel1,color=blue] (center) to
			node [sloped,below] {\small }     (botleft);
      \node at (1.2,1.2)  [circle,fill=white, scale=0.5] {$1$};
    \end{tikzpicture},\mathrm{and}\quad\mathrm{elem}(v_2)\Big(\begin{tikzpicture}[scale=0.15,baseline=-0.1cm, trim right=1.1cm, trim left=-0.5cm]
			\node at (0,0)  [dot,label= {[label distance=-0.2em]below: \scriptsize  $      $} ] (center) {};
			\node at (-4,0)  [label= {[label distance=-0.2em]right: \scriptsize  $     $} ] (topleft) {};
			\node at (0,4)  [label= {[label distance=-0.2em]right: \scriptsize  $     $} ] (botleft) {};
			\node at (0,-4)  [label= {[label distance=-0.2em]right: \scriptsize  $     $} ] (topright) {};
      \node at (4,0)  [dot,label= {[label distance=-0.2em]right: \scriptsize  $     $} ] (centerright) {};
      \node at (4,4)  [label= {[label distance=-0.2em]right: \scriptsize  $     $} ] (leg1) {};
      \node at (8,0)  [label= {[label distance=-0.2em]right: \scriptsize  $     $} ] (leg2) {};
      \node at (4,-4)  [label= {[label distance=-0.2em]right: \scriptsize  $     $} ] (leg3) {};
      \draw[arrow] (center) to
			node [sloped,below] {\small }     (centerright);
			\draw[kernel1,color=blue] (center) to
			node [sloped,below] {\small }     (topright);
      \draw[kernel1,color=blue] (center) to
			node [sloped,below] {\small }     (topleft);
      \draw[kernel1,color=blue] (center) to
			node [sloped,below] {\small }     (botleft);
      \draw[kernel1,color=blue] (centerright) to
			node [sloped,below] {\small }     (leg1);
      \draw[kernel1,color=blue] (centerright) to
			node [sloped,below] {\small }     (leg2);
      \draw[kernel1,color=blue] (centerright) to
			node [sloped,below] {\small }     (leg3);
      \node at (3,1.25)  [fill=white, scale=0.5] {$2$};
      \node at (1,1.25)  [fill=white, scale=0.5] {$1$};
    \end{tikzpicture}\Big)=\begin{tikzpicture}[scale=0.2,baseline=-0.1cm, trim right=0.5cm, trim left=-0.5cm]
			\node at (0,0)  [dot,label= {[label distance=-0.2em]below: \scriptsize  $      $} ] (center) {};
			\node at (-3,3)  [label= {[label distance=-0.2em]right: \scriptsize  $     $} ] (topleft) {};
			\node at (-3,-3)  [label= {[label distance=-0.2em]right: \scriptsize  $     $} ] (botleft) {};
			\node at (3,3)  [label= {[label distance=-0.2em]right: \scriptsize  $     $} ] (topright) {};
      \node at (3,-3)  [label= {[label distance=-0.2em]right: \scriptsize  $     $} ] (botright) {};
			\draw[kernel1,color=blue] (center) to
			node [sloped,below] {\small }     (topright);
      \draw[kernel1,color=blue] (center) to
			node [sloped,below] {\small }     (topleft);
      \draw[kernel1,color=blue] (center) to
			node [sloped,below] {\small }     (botright);
      \draw[kernel1,color=blue] (center) to
			node [sloped,below] {\small }     (botleft);
      \node at (1.2,1.2)  [circle,fill=white, scale=0.5] {$2$};
    \end{tikzpicture}.
  \end{equs}
  We have denoted by $v_1$ and $v_2$ the left and right vertices respectively.
\end{example}
Then, for $\mathbf\Gamma\in \mfF$, we define its graph factorial $\mathbf\Gamma!$ in the following way.
\begin{equation}
  \mathbf\Gamma!=\prod_{v\in\mcV_\star}S\big(\mathrm{elem}(v)\big)\,.
\end{equation}
In that case we see the forest as a not necessarily connected Feynman diagram. Another way to see this quantity is that it is the product of the symmetry factors of the elementary diagrams of which its composed. Note that for elementary diagrams, $S(\Gamma)=\Gamma!$.

Moreover, we define for any elementary diagram of $E^\star$ an associated coefficient $\alpha\colon E^\star\to\R$. This allows to write the \textit{unrenormalised} potential under the form
\begin{equation}
  V[\phi]=\sum_{\Gamma\in E^\star}\alpha(\Gamma)(\Pi_\lambda\Gamma)[\phi]\,.
\end{equation}
For a more general diagram $\Gamma\in F^\star$, we define 
\begin{equation}
  \alpha(\Gamma)=\prod_{v\in\mcV_\star}\alpha\big(\mathrm{elem}(v)\big)\,.
\end{equation}

\begin{example}
  We have
  \begin{equs}
    \begin{tikzpicture}[scale=0.15,baseline=-0.1cm, trim right=1.1cm, trim left=-0.5cm]
			\node at (0,0)  [dot,label= {[label distance=-0.2em]below: \scriptsize  $      $} ] (center) {};
			\node at (-4,0)  [label= {[label distance=-0.2em]right: \scriptsize  $     $} ] (topleft) {};
			\node at (0,4)  [label= {[label distance=-0.2em]right: \scriptsize  $     $} ] (botleft) {};
			\node at (0,-4)  [label= {[label distance=-0.2em]right: \scriptsize  $     $} ] (topright) {};
      \node at (4,0)  [dot,label= {[label distance=-0.2em]right: \scriptsize  $     $} ] (centerright) {};
      \node at (4,4)  [label= {[label distance=-0.2em]right: \scriptsize  $     $} ] (leg1) {};
      \node at (8,0)  [label= {[label distance=-0.2em]right: \scriptsize  $     $} ] (leg2) {};
      \node at (4,-4)  [label= {[label distance=-0.2em]right: \scriptsize  $     $} ] (leg3) {};
      \draw[arrow] (center) to
			node [sloped,below] {\small }     (centerright);
			\draw[kernel1,color=blue] (center) to
			node [sloped,below] {\small }     (topright);
      \draw[kernel1,color=blue] (center) to
			node [sloped,below] {\small }     (topleft);
      \draw[kernel1,color=blue] (center) to
			node [sloped,below] {\small }     (botleft);
      \draw[kernel1,color=blue] (centerright) to
			node [sloped,below] {\small }     (leg1);
      \draw[kernel1,color=blue] (centerright) to
			node [sloped,below] {\small }     (leg2);
      \draw[kernel1,color=blue] (centerright) to
			node [sloped,below] {\small }     (leg3);
    \end{tikzpicture}!=(4!)^2\,,\mathrm{and}\quad\alpha\Big(\begin{tikzpicture}[scale=0.15,baseline=-0.1cm, trim right=1.1cm, trim left=-0.5cm]
			\node at (0,0)  [dot,label= {[label distance=-0.2em]below: \scriptsize  $      $} ] (center) {};
			\node at (-4,0)  [label= {[label distance=-0.2em]right: \scriptsize  $     $} ] (topleft) {};
			\node at (0,4)  [label= {[label distance=-0.2em]right: \scriptsize  $     $} ] (botleft) {};
			\node at (0,-4)  [label= {[label distance=-0.2em]right: \scriptsize  $     $} ] (topright) {};
      \node at (4,0)  [dot,label= {[label distance=-0.2em]right: \scriptsize  $     $} ] (centerright) {};
      \node at (4,4)  [label= {[label distance=-0.2em]right: \scriptsize  $     $} ] (leg1) {};
      \node at (8,0)  [label= {[label distance=-0.2em]right: \scriptsize  $     $} ] (leg2) {};
      \node at (4,-4)  [label= {[label distance=-0.2em]right: \scriptsize  $     $} ] (leg3) {};
      \draw[arrow] (center) to
			node [sloped,below] {\small }     (centerright);
			\draw[kernel1,color=blue] (center) to
			node [sloped,below] {\small }     (topright);
      \draw[kernel1,color=blue] (center) to
			node [sloped,below] {\small }     (topleft);
      \draw[kernel1,color=blue] (center) to
			node [sloped,below] {\small }     (botleft);
      \draw[kernel1,color=blue] (centerright) to
			node [sloped,below] {\small }     (leg1);
      \draw[kernel1,color=blue] (centerright) to
			node [sloped,below] {\small }     (leg2);
      \draw[kernel1,color=blue] (centerright) to
			node [sloped,below] {\small }     (leg3);
    \end{tikzpicture}\Big)=\alpha\Big(\begin{tikzpicture}[scale=0.15,baseline=-0.1cm, trim right=0.4cm, trim left=-0.4cm]
			\node at (0,0)  [dot,label= {[label distance=-0.2em]below: \scriptsize  $      $} ] (center) {};
			\node at (-3,3)  [label= {[label distance=-0.2em]right: \scriptsize  $     $} ] (topleft) {};
			\node at (-3,-3)  [label= {[label distance=-0.2em]right: \scriptsize  $     $} ] (botleft) {};
			\node at (3,3)  [label= {[label distance=-0.2em]right: \scriptsize  $     $} ] (topright) {};
      \node at (3,-3)  [label= {[label distance=-0.2em]right: \scriptsize  $     $} ] (botright) {};
			\draw[kernel1,color=blue] (center) to
			node [sloped,below] {\small }     (topright);
      \draw[kernel1,color=blue] (center) to
			node [sloped,below] {\small }     (topleft);
      \draw[kernel1,color=blue] (center) to
			node [sloped,below] {\small }     (botright);
      \draw[kernel1,color=blue] (center) to
			node [sloped,below] {\small }     (botleft);
    \end{tikzpicture}\Big)^2\,.
  \end{equs}
\end{example}

Finally, we say that a diagram $\Gamma\in\ F_1$ is one-particle-irreducible (1PI) if removing the red edge yields a connected diagram. Otherwise we say that it is one-particle-reducible (1PR).

\subsection{Algebraic operations on diagrams and morphism properties}\label{sec:algeop}

We introduce in this section three algebraic operations on Feynman diagrams, $\uparrow$, $\rightarrow$, and $\circlearrowleft$ that are the combinatorial counterpart of analytic operations. We show that they have nice morphism properties with respect to the renormalised evaluation map $\hat\Pi_\lambda$.

\subsubsection{Commutation with the derivative}

The first operation is the counterpart of the derivative in $\lambda$. We show that it commutes with $\hat\Pi_\lambda$.

\begin{definition}
We define a map $\uparrow\colon F_0\to\langle F_1\rangle$ by
\begin{equation}
  \uparrow\Gamma=\sum_{e\in\CE_\star}\uparrow^e\Gamma\,,
\end{equation}
and we extend it by linearity to a map $\uparrow\colon \langle F_0\rangle\to\langle F_1\rangle$.
\end{definition}

\begin{example}
  We give a graphical example of how this map acts.
  \begin{equs}
    \uparrow\begin{tikzpicture}[scale=0.2,baseline=0.3cm,trim right=0.3cm]
			\node at (0,0)  [dot,label= {[label distance=-0.2em]below: \scriptsize  $      $} ] (root) {};
			\node at (-2.5,2)  [dot,label= {[label distance=-0.2em]right: \scriptsize  $     $} ] (center) {};
			\node at (0,4)  [dot,label= {[label distance=-0.2em]right: \scriptsize  $     $} ] (centerc) {};
			\node at (2,-1)  [,label= {[label distance=-0.2em]right: \scriptsize  $     $} ] (right) {};
      \node at (2,5)  [,label= {[label distance=-0.2em]right: \scriptsize  $     $} ] (topright) {};
      \node at (-5.5,2)  [dot,label= {[label distance=-0.2em]right: \scriptsize  $     $} ] (centerleft) {};
      \node at (-8,4)  [dot,label= {[label distance=-0.2em]right: \scriptsize  $     $} ] (topleft) {};
			\node at (-8,0)  [dot,label= {[label distance=-0.2em]right: \scriptsize  $     $} ] (botleft) {};
      \node at (-10,-1)  [label= {[label distance=-0.2em]right: \scriptsize  $     $} ] (leg1) {};
      \node at (-10,5)  [label= {[label distance=-0.2em]right: \scriptsize  $     $} ] (leg2) {};
      \draw[arrow] (center) to
			node [sloped,below] {\small }     (root);
			\draw[arrow] (center) to
			node [sloped,below] {\small }     (centerc);
			\draw[arrow] (centerc) to
			node [sloped,below] {\small }     (root);
      \draw[arrow] (centerleft) to
			node [sloped,below] {\small }     (center);
      \draw[arrow] (centerleft) to
			node [sloped,below] {\small }     (topleft);
      \draw[arrow] (centerleft) to
			node [sloped,below] {\small }     (botleft);
      \draw[arrow] (topleft) to
			node [sloped,below] {\small }     (botleft);
			\draw[kernel1,color=blue] (root) to
			node [sloped,below] {\small }     (right);
      \draw[kernel1,color=blue] (centerc) to
			node [sloped,below] {\small }     (topright);
      \draw[kernel1,color=blue] (topleft) to
			node [sloped,below] {\small }     (leg2);
      \draw[kernel1,color=blue] (botleft) to
			node [sloped,below] {\small }     (leg1);
    \end{tikzpicture}=\begin{tikzpicture}[scale=0.2,baseline=0.3cm,trim right=0.3cm]
			\node at (0,0)  [dot,label= {[label distance=-0.2em]below: \scriptsize  $      $} ] (root) {};
			\node at (-2.5,2)  [dot,label= {[label distance=-0.2em]right: \scriptsize  $     $} ] (center) {};
			\node at (0,4)  [dot,label= {[label distance=-0.2em]right: \scriptsize  $     $} ] (centerc) {};
			\node at (2,-1)  [,label= {[label distance=-0.2em]right: \scriptsize  $     $} ] (right) {};
      \node at (2,5)  [,label= {[label distance=-0.2em]right: \scriptsize  $     $} ] (topright) {};
      \node at (-5.5,2)  [dot,label= {[label distance=-0.2em]right: \scriptsize  $     $} ] (centerleft) {};
      \node at (-8,4)  [dot,label= {[label distance=-0.2em]right: \scriptsize  $     $} ] (topleft) {};
			\node at (-8,0)  [dot,label= {[label distance=-0.2em]right: \scriptsize  $     $} ] (botleft) {};
      \node at (-10,-1)  [label= {[label distance=-0.2em]right: \scriptsize  $     $} ] (leg1) {};
      \node at (-10,5)  [label= {[label distance=-0.2em]right: \scriptsize  $     $} ] (leg2) {};
      \draw[arrow] (center) to
			node [sloped,below] {\small }     (root);
			\draw[arrow] (center) to
			node [sloped,below] {\small }     (centerc);
			\draw[arrow] (centerc) to
			node [sloped,below] {\small }     (root);
      \draw[arrow, color=red] (centerleft) to
			node [sloped,below] {\small }     (center);
      \draw[arrow] (centerleft) to
			node [sloped,below] {\small }     (topleft);
      \draw[arrow] (centerleft) to
			node [sloped,below] {\small }     (botleft);
      \draw[arrow] (topleft) to
			node [sloped,below] {\small }     (botleft);
			\draw[kernel1,color=blue] (root) to
			node [sloped,below] {\small }     (right);
      \draw[kernel1,color=blue] (centerc) to
			node [sloped,below] {\small }     (topright);
      \draw[kernel1,color=blue] (topleft) to
			node [sloped,below] {\small }     (leg2);
      \draw[kernel1,color=blue] (botleft) to
			node [sloped,below] {\small }     (leg1);
    \end{tikzpicture}+2\begin{tikzpicture}[scale=0.2,baseline=0.3cm,trim right=0.3cm]
			\node at (0,0)  [dot,label= {[label distance=-0.2em]below: \scriptsize  $      $} ] (root) {};
			\node at (-2.5,2)  [dot,label= {[label distance=-0.2em]right: \scriptsize  $     $} ] (center) {};
			\node at (0,4)  [dot,label= {[label distance=-0.2em]right: \scriptsize  $     $} ] (centerc) {};
			\node at (2,-1)  [,label= {[label distance=-0.2em]right: \scriptsize  $     $} ] (right) {};
      \node at (2,5)  [,label= {[label distance=-0.2em]right: \scriptsize  $     $} ] (topright) {};
      \node at (-5.5,2)  [dot,label= {[label distance=-0.2em]right: \scriptsize  $     $} ] (centerleft) {};
      \node at (-8,4)  [dot,label= {[label distance=-0.2em]right: \scriptsize  $     $} ] (topleft) {};
			\node at (-8,0)  [dot,label= {[label distance=-0.2em]right: \scriptsize  $     $} ] (botleft) {};
      \node at (-10,-1)  [label= {[label distance=-0.2em]right: \scriptsize  $     $} ] (leg1) {};
      \node at (-10,5)  [label= {[label distance=-0.2em]right: \scriptsize  $     $} ] (leg2) {};
      \draw[arrow] (center) to
			node [sloped,below] {\small }     (root);
			\draw[arrow] (center) to
			node [sloped,below] {\small }     (centerc);
			\draw[arrow] (centerc) to
			node [sloped,below] {\small }     (root);
      \draw[arrow] (centerleft) to
			node [sloped,below] {\small }     (center);
      \draw[arrow] (centerleft) to
			node [sloped,below] {\small }     (topleft);
      \draw[arrow] (centerleft) to
			node [sloped,below] {\small }     (botleft);
      \draw[arrow, color=red] (topleft) to
			node [sloped,below] {\small }     (botleft);
			\draw[kernel1,color=blue] (root) to
			node [sloped,below] {\small }     (right);
      \draw[kernel1,color=blue] (centerc) to
			node [sloped,below] {\small }     (topright);
      \draw[kernel1,color=blue] (topleft) to
			node [sloped,below] {\small }     (leg2);
      \draw[kernel1,color=blue] (botleft) to
			node [sloped,below] {\small }     (leg1);
    \end{tikzpicture}+4\begin{tikzpicture}[scale=0.2,baseline=0.3cm,trim right=0.3cm]
			\node at (0,0)  [dot,label= {[label distance=-0.2em]below: \scriptsize  $      $} ] (root) {};
			\node at (-2.5,2)  [dot,label= {[label distance=-0.2em]right: \scriptsize  $     $} ] (center) {};
			\node at (0,4)  [dot,label= {[label distance=-0.2em]right: \scriptsize  $     $} ] (centerc) {};
			\node at (2,-1)  [,label= {[label distance=-0.2em]right: \scriptsize  $     $} ] (right) {};
      \node at (2,5)  [,label= {[label distance=-0.2em]right: \scriptsize  $     $} ] (topright) {};
      \node at (-5.5,2)  [dot,label= {[label distance=-0.2em]right: \scriptsize  $     $} ] (centerleft) {};
      \node at (-8,4)  [dot,label= {[label distance=-0.2em]right: \scriptsize  $     $} ] (topleft) {};
			\node at (-8,0)  [dot,label= {[label distance=-0.2em]right: \scriptsize  $     $} ] (botleft) {};
      \node at (-10,-1)  [label= {[label distance=-0.2em]right: \scriptsize  $     $} ] (leg1) {};
      \node at (-10,5)  [label= {[label distance=-0.2em]right: \scriptsize  $     $} ] (leg2) {};
      \draw[arrow] (center) to
			node [sloped,below] {\small }     (root);
			\draw[arrow] (center) to
			node [sloped,below] {\small }     (centerc);
			\draw[arrow] (centerc) to
			node [sloped,below] {\small }     (root);
      \draw[arrow] (centerleft) to
			node [sloped,below] {\small }     (center);
      \draw[arrow, color=red] (centerleft) to
			node [sloped,below] {\small }     (topleft);
      \draw[arrow] (centerleft) to
			node [sloped,below] {\small }     (botleft);
      \draw[arrow] (topleft) to
			node [sloped,below] {\small }     (botleft);
			\draw[kernel1,color=blue] (root) to
			node [sloped,below] {\small }     (right);
      \draw[kernel1,color=blue] (centerc) to
			node [sloped,below] {\small }     (topright);
      \draw[kernel1,color=blue] (topleft) to
			node [sloped,below] {\small }     (leg2);
      \draw[kernel1,color=blue] (botleft) to
			node [sloped,below] {\small }     (leg1);
    \end{tikzpicture}\,.
  \end{equs}
\end{example}
We have the following relation, that highlights that this red decoration represents on the combinatorial side the derivative $\D_\lambda$.
\begin{equation}
  \D_\lambda\Pi_\lambda=\Pi_\lambda\uparrow\,.
\end{equation}

\begin{lemma}\label{commut} One has on $\langle F_0\rangle$,
\begin{equation}
  \D_\lambda\hat\Pi_\lambda=\hat\Pi_\lambda\uparrow\,.
\end{equation}
\end{lemma}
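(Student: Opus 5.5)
The plan is to unfold $\hat\Pi_\lambda=(g\otimes\Pi_\lambda)\Delta$ and show that $\uparrow$ passes through the coproduct, in the sense that on $\langle F_0\rangle$
\begin{equation*}
  \Delta\uparrow\Gamma=(\mathrm{Id}\otimes\uparrow)\Delta\Gamma\,.
\end{equation*}
Granting this identity, the lemma follows directly. The character $g$ does not depend on $\lambda$; it acts only on vacuum diagrams of $\langle\mfF_-\rangle$, which never carry a red edge. Hence
\begin{equation*}
  \D_\lambda\hat\Pi_\lambda\Gamma=(g\otimes\D_\lambda\Pi_\lambda)\Delta\Gamma=(g\otimes\Pi_\lambda\uparrow)\Delta\Gamma=(g\otimes\Pi_\lambda)\Delta\uparrow\Gamma=\hat\Pi_\lambda\uparrow\Gamma\,,
\end{equation*}
where the second equality uses $\D_\lambda\Pi_\lambda=\Pi_\lambda\uparrow$ on the right tensor factor. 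Since $\Delta\Gamma$ is a finite sum for fixed $\Gamma$ (the degree constraint bounds $\ell$), differentiating term by term causes no difficulty.

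To prove the commutation identity, I would expand both sides as sums over triples. On the left, $\Delta\uparrow\Gamma=\sum_{e\in\mcE_\star}\Delta\uparrow^e\Gamma$, and each term is a sum over subgraphs $\overline\Gamma\subset\uparrow^e\Gamma$ and maps $\ell\colon\D\overline\Gamma\to\N^d$ with $\deg(\overline\Gamma,\pi\ell)\leq0$. Because edges with $\mfd=1$ cannot be extracted, the left side is a sum over triples $(e,\overline\Gamma,\ell)$ with $e\notin\overline{\mcE}$. On the right, we first choose $(\overline\Gamma,\ell)$ and then apply $\uparrow$ to $\Gamma/(\overline\Gamma,\ell)$, which means choosing an internal edge $e$ of the contracted diagram. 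These internal edges are exactly the edges of $\mcE_\star\setminus\overline{\mcE}$, possibly with decorations modified by $\ell$. One point needs care here: an edge joining two vertices of $\overline\Gamma$ that was not extracted becomes a self-loop after contraction, but it is still an internal edge. The convention that multi-edges are extracted all at once guarantees that the set of non-extracted internal edges is the same in both descriptions.

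The remaining check is that the matched terms agree term by term. First, the vacuum factor $(\overline\Gamma,\pi\ell)$ and the coefficient $1/\ell!$ are identical on both sides, since the set $\D\overline\Gamma$ of adjacent half-edges does not depend on $\mfd$. Second, the degree condition depends only on $\overline\Gamma$, its inherited decorations and $\ell$, and is therefore unaffected by the red edge. Third, contraction and adding $\ell$ to the decoration commute with switching $\mfd(e)$ to $1$, because these operations touch different decorations: $\mfe$ versus $\mfd$. Together these give $\Gamma$-wise equality $\uparrow^e\Gamma/(\overline\Gamma,\ell)=\uparrow^e\big(\Gamma/(\overline\Gamma,\ell)\big)$.

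The main obstacle I expect is bookkeeping of multiplicities, as shown by the factors $2$ and $4$ in the example. If these come from identifying isomorphic diagrams, then both sides must be compared on labelled diagrams, with edges viewed as distinct objects, before passing to isomorphism classes. I would therefore carry out the whole bijection at the level of labelled edges and half-edges, where it is a genuine bijection. Only at the end would I quotient by isomorphism, which acts compatibly on both sides.
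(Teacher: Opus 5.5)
Your proposal is correct and follows the paper's proof essentially step by step. You move $\D_\lambda$ onto the right tensor factor, write $\uparrow$ on $\Gamma/(\overline{\Gamma},\ell)$ as a sum over non-extracted edges, use $\uparrow^e\big(\Gamma/(\overline{\Gamma},\ell)\big)=(\uparrow^e\Gamma)/(\overline{\Gamma},\ell)$, and reindex the triples $(e,\overline{\Gamma},\ell)$ using that edges with $\mfd(e)=1$ cannot be extracted; the paper does the same, and it also sums over the edges of a fixed representative, so your extra care about labelled edges and multi-edges is only left implicit there.
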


\begin{proof}
We have, for $\Gamma\in F_0$,
\begin{equs}
  \D_\lambda\hat\Pi_\lambda\Gamma&=\D_\lambda(g\otimes\Pi_\lambda)\Delta\Gamma=(g\otimes\Pi_\lambda\uparrow)\Delta\Gamma\\
  &=\sum_{\overline{\Gamma}\subset\Gamma}\sum_{\ell\colon\D\overline{\Gamma}\to\N^d}\frac{1}{\ell!}g(\overline{\Gamma},\pi\ell)\Pi_\lambda\sum_{e\notin\CE_\star(\overline{\Gamma})}\uparrow^e\big(\Gamma/(\overline{\Gamma},\ell)\big)\\
  &=\sum_{\overline{\Gamma}\subset\Gamma}\sum_{\ell\colon\D\overline{\Gamma}\to\N^d}\frac{1}{\ell!}g(\overline{\Gamma},\pi\ell)\Pi_\lambda\sum_{e\notin\CE_\star(\overline{\Gamma})}(\uparrow^e\Gamma)/(\overline{\Gamma},\ell)\\
  &=\sum_{e\in\CE_\star}\sum_{\overline{\Gamma}\subset\uparrow^e\Gamma}\sum_{\ell\colon\D\overline{\Gamma}\to\N^d}\frac{1}{\ell!}g(\overline{\Gamma},\pi\ell)\Pi_\lambda(\uparrow^e\Gamma)/(\overline{\Gamma},\ell)\\
  &=(g\otimes\Pi_\lambda)\Delta\uparrow\Gamma.
\end{equs}
Note that we have crucially used the fact that the coproduct forbids the extraction of the edges with decoration $\mfd(e)=1$ to go from line 2 to line 3.
\end{proof}

\begin{remark}
  A similar commutation Lemma \ref{commut} appears also in \cite[Proposition 4.3]{BM25}. This idea is coming from the convergence via spectral gap in regularity structures initiated in \cite{LOTT}. Indeed, in \cite[Proposition 4.1]{BN23}, the authors show that there is a commutation property between the Malliavin derivative and a combinatorial counterpart with the model. 
\end{remark}

\subsubsection{Graftings and morphism properties}

We define two grafting-type operations that will represent the two operations on the right-hand side of the Polchinski equation. We then prove that they have good morphism properties with respect to the renormalisation procedure.

\begin{definition}
Let $\Gamma_1,\Gamma_2\in\mfF_0$. We define the bilinear operation $\graft\colon F_0\times F_0\to \langle F_1\rangle$
\begin{equation}
  \Gamma_1\graft\Gamma_2=\sum_{\substack{\scriptscriptstyle v_1\in\Lab_1\\\scriptscriptstyle v_2\in\Lab_2}}\Gamma_1\graft_{v_1,v_2}\Gamma_2\,,
\end{equation}
where $\graft_{v_1,v_2}\colon F_0\times F_0\to F_1$ creates a new Feynman diagram by replacing the two legs $v_1$ and $v_2$ by an internal edge going from $v_1$ to $v_2$ with $\mfd$ decoration equal to $1$ and $\mfe$ decoration equal to $\big(\mfe(v_2),\mfe(v_1)\big)$ only if $\mft(v_1)=\mft(v_2)$, and is $0$ otherwise. We extend $\graft$ as an application $\langle F_0\rangle\otimes\langle F_0\rangle\to\langle F_1\rangle$ by bilinearity. We will also require an operation $\circlearrowleft_{v_1,v_2}\colon F_0\to\langle F_1\rangle$, for $v_1$ and $v_2$ in the set $\Lab$ of legs of some diagram $\Gamma\in F_0$. We define $\circlearrowleft_{v_1,v_2}\Gamma$ as the diagram where the two legs have been merged into one internal edge going from $v_1$ to $v_2$ with the same decoration as the previous bilinear operation and the same condition on the $\mft$ decoration. We can then define an operation $\circlearrowleft\colon F_0\to\langle F_1\rangle$.
\begin{equation}
  \circlearrowleft\Gamma=\sum_{v_1,v_2\in\Lab}\circlearrowleft_{v_1,v_2}\Gamma\,.
\end{equation}
We naturally extend it to a map $\circlearrowleft\colon \langle F_0\rangle\to\langle F_1\rangle$ by linearity.
\end{definition}

\begin{example}
  We give some examples of the operations we have just introduced. We start with the bilinear operation. If we take the elementary diagram of the $\phi^4$ theory, this would give
  \begin{equs}
    % [inline block 0: 19 envs, 23798 chars in 4 pieces, piece 1 here, a bare % at each other -> data_tex | \begin{tikzpicture}[scale=0.15,baseline=-0.1cm, trim right=0.3cm] 			\node at (0,0)  [dot,label= {[label distance=-0.2em...]
.
  \end{equs}
  With some non-trivial edge decorations, we would have for instance
    \begin{equs}
    %
.
  \end{equs}
  Again with the $\phi^4$ theory diagrams, the linear operation looks like
  \begin{equs}
    \circlearrowleft%
.
  \end{equs}
  This also works with renormalisd diagrams as follows
  \begin{equs}
    (%
.
  \end{equs}
  $\mfc$ would be a renormalisation constant.
\end{example}

With these definitions, we get the two natural properties
\begin{equation}\label{vanillaprelie}
  \Pi_\lambda(\Gamma_1\graft\Gamma_2)[\phi]=\big(\mathrm{D}(\Pi_\lambda\Gamma_1)\dot G_\lambda\mathrm{D}(\Pi_\lambda\Gamma_2)\big)[\phi]\,,
\end{equation}
and
\begin{equation}
  \Pi_\lambda(\circlearrowleft\Gamma)[\phi]=\mathrm{Tr}\big(\dot G_\lambda \mathrm{D}^2(\Pi_\lambda\Gamma)\big)[\phi].
\end{equation}

\begin{lemma}\label{prelie1}
The following holds for $\Gamma_1,\Gamma_2\in\langle F_0\rangle$
\begin{equation}
  \hat\Pi_\lambda(\Gamma_1\graft\Gamma_2)[\phi]=\big(\mathrm{D}(\hat\Pi_\lambda\Gamma_1)\dot G_\lambda\mathrm{D}(\hat\Pi_\lambda\Gamma_2)\big)[\phi]\,.
\end{equation}
\end{lemma}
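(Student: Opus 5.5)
The strategy is to reduce the statement to the unrenormalised identity \eqref{vanillaprelie}. This requires showing that the coproduct $\Delta$ is compatible with the grafting $\graft$:
\begin{equation*}
  \Delta(\Gamma_1\graft\Gamma_2)=\big(\bullet\otimes\graft\big)\,\tau_{23}\,(\Delta\Gamma_1\otimes\Delta\Gamma_2)\,,
\end{equation*}
where $\tau_{23}$ exchanges the two middle tensor factors. Once this holds, the character property of $g$ with respect to $\bullet$ gives
\begin{equation*}
  \hat\Pi_\lambda(\Gamma_1\graft\Gamma_2)=\sum g(\overline{\Gamma}_1,\pi\ell_1)\,g(\overline{\Gamma}_2,\pi\ell_2)\,\Pi_\lambda\big(\Gamma_1/(\overline{\Gamma}_1,\ell_1)\graft\Gamma_2/(\overline{\Gamma}_2,\ell_2)\big)\,.
\end{equation*}
Applying \eqref{vanillaprelie} term by term, and using the bilinearity of $(U,V)\mapsto \mathrm{D}U\,\dot G_\lambda\,\mathrm{D}V$, the right-hand side becomes $\mathrm{D}(\hat\Pi_\lambda\Gamma_1)\dot G_\lambda\mathrm{D}(\hat\Pi_\lambda\Gamma_2)$. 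By bilinearity it suffices to treat single diagrams $\Gamma_1,\Gamma_2\in F_0$.

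The coproduct identity is proved by a bijection on the summation data, treating one fixed pair of legs $v_1\in\Lab_1$, $v_2\in\Lab_2$ at a time.

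\emph{Subgraphs.} The new edge $e$ created by $\graft_{v_1,v_2}$ carries $\mfd(e)=1$, so by definition of $\Delta$ it can never be extracted. Every extractable subgraph $\overline{\Gamma}\subset\Gamma_1\graft_{v_1,v_2}\Gamma_2$ therefore uses only edges of $\mcE_\star(\Gamma_1)\sqcup\mcE_\star(\Gamma_2)$. Such a subgraph splits uniquely as $\overline{\Gamma}=\overline{\Gamma}_1\bullet\overline{\Gamma}_2$ with $\overline{\Gamma}_i\subset\Gamma_i$, and its connected components are exactly those of the two pieces.

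\emph{Degree condition.} I need to check that the condition $\deg\leq0$ is imposed componentwise. This is immediate, since the sum in $\Delta$ is really over forests whose components are each of non-positive degree, so the condition is the same on both sides. I should state this convention explicitly.

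\emph{Half-edges.} The set $\D\overline{\Gamma}$ is the disjoint union of $\D\overline{\Gamma}_1$ and $\D\overline{\Gamma}_2$. Here the half-edges of the legs $v_1$ and $v_2$ at their internal endpoints become the two half-edges $e_\leftarrow$ and $e_\rightarrow$ of the new edge $e$. Consequently a map $\ell$ decomposes as $(\ell_1,\ell_2)$, with $\ell!=\ell_1!\,\ell_2!$ and $\pi\ell=\pi\ell_1\sqcup\pi\ell_2$.

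\emph{Contraction commutes with grafting.} It remains to check that
\begin{equation*}
  (\Gamma_1\graft_{v_1,v_2}\Gamma_2)/(\overline{\Gamma},\ell)=\Gamma_1/(\overline{\Gamma}_1,\ell_1)\graft_{v_1,v_2}\Gamma_2/(\overline{\Gamma}_2,\ell_2)\,.
\end{equation*}
This is where the main care is needed: tracking the decoration of $e$. Contracting adds $\ell(e_\rightarrow)$ to $\mfe_+$ and $\ell(e_\leftarrow)$ to $\mfe_-$. On the right-hand side, the legs $v_i$ first receive their $\ell_i$ increments, and grafting then transfers the leg decorations into $\mfe(e)=(\mfe(v_2),\mfe(v_1))$. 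The orientation of $e$ ($v_1\to v_2$) must match which leg's decoration becomes $\mfe_+$ versus $\mfe_-$. I would verify this carefully against the convention that only $\mfe_+$ matters on legs. The type condition $\mft(v_1)=\mft(v_2)$ is untouched by contraction.

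Summing over the pairs $(v_1,v_2)$, which are unchanged by contraction since legs are never extracted, gives the coproduct identity. I expect the main obstacle to be the bookkeeping of the half-edge decorations at the grafted edge, together with the precise convention for multi-edges and degrees. The subgraph bijection itself should be routine.
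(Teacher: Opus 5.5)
Your proposal is correct and follows the paper's proof. The paper likewise uses that the $\mfd=1$ edge cannot be extracted to split every subgraph of $\Gamma_1\graft_{v_1,v_2}\Gamma_2$ as $\overline{\Gamma}_1\bullet\overline{\Gamma}_2$, and uses $\ell!=\ell_1!\,\ell_2!$ to obtain $\Delta(\Gamma_1\graft\Gamma_2)=\Delta\Gamma_1(\mathrm{id}\otimes\graft)\Delta\Gamma_2$. It then concludes with \eqref{vanillaprelie} and the character property of $g$. The paper leaves implicit the checks you spell out: that the degree condition is imposed componentwise, and that the grafted edge ends up with decoration $\big(\mfe(v_2)+\ell_2,\mfe(v_1)+\ell_1\big)$ on both sides, which it does.
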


\begin{proof}
We have, using that edges with derivatives in $\lambda$ cannot be extracted,
\begin{equs}
\Delta(\Gamma_1\graft\Gamma_2)&=\sum_{\substack{\scriptscriptstyle v_1\in\Lab_1\\\scriptscriptstyle v_2\in\Lab_2}}\sum_{\scriptscriptstyle\overline{\Gamma}\subset\Gamma_1\graft_{v_1,v_2}\Gamma_2}\sum_{\scriptscriptstyle\ell\colon\D\overline{\Gamma}\to\N^d}\frac{1}{\ell!}(\overline{\Gamma},\pi\ell)\otimes(\Gamma_1\graft_{v_1,v_2}\Gamma_2)/(\overline{\Gamma},\ell)\\
&=\sum_{\substack{\scriptscriptstyle v_1\in\Lab_1\\\scriptscriptstyle v_2\in\Lab_2}}\sum_{\substack{\scriptscriptstyle\overline{\Gamma}_1\subset\Gamma_1\\\scriptscriptstyle \overline{\Gamma}_2\subset\Gamma_2}}\sum_{\scriptscriptstyle\ell\colon\D\overline{\Gamma}_1\bullet\overline{\Gamma}_2\to\N^d}\frac{1}{\ell!}(\overline{\Gamma},\pi\ell)\otimes(\Gamma_1\graft_{v_1,v_2}\Gamma_2)/(\overline{\Gamma},\ell)\\
&=\sum_{\substack{\scriptscriptstyle\overline{\Gamma}_1\subset\Gamma_1\\\scriptscriptstyle \overline{\Gamma}_2\subset\Gamma_2}}\sum_{\scriptscriptstyle\ell\colon\D\overline{\Gamma}_1\bullet\overline{\Gamma}_2\to\N^d}\frac{1}{\ell!}(\overline{\Gamma},\pi\ell)\otimes(\Gamma_1\graft\Gamma_2)/(\overline{\Gamma},\ell)\\
&=\sum_{\substack{\scriptscriptstyle\overline{\Gamma}_1\subset\Gamma_1\\\scriptscriptstyle \overline{\Gamma}_2\subset\Gamma_2}}\sum_{\substack{\scriptscriptstyle\ell_1\colon\D\overline{\Gamma}_1\to\N^d\\\scriptscriptstyle \ell_2\colon\D\overline{\Gamma}_2\to\N^d}}\frac{1}{\ell_1!}\frac{1}{\ell_2!}\big((\overline{\Gamma}_1,\pi\ell_1)\bullet(\overline{\Gamma}_2,\pi\ell_2)\big)\otimes\\
&\qquad\qquad\qquad\qquad\qquad\qquad\qquad\big((\Gamma_1/(\overline{\Gamma}_1,\ell_1))\graft(\Gamma_1/(\overline{\Gamma}_2,\ell_2))\big)\\
&=\Delta\Gamma_1(\mathrm{id}\otimes\graft)\Delta\Gamma_2.
\end{equs}
We have used the fact that for indices with disjoint support, we have $(\ell_1+\ell_2)!=\ell_1!\ell_2!$ and a similar argument for $\out$. We directly conclude using \ref{vanillaprelie} as well as the fact that $g$ is a character with respect to $\bullet$.
\end{proof}
We can obtain the following lemma with a similar proof.

\begin{lemma}\label{prelie2}
  We have, for $\Gamma_1\in\langle F_0\rangle$,
  \begin{equation}
    \hat\Pi_\lambda(\circlearrowleft\Gamma)[\phi]=\mathrm{Tr}\big(\dot G_\lambda \mathrm{D}^2(\hat\Pi_\lambda\Gamma)\big)[\phi]\,.
  \end{equation}
\end{lemma}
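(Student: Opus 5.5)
The plan is to mirror the proof of Lemma \ref{prelie1}: first prove a coproduct identity for the self-grafting $\circlearrowleft$, then apply $(g\otimes\Pi_\lambda)$ and the un-renormalised identity
\[
  \Pi_\lambda(\circlearrowleft\Gamma)[\phi]=\mathrm{Tr}\big(\dot G_\lambda \mathrm{D}^2(\Pi_\lambda\Gamma)\big)[\phi].
\]
By linearity we may take $\Gamma\in F_0$ to be a single diagram. The target identity is
\begin{equation*}
  \Delta\circlearrowleft\Gamma=(\mathrm{id}\otimes\circlearrowleft)\Delta\Gamma\,.
\end{equation*}
Once it holds, applying $g\otimes\Pi_\lambda$ gives
\[
  \hat\Pi_\lambda(\circlearrowleft\Gamma)=\sum g(\overline{\Gamma},\pi\ell)\,\frac{1}{\ell!}\,\Pi_\lambda\big(\circlearrowleft(\Gamma/(\overline{\Gamma},\ell))\big),
\]
and each term equals $\mathrm{Tr}\big(\dot G_\lambda\mathrm{D}^2\Pi_\lambda(\Gamma/(\overline{\Gamma},\ell))\big)$. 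Since the trace term is linear in its argument and the $g$-factors are constants, we can pull the sum inside and obtain $\mathrm{Tr}(\dot G_\lambda\mathrm{D}^2\hat\Pi_\lambda\Gamma)$. Unlike Lemma \ref{prelie1}, the character property of $g$ is not even needed, because only one diagram is involved.

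To prove the coproduct identity, expand
\[
  \Delta\circlearrowleft_{v_1,v_2}\Gamma=\sum_{\overline{\Gamma}\subset\circlearrowleft_{v_1,v_2}\Gamma}\ \sum_{\ell}\frac{1}{\ell!}\,(\overline{\Gamma},\pi\ell)\otimes(\circlearrowleft_{v_1,v_2}\Gamma)/(\overline{\Gamma},\ell).
\]
The newly created edge carries $\mfd=1$, so the coproduct forbids extracting it. Hence every admissible $\overline{\Gamma}$ has edges in $\mcE_\star(\Gamma)$, and these are exactly the admissible subgraphs of $\Gamma$; the degree condition is unchanged because $\deg(\overline{\Gamma},\pi\ell)$ only sees $\overline{\Gamma}$ and $\ell$. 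Several boundary checks follow from this.

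\begin{itemize}
  \item \emph{Half-edges.} The boundary $\D\overline{\Gamma}$ is the same in both graphs. The new edge's half-edges sit where the legs $v_1,v_2$ attached, and legs are already allowed in $\D\overline{\Gamma}$; the convention that only $\mfe_+$ matters on a leg makes the identification consistent.
  \item \emph{Decorations.} The $\ell$-decorations added to the legs in $\Gamma/(\overline{\Gamma},\ell)$ become, after applying $\circlearrowleft_{v_1,v_2}$, the $\mfe_\pm$ decorations of the new edge. This is exactly what one gets by first merging the legs and then adding $\ell$ on the new edge's half-edges, so $(\circlearrowleft_{v_1,v_2}\Gamma)/(\overline{\Gamma},\ell)=\circlearrowleft_{v_1,v_2}\big(\Gamma/(\overline{\Gamma},\ell)\big)$.
  \item \emph{Type condition.} The condition $\mft(v_1)=\mft(v_2)$ is untouched by contraction.
  \item \emph{Legs.} Contraction does not change the leg set, so $\sum_{v_1,v_2\in\Lab}$ commutes with the sum over $(\overline{\Gamma},\ell)$. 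The combinatorial factor $1/\ell!$ is also the same on both sides.
\end{itemize}

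The main obstacle I expect is this bookkeeping of half-edges and decorations at the merged leg, and in particular the orientation convention $\mfe=(\mfe(v_2),\mfe(v_1))$. One must check that contracting $\overline{\Gamma}$ and then merging gives exactly the same decorated diagram as merging and then contracting, including the case where both $v_1$ and $v_2$ are attached to the same component of $\overline{\Gamma}$. In that case the new edge becomes a self-loop on the contracted vertex, with two distinct half-edges in $\D\overline{\Gamma}$ carrying independent values of $\ell$. I would verify this case explicitly, noting that $\D\overline{\Gamma}$ lists the two half-edges separately, as the paper remarks for edges joining two vertices of $\overline{\Gamma}$. Once it is checked, the identity holds summand by summand, and the lemma follows.
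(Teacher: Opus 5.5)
Your route is the paper's, which proves this lemma only by ``a similar proof'' to Lemma~\ref{prelie1}. That means the identity $\Delta\circlearrowleft\Gamma=(\mathrm{id}\otimes\circlearrowleft)\Delta\Gamma$, which follows from the non-extractability of the new edge, then $g\otimes\Pi_\lambda$ and the unrenormalised trace identity. Your bookkeeping of half-edges and decorations is right, including the self-loop case, and you are right that the character property of $g$ is not needed.

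\emph{The gap.} One step does not go through as written: the claim that the admissible subgraphs of $\circlearrowleft_{v_1,v_2}\Gamma$ are exactly those of $\Gamma$. You justify it only by $\mathfrak{d}=1$ on the new edge. However, the paper also requires that parallel edges between two nodes be extracted all at once.

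If the legs $v_1,v_2$ hang off two vertices $u_1,u_2$ that are already joined by internal edges, the new edge joins that parallel family. Under a literal reading of the two rules, the family can then no longer be extracted from $\circlearrowleft_{v_1,v_2}\Gamma$, although it can be extracted from $\Gamma$.

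The coproduct identity then fails, and with it the lemma. Take the $\phi^4$ setting with $d/3<\sigma<d/2$ and the fish diagram: two vertices joined by a double edge, with two legs on each. Pair one leg at each vertex.
\begin{itemize}
  \item On the right, $\Delta$ extracts the bubble, which has degree $2\sigma-d<0$, and the pairing then produces a red tadpole weighted by $g$ of the bubble.
  \item On the left, nothing can be extracted from the resulting triple edge (two black, one red).
\end{itemize}
The same-vertex pairings match term by term, so this discrepancy survives whenever $g$ does not vanish on the bubble, e.g.\ for $g_{\scriptscriptstyle\mathrm{BPHZ}}$.

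\emph{The fix.} You must fix the convention explicitly.
\begin{itemize}
  \item Applying the parallel-edge rule only to $\mathfrak{d}=0$ edges makes your argument work verbatim. But it breaks the analogous identity $\Delta\uparrow\Gamma=(\mathrm{id}\otimes\uparrow)\Delta\Gamma$ behind Lemma~\ref{commut} when the differentiated edge has parallel partners. Extracting the remaining black edges becomes allowed in $\uparrow^e\Gamma$, but has no counterpart in $\Delta\Gamma$.
  \item The consistent choice is to drop the parallel-edge rule and let subgraphs be arbitrary subsets of $\mathfrak{d}=0$ edges. Then both your argument and the one for Lemma~\ref{commut} go through.
\end{itemize}
The paper's one-line proof passes over this point as well.
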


\begin{remark}
  These two operations are the counterpart on graphs of the grafting map $\mathcolor{red}{\curvearrowright}_a$ on decorated trees used in \cite{BM25}. This map also has a morphism property, but that is harder to prove since elementary differentials are used therein.
\end{remark}

\subsection{Taylor expansion and BPHZ renormalisation}\label{sec:taylor}

In this section, we define an abstract counterpart of the Taylor expansion of the legs of a Feynman diagram. This allows us to identify the character $g$ as the BPHZ character.
\begin{definition}
  We set, for $\Gamma\in F_0$, 
  \begin{equation}
    \I\Gamma=\sum_{\ell\colon\D\mathrm{vac}(\Gamma)\to\N^d}\frac{1}{\ell!}\big(\mathrm{vac}(\Gamma),\pi\ell\big)\bullet\mathrm{res}_\ell(\Gamma)\,.
  \end{equation}
We extend it on $\langle F_0\rangle$ by linearity.
\end{definition}

\begin{example}
  We provide below an example of what this map does on a rather simple diagram.
  \begin{equs}\label{eq:exampletaylor}
    \I\begin{tikzpicture}[scale=0.3,baseline=-0.1cm,trim right=0cm]
			\node at (0,0)  [label= {[label distance=-0.2em]below: \scriptsize  $      $} ] (root) {};
      \node at (-2.5,0)  [dot,label= {[label distance=-0.2em]right: \scriptsize  $ $} ] (centerleft) {};
      \node at (-5,2)  [dot,label= {[label distance=-0.2em]right: \scriptsize  $ $} ] (topleft) {};
			\node at (-5,-2)  [dot,label= {[label distance=-0.2em]right: \scriptsize  $ $} ] (botleft) {};
      \node at (-7,-4)  [label= {[label distance=-0.2em]right: \scriptsize  $     $} ] (leg1) {};
      \node at (-7,4)  [label= {[label distance=-0.2em]right: \scriptsize  $     $} ] (leg2) {};
      \draw[kernel1,color=blue] (root) to
			node [sloped,below] {\small }     (centerleft);
      \draw[arrow] (centerleft) to
			node [sloped,below] {\small }     (topleft);
      \draw[arrow] (centerleft) to
			node [sloped,below] {\small }     (botleft);
      \draw[arrow] (topleft) to
			node [sloped,below] {\small }     (botleft);
      \draw[kernel1,color=blue] (topleft) to
			node [sloped,below] {\small }     (leg2);
      \draw[kernel1,color=blue] (botleft) to
			node [sloped,below] {\small }     (leg1);
    \end{tikzpicture}=\sum_{i,j,k\in\N^d}\frac{1}{i!j!k!}\begin{tikzpicture}[scale=0.3,baseline=-0.1cm,trim right=0.1cm]
      \node at (-2.5,0)  [dot,label= {[label distance=-0.2em]right: \scriptsize  $k$} ] (centerleft) {};
      \node at (-5,2)  [dot,label= {[label distance=-0.2em]left: \scriptsize  $j$} ] (topleft) {};
			\node at (-5,-2)  [dot,label= {[label distance=-0.2em]left: \scriptsize  $i$} ] (botleft) {};
      \draw[arrow] (centerleft) to
			node [sloped,below] {\small }     (topleft);
      \draw[arrow] (centerleft) to
			node [sloped,below] {\small }     (botleft);
      \draw[arrow] (topleft) to
			node [sloped,below] {\small }     (botleft);
      \node at (0.5,0)  [dot,label= {[label distance=-0.2em]below: \scriptsize  $      $} ] (center) {};
			\node at (-2,2.5)  [label= {[label distance=-0.2em]right: \scriptsize  $     $} ] (topleft) {};
      \node at (-2,-2.5)  [label= {[label distance=-0.2em]right: \scriptsize  $     $} ] (botleft) {};
			\node at (3.25,0)  [label= {[label distance=-0.2em]right: \scriptsize  $     $} ] (right) {};
			\draw[kernel1,color=blue] (center) to
			node [sloped,below] {\small }     (right);
      \draw[kernel1,color=blue] (center) to
			node [sloped,below] {\small }     (botleft);
      \draw[kernel1,color=blue] (center) to
			node [sloped,below] {\small }     (topleft);
      \node at (-0.4,1)  [circle, fill=white, label= {center: \scriptsize  $j$} ] {};
      \node at (-0.4,-1)  [circle, fill=white, label= {center: \scriptsize  $i$} ] {};
      \node at (1.75,0)  [fill=white, label= {center: \scriptsize  $k$} ] {};
    \end{tikzpicture}
  \end{equs}
  This maps represents on the combinatorial side a formal expansion on the legs of the Feynman diagrams, in the sense that the following formal equality holds
  \begin{equs}\label{eq:taylornormal}
    (\Pi_\lambda\Gamma)[\phi]=(\Pi_\lambda \I\Gamma)[\phi]\,.
  \end{equs}
  Let us demonstrate this with the example above. We have
  \begin{equs}
    \Pi_\lambda\Big(\begin{tikzpicture}[scale=0.15,baseline=-0.1cm,trim right=0cm]
			\node at (0,0)  [label= {[label distance=-0.2em]below: \scriptsize  $      $} ] (root) {};
      \node at (-2.5,0)  [dot,label= {[label distance=-0.2em]right: \scriptsize  $ $} ] (centerleft) {};
      \node at (-5,2)  [dot,label= {[label distance=-0.2em]right: \scriptsize  $ $} ] (topleft) {};
			\node at (-5,-2)  [dot,label= {[label distance=-0.2em]right: \scriptsize  $ $} ] (botleft) {};
      \node at (-7,-4)  [label= {[label distance=-0.2em]right: \scriptsize  $     $} ] (leg1) {};
      \node at (-7,4)  [label= {[label distance=-0.2em]right: \scriptsize  $     $} ] (leg2) {};
      \draw[kernel1,color=blue] (root) to
			node [sloped,below] {\small }     (centerleft);
      \draw[arrow] (centerleft) to
			node [sloped,below] {\small }     (topleft);
      \draw[arrow] (centerleft) to
			node [sloped,below] {\small }     (botleft);
      \draw[arrow] (topleft) to
			node [sloped,below] {\small }     (botleft);
      \draw[kernel1,color=blue] (topleft) to
			node [sloped,below] {\small }     (leg2);
      \draw[kernel1,color=blue] (botleft) to
			node [sloped,below] {\small }     (leg1);
    \end{tikzpicture}\Big)[\phi]=\int_{\Lambda^3}(G-G_\lambda)(x,y)(G-G_\lambda)(x,z)(G-G_\lambda)(y,z)\phi(x)\phi(y)\phi(z)\,\dint x \dint y \dint z\,. 
  \end{equs}
  We then perform a formal Taylor expansion, writing $\phi(y)=\sum_{j\in\N^d}\frac{\D^j\phi(x)}{k!}(y-x)^j$, and similarly for $\phi(z)$. We then get
  \begin{multline*}
    \Pi_\lambda\Big(\begin{tikzpicture}[scale=0.15,baseline=-0.1cm,trim right=0cm]
			\node at (0,0)  [label= {[label distance=-0.2em]below: \scriptsize  $      $} ] (root) {};
      \node at (-2.5,0)  [dot,label= {[label distance=-0.2em]right: \scriptsize  $ $} ] (centerleft) {};
      \node at (-5,2)  [dot,label= {[label distance=-0.2em]right: \scriptsize  $ $} ] (topleft) {};
			\node at (-5,-2)  [dot,label= {[label distance=-0.2em]right: \scriptsize  $ $} ] (botleft) {};
      \node at (-7,-4)  [label= {[label distance=-0.2em]right: \scriptsize  $     $} ] (leg1) {};
      \node at (-7,4)  [label= {[label distance=-0.2em]right: \scriptsize  $     $} ] (leg2) {};
      \draw[kernel1,color=blue] (root) to
			node [sloped,below] {\small }     (centerleft);
      \draw[arrow] (centerleft) to
			node [sloped,below] {\small }     (topleft);
      \draw[arrow] (centerleft) to
			node [sloped,below] {\small }     (botleft);
      \draw[arrow] (topleft) to
			node [sloped,below] {\small }     (botleft);
      \draw[kernel1,color=blue] (topleft) to
			node [sloped,below] {\small }     (leg2);
      \draw[kernel1,color=blue] (botleft) to
			node [sloped,below] {\small }     (leg1);
    \end{tikzpicture}\Big)[\phi]=\sum_{j,k\in\N^d}\frac{1}{j!k!}\int_{\Lambda^3}(G-G_\lambda)(x,y)\\\times(G-G_\lambda)(x,z)(G-G_\lambda)(y,z)(y-x)^j(z-x)^k\phi(x)\D^j\phi(x)\D^k\phi(x)\,\dint x \dint y \dint z\,.
  \end{multline*}
  Performing the changes of variables $y\to y+x$ and $z\to z+x$, we get that it is equal to (assuming that $G$ is symmetric for simplicity),
  \begin{equs}
    \sum_{j,k\in\N^d}\frac{1}{j!k!}\int_{\Lambda^2}(G-G_\lambda)(y)(G-G_\lambda)(z)(G-G_\lambda)(y,z)y^jz^k\,\dint y \dint z\int_\Lambda\phi(x)\D^j\phi(x)\D^k\phi(x)\,\dint x\,.
  \end{equs}
  Then the remark we made in Example \ref{ex:vacuum} about the change of base point still applies here, and one can see that \eqref{eq:exampletaylor} is exactly the combinatorial counterpart of this latter equation.
\end{example}
We show that the relation \eqref{eq:taylornormal} still holds when the renormalisation comes into play.
\begin{lemma}\label{lemma:renormtaylor}
  We have the formal equality for any $\Gamma\in\langle F_0\rangle$,
  \begin{equation}\label{eq:renormtaylor}
    (\hat\Pi_\lambda\Gamma)[\phi]=(\hat\Pi_\lambda \I\Gamma)[\phi]\,.
  \end{equation}
\end{lemma}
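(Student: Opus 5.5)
We prove \eqref{eq:renormtaylor} by reducing it to the unrenormalised identity \eqref{eq:taylornormal} through a co-associativity type relation between $\Delta$, $\Delta^-$ and $\I$. Since $\Pi_\lambda$ is multiplicative for $\bullet$ and $\I\Gamma$ is a forest $(\mathrm{vac}(\Gamma),\pi\ell)\bullet\mathrm{res}_\ell(\Gamma)$, we interpret $\hat\Pi_\lambda\I\Gamma$ as
\[
\sum_{\ell}\frac{1}{\ell!}\,\hat\Pi_\lambda\big(\mathrm{vac}(\Gamma),\pi\ell\big)\,\big(\Pi_\lambda\mathrm{res}_\ell(\Gamma)\big)[\phi]\,.
\]
Here the vacuum factor is renormalised through $\Delta^-$, and the residue is a pure-leg diagram on which $\Delta$ is trivial. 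The target identity is then
\[
(g\otimes\Pi_\lambda)\Delta\Gamma=\sum_\ell\frac{1}{\ell!}\,(g\otimes\Pi_\lambda)\Delta^-\big(\mathrm{vac}(\Gamma),\pi\ell\big)\cdot\Pi_\lambda\mathrm{res}_\ell(\Gamma)\,.
\]

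\textbf{Step 1: a combinatorial identity.} We aim to show that, in $\langle\mfF_-\rangle\otimes\langle \mfF\rangle$,
\[
(\mathrm{id}\otimes\I)\Delta\Gamma=\sum_{\ell\colon\D\mathrm{vac}(\Gamma)\to\N^d}\frac{1}{\ell!}\,\big(\Delta^-(\mathrm{vac}(\Gamma),\pi\ell)\big)\,\big(\one\otimes\mathrm{res}_\ell(\Gamma)\big)\,.
\]
Both sides are sums over a subgraph $\overline{\Gamma}\subset\Gamma$ together with decorations. On the left there are two: $\ell_1$ on $\D\overline{\Gamma}$, coming from $\Delta$, and $\ell_2$ on the legs of $\Gamma/(\overline{\Gamma},\ell_1)$, coming from $\I$. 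On the right there are three: $\ell$ on the legs of $\Gamma$, a splitting $\overline{\mfn}\leq\pi\ell$ of the node decoration (with a binomial coefficient), and $\overline{\ell}$ on $\D\overline{\Gamma}$ inside $\mathrm{vac}(\Gamma)$.

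The half-edges of $\D\overline{\Gamma}$ are of two kinds: internal half-edges of $\Gamma$, and legs attached to $\overline{\Gamma}$. On the left, the legs attached to $\overline\Gamma$ carry both an $\ell_1$ contribution, which becomes node decoration of $\overline\Gamma$, and an $\ell_2$ contribution, which becomes node decoration of the contracted vacuum graph. On the right these same legs carry the single $\ell$, split through $\binom{\pi\ell}{\overline{\mfn}}$.

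The bijection we use identifies $\overline\ell$ with $\ell_1$ on internal half-edges. On each leg $e$, it identifies $\ell(e)$ with $\ell_1(e)+\ell_2(e)$. The Vandermonde-type identity
\[
\frac{1}{\ell!}\binom{\pi\ell}{\overline{\mfn}}\;\longleftrightarrow\;\sum\frac{1}{\ell_1!\,\ell_2!}
\]
then matches the coefficients, after summing over how $\overline{\mfn}(v)$ is distributed among the legs at $v$. The multinomial expansion of $\prod_e(\cdot)^{\ell(e)}$ grouped by vertex is exactly what makes this work.

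We also check that the vacuum diagram $\mathrm{vac}(\Gamma/(\overline\Gamma,\ell_1))$ with decoration $\pi\ell_2$ coincides with $(\mathrm{vac}(\Gamma),\pi\ell-\overline{\mfn})/(\overline\Gamma,\overline\ell)$. We also check that $\mathrm{res}$ of the quotient with $\ell_2$ coincides with $\mathrm{res}_\ell(\Gamma)$ once the leg decorations are added. Finally, the degree constraints match, since the constraint $\deg(\overline\Gamma,\overline{\mfn}+\pi\overline\ell)\leq0$ of $\Delta^-$ is the same as the constraint $\deg(\overline\Gamma,\pi\ell_1)\leq0$ of $\Delta$.

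\textbf{Step 2: evaluate.} We apply $g\otimes\Pi_\lambda$ to the identity of Step 1 and use the formal identity \eqref{eq:taylornormal} for each quotient $\Gamma/(\overline\Gamma,\ell_1)$, which gives $\Pi_\lambda\I=\Pi_\lambda$. We use multiplicativity of $\Pi_\lambda$ on the forest $\bullet$ and that $g$ is a character. The left side becomes $\hat\Pi_\lambda\Gamma$ and the right side becomes $\hat\Pi_\lambda\I\Gamma$.

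\textbf{Main obstacle.} We expect the main difficulty to be the coefficient bookkeeping in Step 1. A leg half-edge in $\D\overline\Gamma$ feeds node decorations at two levels: in $\overline\Gamma$ and in the quotient. The binomial $\binom{\mfn}{\overline{\mfn}}$ in $\Delta^-$ is stated per vertex, while the factorials $\ell!$ are per half-edge. We would first verify the identity on a single vertex with several legs, where it reduces to the multinomial theorem for $(x_v-x_{v_\star})^{\mfn}$, and then argue that it factorises over vertices.

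A secondary subtlety is the choice of base point. When $\overline\Gamma$ is contracted, the base point of the vacuum graph on the right changes. We handle this using the translation invariance observed in Example \ref{ex:vacuum}: the Taylor re-expansion $(x-x_{v_\star})^{n}=\sum\binom{n}{m}(x-x_w)^m(x_w-x_{v_\star})^{n-m}$ is absorbed into the same binomial sums.
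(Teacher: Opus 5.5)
Your proposal is correct and follows essentially the same route as the paper. Both arguments expand the two sides through $\Delta$ and $\Delta^-$, apply \eqref{eq:taylornormal} to the quotients, split the decoration on $\D\overline{\Gamma}$ into its leg part and its internal part, and match coefficients with the Chu--Vandermonde identity $\sum_{\pi\ell_1=\overline{\mfn}}\binom{\ell}{\ell_1}=\binom{\pi\ell}{\overline{\mfn}}$ together with the commutation of the quotient with the polynomial decoration. The only difference is that you do the matching as an algebraic identity in $\langle\mfF_-\rangle\otimes\langle\mfF\rangle$ before applying $g\otimes\Pi_\lambda$, whereas the paper does the same bookkeeping on the evaluated expressions; your version has the small advantage of making explicit the base-point convention that the paper glosses over.
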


\begin{proof}
  Let us first take a look at the left-hand side. We have, using \eqref{eq:taylornormal},
  \begin{equs}
    (\hat\Pi_\lambda\Gamma)[\phi]&=\sum_{\overline{\Gamma}\subset\Gamma}\sum_{\overline{\ell}\colon\D\overline{\Gamma}\to\N^d}\frac{1}{\overline{\ell}!}g(\overline{\Gamma},\pi\overline{\ell})\big(\Pi_\lambda(\Gamma/(\overline{\Gamma},\overline{\ell}))\big)[\phi]\\
    &=\sum_{\overline{\Gamma}\subset\Gamma}\sum_{\overline{\ell}\colon\D\overline{\Gamma}\to\N^d}\frac{1}{\overline{\ell}!}g(\overline{\Gamma},\pi\overline{\ell})\big(\Pi_\lambda(\I(\Gamma/(\overline{\Gamma},\overline{\ell})))\big)[\phi]
  \end{equs}
    From this point, we write $\overline{\ell}=\overline{\ell}_1+\overline{\ell}_2$, where $\overline{\ell}_1$ is supported on the half-legs that touch legs of $\Gamma$ and $\overline{\ell}_2$ is supported on half-legs that touch internal edges of $\Gamma$. We thus get
  \begin{equs}
    (\hat\Pi_\lambda\Gamma)[\phi]
    &=\sum_{\overline{\Gamma}\subset\Gamma}\sum_{\overline{\ell}_1,\overline{\ell}_2}\sum_{\scriptscriptstyle{\ell:\D\mathrm{vac}(\Gamma)\to\N^d}}\frac{1}{\overline{\ell}_1!\overline{\ell}_2!}\frac{1}{\ell!}g(\overline{\Gamma},\pi\overline{\ell}_1+\pi\overline{\ell}_2)\\ & \qquad \times\Pi_\lambda\big(\mathrm{vac}(\Gamma/(\overline{\Gamma},\overline{\ell}_2)),\pi\ell\big)\big(\Pi_\lambda\mathrm{res}_{\ell+\overline{\ell}_1}(\Gamma)\big)[\phi]\\
    &=\sum_{\overline{\Gamma}\subset\Gamma}\sum_{\overline{\ell}_1,\overline{\ell}_2}\sum_{\scriptscriptstyle{\ell:\D\mathrm{vac}(\Gamma)\to\N^d}}\frac{1}{\overline{\ell}_1!\overline{\ell}_2!}\frac{1}{(\ell-\overline{\ell}_1)!}g(\overline{\Gamma},\pi\overline{\ell}_1+\pi\overline{\ell}_2)\\  &  \qquad   \times\Pi_\lambda\big(\mathrm{vac}(\Gamma/(\overline{\Gamma},\overline{\ell}_2)),\pi\ell-\pi\overline{\ell}_1\big)\big(\Pi_\lambda\mathrm{res}_\ell(\Gamma)\big)[\phi]\\
    &=\sum_{\overline{\Gamma}\subset\Gamma}\sum_{\overline{\ell}_1,\overline{\ell}_2}\sum_{\scriptscriptstyle{\ell:\D\mathrm{vac}(\Gamma)\to\N^d}}\frac{1}{\ell!\overline{\ell}_2!}\binom{\ell}{\overline{\ell}_1}g(\overline{\Gamma},\pi\overline{\ell}_1+\pi\overline{\ell}_2)\\  &  \qquad \times\Pi_\lambda\big(\mathrm{vac}(\Gamma/(\overline{\Gamma},\overline{\ell}_2)),\pi\ell-\pi\overline{\ell}_1\big)\big(\Pi_\lambda\mathrm{res}_\ell(\Gamma)\big)[\phi].
  \end{equs}
  We have also used support properties to get that $(\overline{\ell}_1+\overline{\ell}_2)!=\overline{\ell}_1!\overline{\ell}_2!$. On the other side, we have
  \begin{equs}
    (\hat\Pi_\lambda \I\Gamma)[\phi]
    &=\sum_{\scriptscriptstyle\ell:\D\mathrm{vac}(\Gamma)\to\N^d}\frac{1}{\ell!}\hat\Pi_\lambda(\mathrm{vac}(\Gamma),\pi\ell)\big(\Pi_\lambda\mathrm{res}_\ell(\Gamma)\big)[\phi]\\
    &=\sum_{\scriptscriptstyle\ell:\D\mathrm{vac}(\Gamma)\to\N^d}\frac{1}{\ell!}\sum_{\overline{\Gamma}\subset\mathrm{vac}(\Gamma)}\sum_{\scriptscriptstyle{\substack{\overline{\ell}\colon\D\overline{\Gamma}\to\mathbf{N}^d\\\overline{\mfn}\colon\overline{\mcV}\to\mathbf{N}^d}}}\frac{1}{\overline{\ell}!}\binom{\pi\ell}{\overline\mfn}g(\overline{\Gamma},\overline{\mfn}+\pi\overline{\ell}) \\ & \qquad \times\Pi_\lambda\big((\mathrm{vac}(\Gamma),\pi\ell-\overline{\mfn})/(\overline{\Gamma},\overline{\ell})\big)\big(\Pi_\lambda\mathrm{res}_\ell(\Gamma)\big)[\phi]\\
    &=\sum_{\overline{\Gamma}\subset\Gamma}\sum_{\substack{\scriptscriptstyle\overline{\ell}\colon\D\overline{\Gamma}\to\mathbf{N}^d\\\scriptscriptstyle\overline{\mfn}\colon\overline{\mcV}\to\mathbf{N}^d}}\sum_{\scriptscriptstyle\ell:\D\mathrm{vac}(\Gamma)\to\N^d}\frac{1}{\ell!\overline{\ell}!}\binom{\pi\ell}{\overline\mfn}g(\overline{\Gamma},\overline{\mfn}+\pi\overline{\ell})\\ & \qquad \times\Pi_\lambda\big((\mathrm{vac}(\Gamma),\pi\ell-\overline{\mfn})/(\overline{\Gamma},\overline{\ell})\big)\big(\Pi_\lambda\mathrm{res}_\ell(\Gamma)\big)[\phi].
  \end{equs}
  It is now easy to see that the two quantities match by using the Chu-Vandermonde identity, and by noticing that we can commute the quotient and the polynomial decoration in the two vacuum diagrams. Note that in the second series of equalities, the subdiagram extraction is only on $\mathrm{vac}(\Gamma)$, so, on the contrary of the extractions on $\Gamma$, the half-legs \textit{cannot} touch the legs. This is compensated by the summation on $\overline{\mfn}$. That is why in the first series of equalities, we take the care to separate the half-legs that touch the legs and the ones that do not.
\end{proof}
\begin{proposition}\label{prop:gbphz}
  The unique character $g$ satisfying the BPHZ boundary conditions at $\lambda=\infty$ for the Polchinski equation is the character $g_{{\scriptscriptstyle\mathrm{BPHZ}}}$.
\end{proposition}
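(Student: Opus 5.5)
The idea is to turn the BPHZ boundary condition into a condition on $g$ alone, via the Taylor map $\I$, and then show that this condition fixes $g$ recursively. The recursion it produces is exactly the one defining $g_{\scriptscriptstyle\mathrm{BPHZ}}$.

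\textbf{Reduction to a condition on $g$.} Let $\Gamma\in F^\star$ and look at the relevant part of $V_\lambda$ at $\lambda=\infty$. By Lemma \ref{lemma:renormtaylor},
\begin{equation*}
  (\hat\Pi_\infty\Gamma)[\phi]=\sum_{\ell\colon\D\mathrm{vac}(\Gamma)\to\N^d}\frac{1}{\ell!}\,\hat\Pi_\infty\big(\mathrm{vac}(\Gamma),\pi\ell\big)\,\big(\Pi_\infty\mathrm{res}_\ell(\Gamma)\big)[\phi]\,,
\end{equation*}
where we use that $\hat\Pi$ is multiplicative on the forest product. The residue carries no internal edges, so it is not renormalised and has the same evaluation for every $\lambda$.

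The Polchinski boundary conditions prescribe the local (relevant) part of $V_\infty$. In the analytic construction of Section \ref{sec:analytic} this is seen through \eqref{eq:choiceb} and \eqref{eq:choicea}: the constants are chosen exactly so that the relevant coefficients vanish at $\lambda=\infty$. Hence the condition is that the coefficient of each local monomial $\Pi\,\mathrm{res}_\ell(\Gamma)$ with $\deg(\mathrm{vac}(\Gamma),\pi\ell)\le 0$ vanishes. This is the requirement $\hat\Pi_\infty(\bar\Gamma,\mfn)=0$ for every $(\bar\Gamma,\mfn)\in F_-$ arising this way, i.e. \eqref{eq:bphzchoice}. For the irrelevant monomials nothing is imposed, and their coefficients decay automatically.

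\textbf{Existence and uniqueness.} Expand
\begin{equation*}
  \hat\Pi_\infty(\bar\Gamma,\mfn)=(g\otimes\Pi_\infty)\Delta^-(\bar\Gamma,\mfn)\,.
\end{equation*}
The term where the whole graph is extracted, with $\bar\mfn=\mfn$ and $\bar\ell=0$, contributes $g(\bar\Gamma,\mfn)\cdot\Pi_\infty(\bullet)=g(\bar\Gamma,\mfn)$. Every other term involves $g$ only on strictly smaller forests, i.e. with fewer edges.

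Induct on the number of edges. The condition $\hat\Pi_\infty(\bar\Gamma,\mfn)=0$ then determines $g(\bar\Gamma,\mfn)$ uniquely as minus the sum of the lower-order terms. The character property extends $g$ uniquely from connected diagrams to forests.

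This recursion is precisely the twisted antipode recursion $g=\Pi_\infty\hat{\mathcal{A}}$. Equivalently, $g_{\scriptscriptstyle\mathrm{BPHZ}}$ is characterised by making $\hat\Pi_\infty$ vanish on all of $F_-$. Hence $g=g_{\scriptscriptstyle\mathrm{BPHZ}}$.

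\textbf{Main obstacle.} The delicate step is the identification in the first step. One must check that the conditions "relevant part of $V_\infty$ vanishes" are really indexed by all pairs $(\bar\Gamma,\mfn)\in F_-$ one at a time. Two issues arise:
\begin{itemize}
  \item different $\Gamma$ and $\ell$ produce the same local monomial, so their coefficients combine;
  \item the combinatorial weights $\alpha(\Gamma)\Gamma!/S(\Gamma)$ appear in front of these coefficients.
\end{itemize}
I would handle this by keeping the diagrams formal, treating each $\hat\Pi_\infty(\mathrm{vac}(\Gamma),\pi\ell)$ as an independent quantity attached to the graph $\Gamma$ rather than collapsing to monomials in $\phi$. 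This is the same convention as in \cite{BM25}, where the boundary condition is imposed diagram by diagram. With that convention, each vacuum diagram with non-positive degree yields its own condition, and the inductive uniqueness argument goes through.
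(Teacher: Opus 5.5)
Your proposal is correct and follows essentially the paper's route: Lemma \ref{lemma:renormtaylor} converts the boundary condition on the relevant part of $V_\infty$ into the vanishing of $\hat\Pi_\infty(\mathrm{vac}(\Gamma),\pi\ell)$ on non-positive vacuum diagrams, and this vanishing characterises $g_{\scriptscriptstyle\mathrm{BPHZ}}$. The differences are that the paper simply cites \cite[Proposition 2.22]{BPHZ_theorem} for that characterisation, whereas you reprove it by the triangular recursion on the number of edges, and that you state explicitly the diagram-by-diagram convention the paper uses tacitly; one small correction is that the whole-graph extractions with $\bar\mfn\neq\mfn$ are not on smaller forests, but they drop out because $\Pi_\infty$ of a single vertex with non-zero decoration vanishes.
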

\begin{proof}
  This proof follows the one of \cite[Theorem 5.14]{BM25}. However, the fact that we do not use elementary differentials makes it substantially easier. We recall from \cite[Proposition 2.22]{BPHZ_theorem} that the BPHZ character is uniquely characterised by the fact that
  \begin{equs}
    \hat\Pi_\infty(\Gamma,\mfn)=0\,,
  \end{equs}
  for every vacuum diagram such that $\deg(\Gamma,\mfn)\leq0$. This is equivalent to say that
  \begin{equs}\label{eq:characBPHZ}
    \hat\Pi_\infty\big(\mathrm{vac}(\Gamma),\mfn\big)=0\,,
  \end{equs}
  for every $\Gamma\in F_0$ such that $\deg(\Gamma)+|\mfn|\leq0$. We recall that $G_\infty=0$. We can now use Lemma \ref{lemma:renormtaylor} to see that it allows us to conclude the proof of the proposition. One can notice that, when treating (inductively) the relevant diagrams in the context of the control of the coefficients of the renormalised potential, the Taylor expansion and then the choice of the renormalisation constant exactly imply \eqref{eq:characBPHZ} for all the relevant diagrams. For clarification, one can notice that, in the context of Section \ref{sec:analytic}, the choice of constants \eqref{eq:choiceb},\eqref{eq:choicea} implies that $V_\infty^{i,0}=0$ and $V_\infty^{i,2}=0$ for every $i$. In that case, the Taylor expansion yields only one order $0$ term (the second one disappears by invariance by translation), but more complex models would lead to cancel more terms of the Taylor expansion.
\end{proof}

\subsection{Forests and duality}\label{sec:forestduality}

  We introduce in this section two operations on forests, that can be thought respectively as a grafting and a cutting operation. The main statement is the novel duality relation in Lemma \ref{lemma:duality}, that will be essential to the proof of the main theorem in the next section. 
  \begin{definition}
    Let $\mathbf{\Gamma}=\Gamma_1\dots\Gamma_n$ a forest. We define a map $L\colon\mfF_0\to\mfF_1$ by
    \begin{multline}
      L\mathbf{\Gamma}=\sum_{k=1}^n\Gamma_1\dots\Gamma_{k-1}\big(\circlearrowleft\Gamma_k\big)\Gamma_{k+1}\dots\Gamma_n\\
      +\sum_{1\leq i\neq j\leq n}\big(\Gamma_i\graft\Gamma_j\big)\Gamma_1\dots\Gamma_{i-1}\Gamma_{i+1}\dots\Gamma_{j-1}\Gamma_{j+1}\dots\Gamma_n\,.
    \end{multline}
    We define another map $D\colon F_1\to\mfF_0$ by
    \begin{equation}
      D\Gamma=\sum_{e\in\mcE_\star}\sum_{k=1}^N D_{e,k}\Gamma\,,
    \end{equation}
    where this latter operation cuts the red edge that has decoration $\mfd(e)=1$ and creates two legs with $\mfe$ decoration respectively $\mfe_+(e)$ and $\mfe_-(e)$, and $\mft$ decoration $k$ on both sides. We extend this map to forests using Leibniz' rule.
    \begin{equation}
      D\mathbf{\Gamma}=\sum_{k=1}^n\Gamma_1\dots\Gamma_{k-1}\big(D\Gamma_k\big)\Gamma_{k+1}\dots\Gamma_n\,.
    \end{equation}
    Finally we extend it as a linear map $D\colon \langle\mfF_1\rangle\to\langle\mfF_0\rangle$.
  \end{definition}
    \begin{example}
    Let us demonstrate this on a simple example
    \begin{equs}
      D\begin{tikzpicture}[scale=0.15,baseline=-0.1cm, trim right=1.1cm, trim left=-0.5cm]
			\node at (0,0)  [dot,label= {[label distance=-0.2em]below: \scriptsize  $      $} ] (center) {};
			\node at (-4,0)  [label= {[label distance=-0.2em]right: \scriptsize  $     $} ] (topleft) {};
			\node at (0,4)  [label= {[label distance=-0.2em]right: \scriptsize  $     $} ] (botleft) {};
			\node at (0,-4)  [label= {[label distance=-0.2em]right: \scriptsize  $     $} ] (topright) {};
      \node at (4,0)  [dot,label= {[label distance=-0.2em]right: \scriptsize  $     $} ] (centerright) {};
      \node at (4,4)  [label= {[label distance=-0.2em]right: \scriptsize  $     $} ] (leg1) {};
      \node at (8,0)  [label= {[label distance=-0.2em]right: \scriptsize  $     $} ] (leg2) {};
      \node at (4,-4)  [label= {[label distance=-0.2em]right: \scriptsize  $     $} ] (leg3) {};
      \draw[arrow,color=red] (center) to
			node [sloped,below] {\small }     (centerright);
			\draw[kernel1,color=blue] (center) to
			node [sloped,below] {\small }     (topright);
      \draw[kernel1,color=blue] (center) to
			node [sloped,below] {\small }     (topleft);
      \draw[kernel1,color=blue] (center) to
			node [sloped,below] {\small }     (botleft);
      \draw[kernel1,color=blue] (centerright) to
			node [sloped,below] {\small }     (leg1);
      \draw[kernel1,color=blue] (centerright) to
			node [sloped,below] {\small }     (leg2);
      \draw[kernel1,color=blue] (centerright) to
			node [sloped,below] {\small }     (leg3);
      \node at (3,1.25)  [fill=white, scale=0.5] {$1$};
      \node at (1,1.25)  [fill=white, scale=0.5] {$1$};
    \end{tikzpicture}\begin{tikzpicture}[scale=0.15,baseline=-0.1cm, trim left=-0.4cm, trim right=0.8cm]
			\node at (0,0)  [dot,label= {[label distance=-0.2em]below: \scriptsize  $      $} ] (center) {};
			\node at (-3,3)  [label= {[label distance=-0.2em]right: \scriptsize  $     $} ] (topleft) {};
			\node at (-3,-3)  [label= {[label distance=-0.2em]right: \scriptsize  $     $} ] (botleft) {};
      \node at (4,0)  [dot,label= {[label distance=-0.2em]right: \scriptsize  $     $} ] (centerright) {};
      \node at (7,3)  [label= {[label distance=-0.2em]right: \scriptsize  $     $} ] (leg1) {};
      \node at (7,-3)  [label= {[label distance=-0.2em]right: \scriptsize  $     $} ] (leg2) {};
      \draw[arrow, color=red] (centerright) .. controls (3,1) and (1,1) .. (center);
      \draw[arrow] (center) .. controls (1,-1) and (3,-1) .. (centerright);
      \draw[kernel1,color=blue] (center) to
			node [sloped,below] {\small }     (topleft);
      \draw[kernel1,color=blue] (center) to
			node [sloped,below] {\small }     (botleft);
      \draw[kernel1,color=blue] (centerright) to
			node [sloped,below] {\small }     (leg1);
      \draw[kernel1,color=blue] (centerright) to
			node [sloped,below] {\small }     (leg2);
    \end{tikzpicture}=\begin{tikzpicture}[scale=0.2,baseline=-0.1cm, trim right=0.6cm, trim left=-0.6cm]
			\node at (0,0)  [dot,label= {[label distance=-0.2em]below: \scriptsize  $      $} ] (center) {};
			\node at (-3,3)  [label= {[label distance=-0.2em]right: \scriptsize  $     $} ] (topleft) {};
			\node at (-3,-3)  [label= {[label distance=-0.2em]right: \scriptsize  $     $} ] (botleft) {};
			\node at (3,3)  [label= {[label distance=-0.2em]right: \scriptsize  $     $} ] (topright) {};
      \node at (3,-3)  [label= {[label distance=-0.2em]right: \scriptsize  $     $} ] (botright) {};
			\draw[kernel1,color=blue] (center) to
			node [sloped,below] {\small }     (topright);
      \draw[kernel1,color=blue] (center) to
			node [sloped,below] {\small }     (topleft);
      \draw[kernel1,color=blue] (center) to
			node [sloped,below] {\small }     (botright);
      \draw[kernel1,color=blue] (center) to
			node [sloped,below] {\small }     (botleft);
      \node at (1.2,1.2)  [circle,fill=white, scale=0.5] {$1$};
    \end{tikzpicture}\begin{tikzpicture}[scale=0.2,baseline=-0.1cm, trim right=0.6cm, trim left=-0.6cm]
			\node at (0,0)  [dot,label= {[label distance=-0.2em]below: \scriptsize  $      $} ] (center) {};
			\node at (-3,3)  [label= {[label distance=-0.2em]right: \scriptsize  $     $} ] (topleft) {};
			\node at (-3,-3)  [label= {[label distance=-0.2em]right: \scriptsize  $     $} ] (botleft) {};
			\node at (3,3)  [label= {[label distance=-0.2em]right: \scriptsize  $     $} ] (topright) {};
      \node at (3,-3)  [label= {[label distance=-0.2em]right: \scriptsize  $     $} ] (botright) {};
			\draw[kernel1,color=blue] (center) to
			node [sloped,below] {\small }     (topright);
      \draw[kernel1,color=blue] (center) to
			node [sloped,below] {\small }     (topleft);
      \draw[kernel1,color=blue] (center) to
			node [sloped,below] {\small }     (botright);
      \draw[kernel1,color=blue] (center) to
			node [sloped,below] {\small }     (botleft);
      \node at (1.2,1.2)  [circle,fill=white, scale=0.5] {$1$};
    \end{tikzpicture}\begin{tikzpicture}[scale=0.15,baseline=-0.1cm, trim left=-0.4cm, trim right=0.8cm]
			\node at (0,0)  [dot,label= {[label distance=-0.2em]below: \scriptsize  $      $} ] (center) {};
			\node at (-3,3)  [label= {[label distance=-0.2em]right: \scriptsize  $     $} ] (topleft) {};
			\node at (-3,-3)  [label= {[label distance=-0.2em]right: \scriptsize  $     $} ] (botleft) {};
      \node at (4,0)  [dot,label= {[label distance=-0.2em]right: \scriptsize  $     $} ] (centerright) {};
      \node at (7,3)  [label= {[label distance=-0.2em]right: \scriptsize  $     $} ] (leg1) {};
      \node at (7,-3)  [label= {[label distance=-0.2em]right: \scriptsize  $     $} ] (leg2) {};
      \draw[arrow, color=red] (centerright) .. controls (3,1) and (1,1) .. (center);
      \draw[arrow] (center) .. controls (1,-1) and (3,-1) .. (centerright);
      \draw[kernel1,color=blue] (center) to
			node [sloped,below] {\small }     (topleft);
      \draw[kernel1,color=blue] (center) to
			node [sloped,below] {\small }     (botleft);
      \draw[kernel1,color=blue] (centerright) to
			node [sloped,below] {\small }     (leg1);
      \draw[kernel1,color=blue] (centerright) to
			node [sloped,below] {\small }     (leg2);
    \end{tikzpicture}+\begin{tikzpicture}[scale=0.15,baseline=-0.1cm, trim right=1.1cm, trim left=-0.5cm]
			\node at (0,0)  [dot,label= {[label distance=-0.2em]below: \scriptsize  $      $} ] (center) {};
			\node at (-4,0)  [label= {[label distance=-0.2em]right: \scriptsize  $     $} ] (topleft) {};
			\node at (0,4)  [label= {[label distance=-0.2em]right: \scriptsize  $     $} ] (botleft) {};
			\node at (0,-4)  [label= {[label distance=-0.2em]right: \scriptsize  $     $} ] (topright) {};
      \node at (4,0)  [dot,label= {[label distance=-0.2em]right: \scriptsize  $     $} ] (centerright) {};
      \node at (4,4)  [label= {[label distance=-0.2em]right: \scriptsize  $     $} ] (leg1) {};
      \node at (8,0)  [label= {[label distance=-0.2em]right: \scriptsize  $     $} ] (leg2) {};
      \node at (4,-4)  [label= {[label distance=-0.2em]right: \scriptsize  $     $} ] (leg3) {};
      \draw[arrow,color=red] (center) to
			node [sloped,below] {\small }     (centerright);
			\draw[kernel1,color=blue] (center) to
			node [sloped,below] {\small }     (topright);
      \draw[kernel1,color=blue] (center) to
			node [sloped,below] {\small }     (topleft);
      \draw[kernel1,color=blue] (center) to
			node [sloped,below] {\small }     (botleft);
      \draw[kernel1,color=blue] (centerright) to
			node [sloped,below] {\small }     (leg1);
      \draw[kernel1,color=blue] (centerright) to
			node [sloped,below] {\small }     (leg2);
      \draw[kernel1,color=blue] (centerright) to
			node [sloped,below] {\small }     (leg3);
      \node at (3,1.25)  [fill=white, scale=0.5] {$1$};
      \node at (1,1.25)  [fill=white, scale=0.5] {$1$};
    \end{tikzpicture}\begin{tikzpicture}[scale=0.15,baseline=-0.1cm,trim right=1.1cm, trim left=-0.5cm]
			\node at (0,0)  [dot,label= {[label distance=-0.2em]below: \scriptsize  $      $} ] (center) {};
			\node at (-4,0)  [label= {[label distance=-0.2em]right: \scriptsize  $     $} ] (topleft) {};
			\node at (0,4)  [label= {[label distance=-0.2em]right: \scriptsize  $     $} ] (botleft) {};
			\node at (0,-4)  [label= {[label distance=-0.2em]right: \scriptsize  $     $} ] (topright) {};
      \node at (4,0)  [dot,label= {[label distance=-0.2em]right: \scriptsize  $     $} ] (centerright) {};
      \node at (4,4)  [label= {[label distance=-0.2em]right: \scriptsize  $     $} ] (leg1) {};
      \node at (8,0)  [label= {[label distance=-0.2em]right: \scriptsize  $     $} ] (leg2) {};
      \node at (4,-4)  [label= {[label distance=-0.2em]right: \scriptsize  $     $} ] (leg3) {};
      \draw[arrow] (center) to
			node [sloped,below] {\small }     (centerright);
			\draw[kernel1,color=blue] (center) to
			node [sloped,below] {\small }     (topright);
      \draw[kernel1,color=blue] (center) to
			node [sloped,below] {\small }     (topleft);
      \draw[kernel1,color=blue] (center) to
			node [sloped,below] {\small }     (botleft);
      \draw[kernel1,color=blue] (centerright) to
			node [sloped,below] {\small }     (leg1);
      \draw[kernel1,color=blue] (centerright) to
			node [sloped,below] {\small }     (leg2);
      \draw[kernel1,color=blue] (centerright) to
			node [sloped,below] {\small }     (leg3);
    \end{tikzpicture}\,.
    \end{equs}
    For vector-valued theories, this operation would read, for example
    \begin{equs}
      D\begin{tikzpicture}[scale=0.2,baseline=-0.1cm, trim right=1.4cm, trim left=-0.8cm]
        \node at (0,0)  [dot,label= {[label distance=-0.2em]below: \scriptsize  $      $} ] (center) {};
        \node at (-4,0)  [label= {[label distance=-0.2em]right: \scriptsize  $     $} ] (topleft) {};
        \node at (0,4)  [label= {[label distance=-0.2em]right: \scriptsize  $     $} ] (botleft) {};
        \node at (0,-4)  [label= {[label distance=-0.2em]right: \scriptsize  $     $} ] (topright) {};
        \node at (4,0)  [dot,label= {[label distance=-0.2em]right: \scriptsize  $     $} ] (centerright) {};
        \node at (4,4)  [label= {[label distance=-0.2em]right: \scriptsize  $     $} ] (leg1) {};
        \node at (8,0)  [label= {[label distance=-0.2em]right: \scriptsize  $     $} ] (leg2) {};
        \node at (4,-4)  [label= {[label distance=-0.2em]right: \scriptsize  $     $} ] (leg3) {};
        \draw[arrow,color=red] (center) to
        node [sloped,below] {\small }     (centerright);
        \draw[kernel1,color=blue] (center) to
        node [sloped,below] {\small }     (topright);
        \draw[kernel1,color=blue] (center) to
        node [sloped,below] {\small }     (topleft);
        \draw[kernel1,color=blue] (center) to
        node [sloped,below] {\small }     (botleft);
        \draw[kernel1,color=blue] (centerright) to
        node [sloped,below] {\small }     (leg1);
        \draw[kernel1,color=blue] (centerright) to
        node [sloped,below] {\small }     (leg2);
        \draw[kernel1,color=blue] (centerright) to
        node [sloped,below] {\small }     (leg3);
        \node at (0,1.7)  [circle,fill=white, scale=0.5] {$\mathcolor{purple}{a}$};
        \node at (-1.7,0)  [circle,fill=white, scale=0.5] {$\mathcolor{purple}{b}$};
        \node at (0,-1.7)  [circle,fill=white, scale=0.5] {$\mathcolor{purple}{c}$};
        \node at (4,-1.7)  [circle,fill=white, scale=0.5] {$\mathcolor{purple}{d}$};
        \node at (5.7,0)  [circle,fill=white, scale=0.5] {$\mathcolor{purple}{e}$};
        \node at (4,1.7)  [circle,fill=white, scale=0.5] {$\mathcolor{purple}{f}$};
      \end{tikzpicture}=\sum_{k=1}^N\begin{tikzpicture}[scale=0.2,baseline=-0.1cm, trim right=0.6cm, trim left=-0.6cm]
			\node at (0,0)  [dot,label= {[label distance=-0.2em]below: \scriptsize  $      $} ] (center) {};
			\node at (-3,3)  [label= {[label distance=-0.2em]right: \scriptsize  $     $} ] (topleft) {};
			\node at (-3,-3)  [label= {[label distance=-0.2em]right: \scriptsize  $     $} ] (botleft) {};
			\node at (3,3)  [label= {[label distance=-0.2em]right: \scriptsize  $     $} ] (topright) {};
      \node at (3,-3)  [label= {[label distance=-0.2em]right: \scriptsize  $     $} ] (botright) {};
			\draw[kernel1,color=blue] (center) to
			node [sloped,below] {\small }     (topright);
      \draw[kernel1,color=blue] (center) to
			node [sloped,below] {\small }     (topleft);
      \draw[kernel1,color=blue] (center) to
			node [sloped,below] {\small }     (botright);
      \draw[kernel1,color=blue] (center) to
			node [sloped,below] {\small }     (botleft);
      \node at (1.2,1.2)  [circle,fill=white, scale=0.5] {$\mathcolor{purple}{k}$};
      \node at (-1.2,1.2)  [circle,fill=white, scale=0.5] {$\mathcolor{purple}{a}$};
      \node at (-1.2,-1.2)  [circle,fill=white, scale=0.5] {$\mathcolor{purple}{b}$};
      \node at (1.2,-1.2)  [circle,fill=white, scale=0.5] {$\mathcolor{purple}{c}$};
    \end{tikzpicture}\begin{tikzpicture}[scale=0.2,baseline=-0.1cm, trim right=0.6cm, trim left=-0.6cm]
			\node at (0,0)  [dot,label= {[label distance=-0.2em]below: \scriptsize  $      $} ] (center) {};
			\node at (-3,3)  [label= {[label distance=-0.2em]right: \scriptsize  $     $} ] (topleft) {};
			\node at (-3,-3)  [label= {[label distance=-0.2em]right: \scriptsize  $     $} ] (botleft) {};
			\node at (3,3)  [label= {[label distance=-0.2em]right: \scriptsize  $     $} ] (topright) {};
      \node at (3,-3)  [label= {[label distance=-0.2em]right: \scriptsize  $     $} ] (botright) {};
			\draw[kernel1,color=blue] (center) to
			node [sloped,below] {\small }     (topright);
      \draw[kernel1,color=blue] (center) to
			node [sloped,below] {\small }     (topleft);
      \draw[kernel1,color=blue] (center) to
			node [sloped,below] {\small }     (botright);
      \draw[kernel1,color=blue] (center) to
			node [sloped,below] {\small }     (botleft);
      \node at (1.2,1.2)  [circle,fill=white, scale=0.5] {$\mathcolor{purple}{f}$};
      \node at (-1.2,1.2)  [circle,fill=white, scale=0.5] {$\mathcolor{purple}{k}$};
      \node at (-1.2,-1.2)  [circle,fill=white, scale=0.5] {$\mathcolor{purple}{d}$};
      \node at (1.2,-1.2)  [circle,fill=white, scale=0.5] {$\mathcolor{purple}{e}$};
    \end{tikzpicture},
    \end{equs}
    where we have this time denoted the $\mft$ decoration in purple.
  \end{example}
  Furthermore we introduce a notion of inner product for forest that reads on $\mfF\otimes\mfF$
  \begin{equation}\label{eq:innerprodforest}
    \langle\Gamma_1^1\dots\Gamma_n^1,\Gamma_1^2\dots\Gamma_m^2\rangle=\one_{n=m}\sum_{\pi\in\mfS_n}\prod_{k=1}^n\langle\Gamma_k^1,\Gamma_{\pi(k)}^2\rangle\,,
  \end{equation}
  and is extended to $\langle\mfF\rangle\otimes\langle\mfF\rangle$ by bilinearity. Note that this definition respects the symmetry of the forest product. Moreover, it corresponds to the usual scalar product on forests seen as not necessarily connected diagrams.
  \begin{remark}
    At this step, the reader might wonder why we require to use forests of diagrams. In the work \cite{BM25}, the authors simply use tensor product of trees, but this is possible only because a cut in a tree always produces two trees. This is not the case for graphs. Indeed, if one takes a 1PR diagram, we have, for instance
    \begin{equs}
      D\begin{tikzpicture}[scale=0.2,baseline=0.3cm,trim right=0.3cm]
			\node at (0,0)  [dot,label= {[label distance=-0.2em]below: \scriptsize  $      $} ] (root) {};
			\node at (-2.5,2)  [dot,label= {[label distance=-0.2em]right: \scriptsize  $     $} ] (center) {};
			\node at (0,4)  [dot,label= {[label distance=-0.2em]right: \scriptsize  $     $} ] (centerc) {};
			\node at (2,-1)  [,label= {[label distance=-0.2em]right: \scriptsize  $     $} ] (right) {};
      \node at (2,5)  [,label= {[label distance=-0.2em]right: \scriptsize  $     $} ] (topright) {};
      \node at (-5.5,2)  [dot,label= {[label distance=-0.2em]right: \scriptsize  $     $} ] (centerleft) {};
      \node at (-8,4)  [dot,label= {[label distance=-0.2em]right: \scriptsize  $     $} ] (topleft) {};
			\node at (-8,0)  [dot,label= {[label distance=-0.2em]right: \scriptsize  $     $} ] (botleft) {};
      \node at (-10,-1)  [label= {[label distance=-0.2em]right: \scriptsize  $     $} ] (leg1) {};
      \node at (-10,5)  [label= {[label distance=-0.2em]right: \scriptsize  $     $} ] (leg2) {};
      \draw[arrow] (center) to
			node [sloped,below] {\small }     (root);
			\draw[arrow] (center) to
			node [sloped,below] {\small }     (centerc);
			\draw[arrow] (centerc) to
			node [sloped,below] {\small }     (root);
      \draw[arrow,color=red] (centerleft) to
			node [sloped,below] {\small }     (center);
      \draw[arrow] (centerleft) to
			node [sloped,below] {\small }     (topleft);
      \draw[arrow] (centerleft) to
			node [sloped,below] {\small }     (botleft);
      \draw[arrow] (topleft) to
			node [sloped,below] {\small }     (botleft);
			\draw[kernel1,color=blue] (root) to
			node [sloped,below] {\small }     (right);
      \draw[kernel1,color=blue] (centerc) to
			node [sloped,below] {\small }     (topright);
      \draw[kernel1,color=blue] (topleft) to
			node [sloped,below] {\small }     (leg2);
      \draw[kernel1,color=blue] (botleft) to
			node [sloped,below] {\small }     (leg1);
    \end{tikzpicture}=\begin{tikzpicture}[scale=0.2,baseline=0.3cm,trim right=-0.6cm]
			\node at (-3,2)  [label= {[label distance=-0.2em]right: \scriptsize  $     $} ] (center) {};
      \node at (-5.5,2)  [dot,label= {[label distance=-0.2em]right: \scriptsize  $     $} ] (centerleft) {};
      \node at (-8,4)  [dot,label= {[label distance=-0.2em]right: \scriptsize  $     $} ] (topleft) {};
			\node at (-8,0)  [dot,label= {[label distance=-0.2em]right: \scriptsize  $     $} ] (botleft) {};
      \node at (-10,-1)  [label= {[label distance=-0.2em]right: \scriptsize  $     $} ] (leg1) {};
      \node at (-10,5)  [label= {[label distance=-0.2em]right: \scriptsize  $     $} ] (leg2) {};
      \draw[kernel1,color=blue] (center) to
			node [sloped,below] {\small }     (centerleft);
      \draw[arrow] (centerleft) to
			node [sloped,below] {\small }     (topleft);
      \draw[arrow] (centerleft) to
			node [sloped,below] {\small }     (botleft);
      \draw[arrow] (topleft) to
			node [sloped,below] {\small }     (botleft);
      \draw[kernel1,color=blue] (topleft) to
			node [sloped,below] {\small }     (leg2);
      \draw[kernel1,color=blue] (botleft) to
			node [sloped,below] {\small }     (leg1);
    \end{tikzpicture}\begin{tikzpicture}[scale=0.2,baseline=0.3cm,trim right=-0.5cm]
			\node at (-3,2)  [label= {[label distance=-0.2em]right: \scriptsize  $     $} ] (center) {};
      \node at (-5.5,2)  [dot,label= {[label distance=-0.2em]right: \scriptsize  $     $} ] (centerleft) {};
      \node at (-8,4)  [dot,label= {[label distance=-0.2em]right: \scriptsize  $     $} ] (topleft) {};
			\node at (-8,0)  [dot,label= {[label distance=-0.2em]right: \scriptsize  $     $} ] (botleft) {};
      \node at (-10,-1)  [label= {[label distance=-0.2em]right: \scriptsize  $     $} ] (leg1) {};
      \node at (-10,5)  [label= {[label distance=-0.2em]right: \scriptsize  $     $} ] (leg2) {};
      \draw[kernel1,color=blue] (center) to
			node [sloped,below] {\small }     (centerleft);
      \draw[arrow] (centerleft) to
			node [sloped,below] {\small }     (topleft);
      \draw[arrow] (centerleft) to
			node [sloped,below] {\small }     (botleft);
      \draw[arrow] (topleft) to
			node [sloped,below] {\small }     (botleft);
      \draw[kernel1,color=blue] (topleft) to
			node [sloped,below] {\small }     (leg2);
      \draw[kernel1,color=blue] (botleft) to
			node [sloped,below] {\small }     (leg1);
    \end{tikzpicture}\,,
    \end{equs}
    where for a 1PI diagram, the graph remains connected
    \begin{equs}
      D\begin{tikzpicture}[scale=0.2,baseline=0.3cm,trim right=0.3cm]
			\node at (0,0)  [dot,label= {[label distance=-0.2em]below: \scriptsize  $      $} ] (root) {};
			\node at (-2.5,2)  [dot,label= {[label distance=-0.2em]right: \scriptsize  $     $} ] (center) {};
			\node at (0,4)  [dot,label= {[label distance=-0.2em]right: \scriptsize  $     $} ] (centerc) {};
			\node at (2,-1)  [,label= {[label distance=-0.2em]right: \scriptsize  $     $} ] (right) {};
      \node at (2,5)  [,label= {[label distance=-0.2em]right: \scriptsize  $     $} ] (topright) {};
      \node at (-5.5,2)  [dot,label= {[label distance=-0.2em]right: \scriptsize  $     $} ] (centerleft) {};
      \node at (-8,4)  [dot,label= {[label distance=-0.2em]right: \scriptsize  $     $} ] (topleft) {};
			\node at (-8,0)  [dot,label= {[label distance=-0.2em]right: \scriptsize  $     $} ] (botleft) {};
      \node at (-10,-1)  [label= {[label distance=-0.2em]right: \scriptsize  $     $} ] (leg1) {};
      \node at (-10,5)  [label= {[label distance=-0.2em]right: \scriptsize  $     $} ] (leg2) {};
      \draw[arrow] (center) to
			node [sloped,below] {\small }     (root);
			\draw[arrow] (center) to
			node [sloped,below] {\small }     (centerc);
			\draw[arrow] (centerc) to
			node [sloped,below] {\small }     (root);
      \draw[arrow] (centerleft) to
			node [sloped,below] {\small }     (center);
      \draw[arrow,color=red] (centerleft) to
			node [sloped,below] {\small }     (topleft);
      \draw[arrow] (centerleft) to
			node [sloped,below] {\small }     (botleft);
      \draw[arrow] (topleft) to
			node [sloped,below] {\small }     (botleft);
			\draw[kernel1,color=blue] (root) to
			node [sloped,below] {\small }     (right);
      \draw[kernel1,color=blue] (centerc) to
			node [sloped,below] {\small }     (topright);
      \draw[kernel1,color=blue] (topleft) to
			node [sloped,below] {\small }     (leg2);
      \draw[kernel1,color=blue] (botleft) to
			node [sloped,below] {\small }     (leg1);
    \end{tikzpicture}=\begin{tikzpicture}[scale=0.2,baseline=0.3cm,trim right=0.3cm]
			\node at (0,0)  [dot,label= {[label distance=-0.2em]below: \scriptsize  $      $} ] (root) {};
			\node at (-2.5,2)  [dot,label= {[label distance=-0.2em]right: \scriptsize  $     $} ] (center) {};
			\node at (0,4)  [dot,label= {[label distance=-0.2em]right: \scriptsize  $     $} ] (centerc) {};
			\node at (2,-1)  [,label= {[label distance=-0.2em]right: \scriptsize  $     $} ] (right) {};
      \node at (2,5)  [,label= {[label distance=-0.2em]right: \scriptsize  $     $} ] (topright) {};
      \node at (-5.5,2)  [dot,label= {[label distance=-0.2em]right: \scriptsize  $     $} ] (centerleft) {};
      \node at (-8,4)  [dot,label= {[label distance=-0.2em]right: \scriptsize  $     $} ] (topleft) {};
			\node at (-8,0)  [dot,label= {[label distance=-0.2em]right: \scriptsize  $     $} ] (botleft) {};
      \node at (-10,-1)  [label= {[label distance=-0.2em]right: \scriptsize  $     $} ] (leg1) {};
      \node at (-10,5)  [label= {[label distance=-0.2em]right: \scriptsize  $     $} ] (leg2) {};
      \node at (-6.1,4.5)  [label= {[label distance=-0.2em]right: \scriptsize  $     $} ] (leg3) {};
      \node at (-5.5,4)  [label= {[label distance=-0.2em]right: \scriptsize  $     $} ] (leg4) {};
      \draw[arrow] (center) to
			node [sloped,below] {\small }     (root);
			\draw[arrow] (center) to
			node [sloped,below] {\small }     (centerc);
			\draw[arrow] (centerc) to
			node [sloped,below] {\small }     (root);
      \draw[arrow] (centerleft) to
			node [sloped,below] {\small }     (center);
      \draw[arrow] (centerleft) to
			node [sloped,below] {\small }     (botleft);
      \draw[arrow] (topleft) to
			node [sloped,below] {\small }     (botleft);
			\draw[kernel1,color=blue] (root) to
			node [sloped,below] {\small }     (right);
      \draw[kernel1,color=blue] (centerc) to
			node [sloped,below] {\small }     (topright);
      \draw[kernel1,color=blue] (topleft) to
			node [sloped,below] {\small }     (leg2);
      \draw[kernel1,color=blue] (botleft) to
			node [sloped,below] {\small }     (leg1);
      \draw[kernel1,color=blue] (topleft) to
			node [sloped,below] {\small }     (leg3);
      \draw[kernel1,color=blue] (centerleft) to
			node [sloped,below] {\small }     (leg4);
    \end{tikzpicture}\,.
    \end{equs}
  \end{remark}
  
  We are now ready to state and prove a duality relation between these two operators. It will be the cornerstone, on the combinatorial side, of the proof of the main theorem \ref{maintheorem}. The introduction of forests allows us to write it under a quite elegant form.
  \begin{lemma}\label{lemma:duality}
    The following duality relation holds on $\langle\mfF\rangle\otimes\langle\mfF\rangle$
    \begin{equation}
      \langle L\mathbf{\Gamma}^1,\mathbf{\Gamma}^2\rangle=\langle\mathbf{\Gamma}^1,D\mathbf{\Gamma}^2\rangle\,.
    \end{equation}
  \end{lemma}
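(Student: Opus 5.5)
We will reduce the duality to a statement about basis elements. The pairing \eqref{eq:innerprodforest} on single diagrams is not spelled out, so we read $\langle\Gamma,\Gamma'\rangle$ as the natural pairing counting isomorphisms, i.e.\ $\langle\Gamma,\Gamma'\rangle=S(\Gamma)\one_{\Gamma\simeq\Gamma'}$. The extension to forests via permutations then says $\langle\mathbf{\Gamma},\mathbf{\Gamma}'\rangle$ equals the number of isomorphisms between the two forests, viewed as not necessarily connected diagrams. Both sides of the claimed identity are bilinear, so it suffices to take $\mathbf{\Gamma}^1=\Gamma_1\dots\Gamma_n\in\mfF_0$ and $\mathbf{\Gamma}^2\in\mfF_1$ forests. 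The claim then becomes the following counting identity. The number of pairs (term of $L\mathbf{\Gamma}^1$, isomorphism onto $\mathbf{\Gamma}^2$) equals the number of pairs (term of $D\mathbf{\Gamma}^2$, isomorphism from $\mathbf{\Gamma}^1$ onto it).

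The key step is a bijection between these two sets, built by cutting and gluing. A term of $L\mathbf{\Gamma}^1$ is specified by an ordered pair of legs $(v_1,v_2)$ of $\mathbf{\Gamma}^1$ with $\mft(v_1)=\mft(v_2)$. Either both legs lie in the same component (this is $\circlearrowleft$) or in different components (this is $\graft$). Together with an isomorphism $\varphi$ from the glued forest onto $\mathbf{\Gamma}^2$, we map this term as follows.
\begin{itemize}
\item $\varphi$ sends the new red edge to the unique red edge $e$ of $\mathbf{\Gamma}^2$; this selects the term $D_{e,k}$ with $k=\mft(v_1)$.
\item $\varphi$ restricted to the remaining vertices and half-edges induces an isomorphism from $\mathbf{\Gamma}^1$ onto $D_{e,k}\mathbf{\Gamma}^2$. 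It sends $v_1$ and $v_2$ to the two newly created legs.
\end{itemize}
The conventions are set up to match. $\graft_{v_1,v_2}$ assigns the edge decoration $(\mfe(v_2),\mfe(v_1))$, while $D$ creates legs with decorations $\mfe_+(e)$ and $\mfe_-(e)$; these agree with $\mft=k$ on both sides. Conversely, given $D_{e,k}$ and an isomorphism $\psi\colon\mathbf{\Gamma}^1\to D_{e,k}\mathbf{\Gamma}^2$, pull back the two new legs to get $(v_1,v_2)$. Regluing them produces a term of $L\mathbf{\Gamma}^1$ together with an extension of $\psi$ to an isomorphism onto $\mathbf{\Gamma}^2$. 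The two maps are mutually inverse. The forest structure is consistent on both sides: cutting a 1PR edge splits one component into two, and cutting a 1PI edge keeps it connected. These two cases correspond exactly to the $\graft$ and $\circlearrowleft$ parts of $L$, and the Leibniz rule for $D$ matches the sum over components in $L$.

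The main obstacle is bookkeeping of multiplicities. First, $L$ sums over ordered pairs $i\neq j$ and over ordered pairs of legs, which produces each unordered gluing twice. Likewise $\circlearrowleft$ sums over ordered $(v_1,v_2)$, and we must check whether $v_1=v_2$ is excluded. On the $D$ side, orientation is fixed by the direction of $e$. So the orientation of $(v_1,v_2)$ must be matched precisely against source versus target of $e$, and an ordered pair glues to a given oriented edge in exactly one way. Second, the forest pairing sums over all $\pi\in\mfS_n$, which must be identified with the automorphism and isomorphism count of disconnected diagrams, including permutations of identical components. I would first verify this identification directly, so that the bijection argument above applies verbatim. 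I would also check the whole identity on the example given after the definition of $D$ to fix these conventions.
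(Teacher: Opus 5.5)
Your proof is correct. It uses the same mechanism as the paper's proof, but works directly on unlabelled forests instead of going through labelled structures.

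The paper's route is as follows:
\begin{itemize}
\item It lifts to labelled diagrams in $\mcG(n,m)$. There the pairing is a Kronecker delta, so the duality is immediate: gluing two specified legs gives $\mathbf\Gamma_2$ exactly when cutting its red edge gives $\mathbf\Gamma_1$.
\item It then uses that $L$ and $D$ are equivariant under $\mfS_n\times\mfS_m$ and commute with the unlabelling map $[\,\cdot\,]$.
\item Finally it uses the orbit-sum formula $\langle[\mathbf\Gamma_1],[\mathbf\Gamma_2]\rangle=\sum_{\sigma}\langle\sigma\cdot\mathbf\Gamma_1,\mathbf\Gamma_2\rangle$.
\end{itemize}

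The relabellings $\sigma$ with $\sigma\cdot\mathbf\Gamma_1=\mathbf\Gamma_2$ are exactly the isomorphisms. So the paper's chain of equalities is your bijection between (gluing term, isomorphism) pairs and (cut term, isomorphism) pairs. Equivariance plays the role of your observation that isomorphisms carry over across gluing and cutting.

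The labelled route is shorter: no isomorphism is built by hand, and all multiplicity questions go into one orbit-sum identity. Your explicit bijection already settles the checks you list at the end:
\begin{itemize}
\item The diagonal terms $v_1=v_2$ of $\circlearrowleft$ must be excluded, since $D$ always creates two distinct legs.
\item No factor $2$ arises, because $\psi$ itself fixes which pulled-back leg is reattached at the source of $e$.
\item The permutation-sum pairing equals the number of forest isomorphisms, because such an isomorphism is a bijection of connected components plus isomorphisms of the matched components.
\end{itemize}
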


\begin{proof}
  Let $n,m\in\N$. We set $\mcG(n,m)$ the set of (not necessarily connected) labelled diagrams with $n$ vertices and $m$ half-edges. We give two examples of graphs in $\mcG(2,6)$ 
  \begin{equs}
    % [inline block 1: 14 envs, 27596 chars in 2 pieces, piece 1 here, a bare % at each other -> data_tex | \begin{tikzpicture}[scale=0.5,baseline=-0.1cm, trim right=5cm] 			\node at (0,0)  [circle] (center) {};...]
.
  \end{equs}
  In the picture above, the half-legs are represented in orange. The linear maps $L$ and $D$ have a natural counterpart on this set that respects the labelling. For example, we have
  \begin{equs}
    D%
  \end{equs}
  $\langle\mcG(n,m)\rangle$ is endowed with an inner product that reads on $\mcG(n,m)\otimes\mcG(n,m)$
  \begin{equs}
    \langle\mathbf\Gamma_1,\mathbf\Gamma_2\rangle=\mathbf{1}_{\mathbf\Gamma_1=\mathbf\Gamma_2}\,,
  \end{equs}
  and is extended by bilinearity. In addition, $\mcG(n,m)$ is also endowed with a group action of $\mfS_n\times\mfS_m$. At this step, it is clear that the duality relation
  \begin{equs} \label{Adjoint_L_D}
    \langle L\mathbf{\Gamma}_1,\mathbf{\Gamma}_2\rangle=\langle\mathbf{\Gamma}_1,D\mathbf{\Gamma}_2\rangle
  \end{equs}
  holds on $\langle\mcG(n,m)\rangle$, and that $L$ and $D$ are equivariant, \ie, for $\sigma\in\mfS_n\times\mfS_m$,
  \begin{equs} \label{commutation_permutation}
    \sigma\cdot L\mathbf{\Gamma}=L(\sigma\cdot\mathbf{\Gamma}),\quad\mathrm{and}\quad\sigma\cdot D\mathbf{\Gamma}=D(\sigma\cdot\mathbf{\Gamma})\,.
  \end{equs}
  Indeed, the maps $ L $ and $D$ do not change the labels on the nodes and the half-edges. They connect or disconnect two half-edges which is compatible with the action of a permutation $\sigma\in\mfS_n\times\mfS_m$. 
  Let us define $\mfG(n,m)\subset\mfF$ the subset of (unlabelled) forests with $n$ vertices and $m$ half-edges. There is a naturally covariant map $[\bigcdot]\colon\mcG(n,m)\to\mfG(n,m)$ that associates to $\mathbf{\Gamma}$ its orbit in $\mfG(n,m)$. It can also be seen as an unlabelling map. The inner product defined in \eqref{eq:innerprodforest} also reads
  \begin{equs}
    \langle  [\mathbf{\Gamma}_1],[\mathbf{\Gamma}_2]\rangle=\sum_{\sigma\in\mfS_n\times\mfS_m}\langle\sigma\cdot\mathbf{\Gamma}_1,\mathbf{\Gamma}_2\rangle=\sum_{\sigma\in\mfS_n\times\mfS_m}\langle\mathbf{\Gamma}_1,\sigma\cdot\mathbf{\Gamma}_2\rangle\,.
  \end{equs}
  It is also easy to see that
  \begin{equs}
    L([\mathbf\Gamma])=[L\mathbf\Gamma]\quad\mathrm{and}\quad D([\mathbf\Gamma])=[D\mathbf\Gamma]\,.
  \end{equs}
 for the same reasons as for \eqref{commutation_permutation}.
  We are now ready to conclude the proof. We have
  \begin{equs}
    \langle L([\mathbf\Gamma_1]),[\mathbf\Gamma_2]\rangle&=\langle [L\mathbf\Gamma_1],[\mathbf\Gamma_2]\rangle=\sum_{\sigma\in\mfS_n\times\mfS_m}\langle L\mathbf\Gamma_1,\sigma\cdot\mathbf\Gamma_2\rangle\\
    &=\sum_{\sigma\in\mfS_n\times\mfS_m}\langle \mathbf\Gamma_1,D(\sigma\cdot\mathbf\Gamma_2)\rangle=\sum_{\sigma\in\mfS_n\times\mfS_m}\langle \mathbf\Gamma_1,\sigma\cdot D\mathbf\Gamma_2\rangle\\
    &=\sum_{\sigma\in\mfS_n\times\mfS_m}\langle \sigma\cdot\mathbf\Gamma_1, D\mathbf\Gamma_2\rangle=\langle[\mathbf\Gamma_1],[D\mathbf\Gamma_2]\rangle\\
    &=\langle[\mathbf\Gamma_1],D[\mathbf\Gamma_2]\rangle\,.
  \end{equs}
\end{proof}

\begin{remark} The previous proof is inspired by the species theory developed by Joyal \cite{J81}. Species were already considered in the context of QFT in \cite{Faris} where the authors seems to consider  similar operations to $ \circlearrowleft $ and $ \graft $ directly at the level of species (see Section $4$ therein).
	\end{remark}

\subsection{Putting everything together}\label{sec:lastsection}

We can eventually prove in this last section the main theorem \ref{maintheorem}. We first start with an elementary lemma.

\begin{lemma}\label{lemma:coeffuparrow}
  Let $\Gamma\in F_0$. There is a collection of different $\Gamma_i$, all of the form $\uparrow^e\Gamma$ for $e\in\CE_\star$ such that
  \begin{equation}
    \uparrow\Gamma=\sum_i\frac{S(\Gamma)}{S(\Gamma_i)}\Gamma_i.
  \end{equation}
\end{lemma}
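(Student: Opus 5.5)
This is an orbit–stabiliser argument for the action of $\mathrm{Aut}(\Gamma)$ on the set $\mcE_\star$ of internal edges. An automorphism $\sigma\in\mathrm{Aut}(\Gamma)$ permutes vertices and half-edges while preserving endpoints and decorations. It therefore maps internal edges to internal edges, and we get a group action of $\mathrm{Aut}(\Gamma)$ on $\mcE_\star$.

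The first step is to check that two diagrams $\uparrow^e\Gamma$ and $\uparrow^{e'}\Gamma$ are isomorphic exactly when $e$ and $e'$ lie in the same $\mathrm{Aut}(\Gamma)$-orbit.
\begin{itemize}
  \item If $e'=\sigma(e)$, then $\sigma$ itself is an isomorphism $\uparrow^e\Gamma\to\uparrow^{e'}\Gamma$, since it carries the red mark along.
  \item Conversely, an isomorphism $\uparrow^e\Gamma\to\uparrow^{e'}\Gamma$ must send the unique edge with $\mfd=1$ to the unique edge with $\mfd=1$. Forgetting the $\mfd$ decoration, it is then an automorphism of $\Gamma$ mapping $e$ to $e'$.
\end{itemize}
Grouping the sum $\uparrow\Gamma=\sum_{e\in\mcE_\star}\uparrow^e\Gamma$ by orbits, we get $\uparrow\Gamma=\sum_{O}|O|\,\Gamma_O$. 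Here $O$ runs over the orbits of $\mathrm{Aut}(\Gamma)$ on $\mcE_\star$, and $\Gamma_O=\uparrow^{e_O}\Gamma$ for a chosen representative $e_O\in O$. By the first step the $\Gamma_O$ are pairwise distinct.

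The second step identifies $\mathrm{Aut}(\Gamma_O)$ with the stabiliser $\mathrm{Stab}(e_O)\subset\mathrm{Aut}(\Gamma)$. An automorphism of $\uparrow^{e_O}\Gamma$ must fix the red edge, since it is the only edge with $\mfd=1$. Forgetting the red mark, it is then an automorphism of $\Gamma$ fixing $e_O$. Conversely, any element of $\mathrm{Stab}(e_O)$ respects the red mark. The orbit–stabiliser theorem then gives $|O|=|\mathrm{Aut}(\Gamma)|/|\mathrm{Stab}(e_O)|=S(\Gamma)/S(\Gamma_O)$, which is the claimed formula.

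The main obstacle is the precise meaning of ``fixing the edge $e$'' when $\Gamma$ has multi-edges or self-loops. An automorphism could map $e$ to itself while swapping its two half-edges, for a self-loop or a reversed edge. It could also exchange $e$ with a parallel edge that carries identical decorations. I would handle this with the paper's convention that automorphisms act on half-edges and preserve the directed edge structure and the decorations $(\mfe_+,\mfe_-)$. A half-edge swap on $e$ is then allowed only when it preserves the orientation and decoration data. Such a swap stays an automorphism of $\uparrow^e\Gamma$, so the stabiliser identification still holds.

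Parallel identical edges fall in the same orbit, which is consistent with the example where $\uparrow$ produces coefficients $2$ and $4$. I would verify the formula on that example, with ratios $S(\Gamma)/S(\Gamma_i)$ equal to $1$, $2$ and $4$, as a sanity check on the conventions.
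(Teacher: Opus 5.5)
Your proof is correct and is essentially the paper's orbit--stabiliser argument. Letting $\mathrm{Aut}(\Gamma)$ act directly on $\mcE_\star$ is cleaner than the paper's detour through $\mathrm{eAut}(\Gamma)$ and $\mathrm{eStab}(e,\Gamma)$. You also verify explicitly two points the paper leaves implicit: the $\uparrow^e\Gamma$ from distinct orbits are non-isomorphic, and $\mathrm{Aut}(\uparrow^e\Gamma)\cong\mathrm{Stab}(e)$.

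One caveat, on your sanity check only. The paper's example for $\uparrow$ has no parallel edges. Its coefficients $2$ and $4$ arise only if automorphisms may reverse edge orientations, by swapping the two legged vertices of each triangle and exchanging the two triangles. Under your direction-preserving convention, which is the one matching the paper's examples $S=2^2$ and $2^3$, all seven coefficients there would be $1$. This reflects an inconsistency of conventions within the paper. It does not affect the lemma, since your argument works for any consistent notion of isomorphism.
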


\begin{proof}
We denote for this proof $\mathrm{eAut}(\Gamma)$ the edge automorphisms of $\Gamma$. This is a subgroup of $\mathrm{Aut}(\Gamma)$. For an edge $e\in\mcE_\star(\Gamma)$, we denote its orbit with respect to this group $\mathrm{Orb}(e)$. Moreover, we denote $\mathrm{Stab}(e,\Gamma)$ and $\mathrm{eStab}(e,\Gamma)$ the stabilizers of $e$ in $\Gamma$ with respect to $\mathrm{Aut}(\Gamma)$ and $\mathrm{eAut}(\Gamma)$ respectively. We have, using the orbit-stabilizer theorem,
\begin{equs}
\uparrow\Gamma&=\sum_{e\in\CE_\star(\Gamma)}\uparrow^e\Gamma=\sum_{e\in\CE_\star(\Gamma)/\mathrm{eAut}(\Gamma)}|\mathrm{Orb}(e)|\uparrow^e\Gamma\\
&=\sum_{e\in\CE_\star(\Gamma)/\mathrm{eAut}(\Gamma)}\frac{|\mathrm{eAut}(\Gamma)|}{|\mathrm{eStab}(e,\Gamma)|}\uparrow^e\Gamma\\
&=\sum_{e\in\CE_\star(\Gamma)/\mathrm{eAut}(\Gamma)}\frac{|\mathrm{Aut}(\Gamma)|}{|\mathrm{Stab}(e,\Gamma)|}\uparrow^e\Gamma\\
&=\sum_{e\in\CE_\star(\Gamma)/\mathrm{eAut}(\Gamma)}\frac{S(\Gamma)}{S(\uparrow^e\Gamma)}\uparrow^e\Gamma.
\end{equs}
In all the sums above, $\uparrow^e\Gamma$ should be understood as a representative in the equivalence class of edge automorphisms. 
\end{proof}

\begin{proof}[of Theorem \ref{maintheorem}]
Using the Lemmas \ref{commut}, \ref{prelie1}, \ref{prelie2}, we have
\begin{multline*}
  \D_\lambda V_\lambda[\phi]+\frac{1}{2}\mathrm{Tr}(\dot G_\lambda \mathrm{D}^2 V_\lambda)[\phi]+\frac{1}{2}(\mathrm{D} V_\lambda\dot G_\lambda\mathrm{D}V_\lambda)[\phi]=\sum_{\Gamma\in F^\star}\frac{\alpha(\Gamma)\Gamma!}{S(\Gamma)}(\hat\Pi_\lambda\uparrow\Gamma)[\phi]\\
  +\frac{1}{2}\sum_{\Gamma\in F^\star}\frac{\alpha(\Gamma)\Gamma!}{S(\Gamma)}(\hat\Pi_\lambda\circlearrowleft\Gamma)[\phi]+\frac{1}{2}\sum_{\Gamma_1,\Gamma_2\in F^\star}\frac{\alpha(\Gamma_1)\Gamma_1!\alpha(\Gamma_2)\Gamma_2!}{S(\Gamma_1)S(\Gamma_2)}\hat\Pi_\lambda(\Gamma_1\rightarrow\Gamma_2)[\phi]\,.
\end{multline*}
It suffices to show the formal equality
\begin{equation}\label{eq:formaleq}
  \begin{split}
  \sum_{\Gamma\in F^\star}\frac{\alpha(\Gamma)\Gamma!}{S(\Gamma)}\uparrow\Gamma&=\frac{1}{2}\sum_{\Gamma\in F^\star}\frac{\alpha(\Gamma)\Gamma!}{S(\Gamma)}\circlearrowleft\Gamma+\frac{1}{2}\sum_{\Gamma_1,\Gamma_2\in F^\star}\frac{\alpha(\Gamma_1)\Gamma_1!\alpha(\Gamma_2)\Gamma_2!}{S(\Gamma_1)S(\Gamma_2)}\Gamma_1\rightarrow\Gamma_2\\
  &=\frac{1}{2}\sum_{\Gamma\in F^\star}\frac{\alpha(\Gamma)\Gamma!}{S(\Gamma)}L\Gamma+\frac{1}{2}\sum_{\Gamma_1,\Gamma_2\in F^\star}\frac{\alpha(\Gamma_1)\Gamma_1!\alpha(\Gamma_2)\Gamma_2!}{S(\Gamma_1)S(\Gamma_2)}L(\Gamma_1\Gamma_2)
  \end{split}
\end{equation}
Let $e\in \mcE_\star(\Gamma)$. We have on the left-hand side, using Lemma \ref{lemma:coeffuparrow}
\begin{equs}
  \frac{\alpha(\Gamma)\Gamma!}{S(\Gamma)}\langle\uparrow\Gamma,\uparrow^e\Gamma\rangle&=\left\langle\frac{\alpha(\Gamma)\Gamma!}{S(\Gamma)}\sum_{e\in\mcE_\star}\frac{S(\Gamma)}{S(\uparrow^e\Gamma)}\uparrow^e\Gamma,\uparrow^e\Gamma\right\rangle\\
  &=\frac{\alpha(\Gamma)\Gamma!}{S(\uparrow^e\Gamma)}\langle\uparrow^e\Gamma,\uparrow^e\Gamma\rangle\\
  &=\alpha(\Gamma)\Gamma!\,.
\end{equs}
For the first term on the right-hand side, we have
\begin{equs}
  \frac{\alpha(\Gamma)\Gamma!}{S(\Gamma)}\left\langle L\Gamma,\uparrow^e\Gamma\right\rangle=\frac{\alpha(\Gamma)\Gamma!}{S(\Gamma)}\left\langle\Gamma,D\uparrow^e\Gamma\right\rangle\,.
\end{equs}
The scalar product does not vanish only if $D$ cuts $\uparrow^e\Gamma$ in a connected diagram (\ie $\Gamma$ is 1PI), in which case the quantity above equals $\alpha(\Gamma)\Gamma!$. Moreover, there are two terms in the sum for which the inner product with the graph given by $\uparrow^e\Gamma$ does not vanish since the cut of $D$ does not see the direction of the edges, thus cancelling the $1/2$ factor. For the second term, we carry out a similar computation.
\begin{equs}
  \frac{1}{S(\Gamma_1)S(\Gamma_2)}\langle L(\Gamma_1\Gamma_2),\uparrow^e\Gamma\rangle=\frac{1}{S(\Gamma_1)S(\Gamma_2)}\langle \Gamma_1\Gamma_2,D\uparrow^e\Gamma\rangle\,.
\end{equs}
The scalar product does not vanish only if $D$ cuts $\Gamma$ into two disconnected components $\gamma_1$ and $\gamma_2$ (\ie $\Gamma$ is 1PR). From there, two cases are possible. The first one is that the two diagrams are equal, in which case the quantity above is $2$, because $\langle\gamma\gamma,\gamma\gamma\rangle=2S(\gamma)^2$. The second one is that the two diagrams are not equal, in which case the result is $1$. In that case, the factor $1/2$  is cancelled by the fact that $\gamma_1\gamma_2$ appears once as $\gamma_1\gamma_2$ and once as $\gamma_2\gamma_1$ in the double sum. We can conclude the proof by noticing that for such a decomposition, we have 
\begin{equs}
  \Gamma_1 !  \Gamma_2 !  =  \prod_{v_1\in\mcV_{\star,1}}S\big(\mathrm{elem}(v_1)\big) \prod_{v_2\in\mcV_{\star,2}}S\big(\mathrm{elem}(v_2)\big)=\prod_{v\in\mcV_\star}S\big(\mathrm{elem}(v)\big)\,.
\end{equs}
and
\begin{equs}
  \alpha(\Gamma_1)\alpha(\Gamma_2)  =  \prod_{v_1\in\mcV_{\star,1}}\alpha\big(\mathrm{elem}(v_1)\big) \prod_{v_2\in\mcV_{\star,2}}\alpha\big(\mathrm{elem}(v_2)\big)=\prod_{v\in\mcV_\star}\alpha\big(\mathrm{elem}(v)\big)\,.
\end{equs}
\end{proof}

\begin{remark}
  If one follows this proof naively, it would be easy to think that the coefficients in the ansatz are only $1/S(\Gamma)$. However, this is trivially false if one runs the induction with the first few terms. The way to correct that is to add an additional factor to the elementary diagrams and propagate this to all the diagrams.
\end{remark}

\begin{proof}[of Corollary \ref{corollary:renormmeasure}]
  The proof follows the one of Theorem 5.5 in \cite{BM25}. Let $\Gamma\in F_0$. Let us write in Sweedler's notation $\Delta\Gamma=\sum_{(\Gamma)}\Gamma_1\otimes\Gamma_2$. The backbone of the proof is to identify the couples $(\Gamma_1,\Gamma_2)$ in $\Delta\Gamma$ such that $\Pi_0\Gamma_2\neq0$. Since $G_0=G$, the only diagrams that can satisfy this are the pure legs diagrams, that are either the elementary diagrams, or the renormalising diagrams that come at each step of the induction. More precisely, we have
  \begin{equs}
    (\hat\Pi_0\Gamma)[\phi]=\sum_{\ell}\frac{1}{\ell!}g_{\scriptscriptstyle\mathrm{BPHZ}}(\mathrm{vac}(\Gamma),\pi\ell)\big(\Pi_0\mathrm{res}_\ell(\Gamma)\big)[\phi]\,.
  \end{equs}
  This directly gives
  \begin{equs}
    V_0[\phi]=\sum_{\deg(\Gamma)+|\pi\ell|\leq0}\frac{\alpha(\Gamma)\Gamma!g_{\scriptscriptstyle\mathrm{BPHZ}}(\mathrm{vac}(\Gamma),\pi\ell)}{S(\Gamma)}\frac{1}{\ell!}\big(\Pi_0\mathrm{res}_\ell(\Gamma)\big)[\phi]\,,
  \end{equs}
  which in turn directly yields the result with a simple resummation.
\end{proof}


\begin{thebibliography}{99}
	\expandafter\ifx\csname url\endcsname\relax
	\def\url#1{\texttt{#1}}\fi
	\expandafter\ifx\csname urlprefix\endcsname\relax\def\urlprefix{URL }\fi
	\expandafter\ifx\csname href\endcsname\relax
	\def\href#1#2{#2}\fi
	\expandafter\ifx\csname burlalt\endcsname\relax
	\def\burlalt#1#2{\href{#2}{\texttt{#1}}}\fi
	
	
	\bibitem{pre_FD}
	A.~Abdesselam. {\em Feynman diagrams in algebraic combinatorics.} Semin. Lothar. Comb.
	\textbf{B49c}, (2002), 45 p., electronic only.
	\burlalt{http://eudml.org/doc/123420}{http://eudml.org/doc/123420}.

  \bibitem{BB21}
  R.~{Bauerschmidt}, T.~{Bodineau}. \newblock {\em Log-Sobolev inequality for the continuum sine-Gordon model}. 
  \newblock Comm. Pure Appl. Math., \textbf{74}, no.~10, (2021), 2064--2113.
  \burlalt{doi:10.1002/cpa.21926}{https://www.doi.org/10.1002/cpa.21926}. 

  \bibitem{BBD24}
  R.~{Bauerschmidt}, T.~{Bodineau}, B.~{Dagallier}. \newblock {\em Stochastic dynamics and the Polchinski equation: An introduction}. 
  \newblock Probab. Surveys, \textbf{21}, (2024), 200--290.
  \burlalt{doi: 10.1214/24-PS27}{https://www.doi.org/10.1214/24-PS27}.
	
	\bibitem{BP57}
	N.~N. Bogoliubow, O.~S. Parasiuk.
	\newblock { \em \"{U}ber die {M}ultiplikation der {K}ausalfunktionen in der
		{Q}uantentheorie der {F}elder.}
	\newblock Acta Math. \textbf{97}, (1957), 227--266.
	\newblock
	\burlalt{doi:10.1007/BF02392399}{http://dx.doi.org/10.1007/BF02392399}.
	
	\bibitem{BCCH}
	{ \rm Y. Bruned, A. Chandra, I. Chevyrev,
		M. Hairer},
	\newblock {\em Renormalising SPDEs in regularity structures}.
	\newblock J. Eur. Math. Soc. (JEMS), \textbf{23}, no.~3, (2021), 869-947.
	\newblock
	\burlalt{doi:10.4171/JEMS/1025}{http://dx.doi.org/10.4171/JEMS/1025}.
	
		\bibitem{BH25}
	Y.~Bruned, Y.~Hou.
	\newblock {\textsl{	Renormalising Feynman diagrams with multi-indices.}}
	\newblock \burlalt{arXiv:2501.08151}{http://arxiv.org/abs/2501.08151}.
	

	
	\bibitem{BHZ}
	{\rm Y. Bruned, M. Hairer, L. Zambotti}.
	\newblock {\em Algebraic renormalisation of regularity structures.}
	\newblock Invent. Math. \textbf{215}, no.~3, (2019), 1039--1156.
	\newblock
	\burlalt{doi:10.1007/s00222-018-0841-x}{https://dx.doi.org/10.1007/s00222-018-0841-x}.
	
	\bibitem{CH16}
	A.~Chandra, M.~Hairer.
	\newblock {\textsl{An analytic {BPHZ} theorem for regularity structures.}}
	\newblock \burlalt{arXiv:1612.08138}{http://arxiv.org/abs/1612.08138}.
	

	
		\bibitem{BM25}
	Y.~{Bruned}, A.~{Minguella}.
	\newblock {\em  Renormalisation in the flow approach for singular SPDEs.} To appear in Annals of Probability. \burlalt{arXiv:2504.04885}{https://arxiv.org/abs/2504.04885}.
	
	
	\bibitem{BN23}
	Y.~{Bruned}, U.~{Nadeem}, \newblock {\em Diagram-free approach for convergence of tree-based models in Regularity Structures}. 
	\newblock J. Math. Soc. Japan, \textbf{76}, no.~4, (2024), 1139-1169. 
	\burlalt{doi: 10.2969/jmsj/91129112}{https://www.doi.org/10.2969/jmsj/91129112}. 
	
	
	\bibitem{CF24a}
	A.~Chandra, L.~Ferdinand.
	\newblock{\textsl{A flow approach to the generalized KPZ equation}}
	\newblock\burlalt{arXiv:2402.03101}{https://arxiv.org/abs/2402.03101}.
	
	
	\bibitem{CK1}
	A.~Connes, D.~Kreimer.
	\newblock { \em Hopf algebras, renormalization and noncommutative geometry.}
	\newblock Comm. Math. Phys. \textbf{199}, no.~1, (1998), 203--242.
	\newblock
	\burlalt{doi:10.1007/s002200050499}{http://dx.doi.org/10.1007/s002200050499}.
	
	\bibitem{CK2}
	A.~Connes, D.~Kreimer.
	\newblock { \em Renormalization in quantum field theory and the {R}iemann-{H}ilbert
		problem {I}: the {H}opf algebra structure of graphs and the main theorem.}
	\newblock  Comm. Math. Phys. \textbf{210}, (2000), 249--73.
	\newblock
	\burlalt{doi:10.1007/s002200050779}{http://dx.doi.org/10.1007/s002200050779}.
	
	
		\bibitem{Costello}
	K.~Costello.
	\newblock { \em Renormalization and Effective Field Theory.}
	\newblock  Mathematical Surveys and Monograph \textbf{170}, AMS, (2011), 1--251.
	\newblock
	\burlalt{doi:10.1090/surv/170}{http://dx.doi.org/10.1090/surv/170}.
	
	
	
	\bibitem{Duc21}
	P.~Duch.
	\newblock{\textsl{Flow equation approach to singular stochastic PDEs}}.  Probab. and Math. Phys. \textbf{6}, no.~2, (2025), 327–437. 
	\newblock
	\burlalt{doi:10.2140/pmp.2025.6.327}{https://dx.doi.org/10.2140/pmp.2025.6.327}.
	
	\bibitem{Duc22}
	P.~Duch.
	\newblock{\textsl{Renormalization of singular elliptic stochastic PDEs using flow equation}}. Probab. and Math. Phys. \textbf{6}, no.~1, (2025), 111--138.
	\newblock
	\burlalt{doi:10.2140/pmp.2025.6.111}{https://dx.doi.org/10.2140/pmp.2025.6.111}.

  \bibitem{Duc23}
  P.~Duch.
  \newblock{\textsl{Lecture notes on flow equation approach to singular
  stochastic PDEs}}
  \newblock\burlalt{https://pawelduch.github.io/spde_flow_notes.pdf}.
	
	\bibitem{Faris}
	W.D.~Faris.
	\newblock {\em Combinatorial species and Feynman diagrams.}
	Semin. Lothar. Comb.
	\textbf{61A}, (2011), Article B61An, electronic only.
	\burlalt{http://www.kurims.kyoto-u.ac.jp/EMIS/journals/SLC/wpapers/s61Afaris.pdf}{http://www.kurims.kyoto-u.ac.jp/EMIS/journals/SLC/wpapers/s61Afaris.pdf}.

  \bibitem{FMRV85}
  J.~Feldman, J.~Magnen, V.~Rivasseau, R.~Sénéor. {\em Bounds on renormalized Feynman graphs.}
  \newblock Commun.Math. Phys. \textbf{100}, 23–55 (1985).
  \burlalt{doi:10.1007/BF01212686}{https://dx.doi.org/10.1007/BF01212686}.
	
	\bibitem{GM24}
  M.~Gubinelli, S-J.~Meyer.
  \newblock{\textsl{The FBSDE approach to sine-Gordon up to $6\pi$}}
  \newblock\burlalt{arXiv:2401.13648}{https://arxiv.org/abs/2401.13648}.

	\bibitem{reg}
	{\rm M. Hairer}.
	\newblock {\em A theory of regularity structures.}
	\newblock Invent. Math. \textbf{198}, no.~2, (2014), 269--504.
	\newblock
	\burlalt{doi:10.1007/s00222-014-0505-4}{https://dx.doi.org/10.1007/s00222-014-0505-4}.
	
	\bibitem{BPHZ_theorem}
	{\rm M. Hairer}.
	\newblock {\em An Analyst’s Take on the BPHZ Theorem.} Computation and Combinatorics in Dynamics, Stochastics and Control. Abelsymposium 2016, \textbf{13}, Springer Cham., (2018).
	\newblock
	\burlalt{doi:10.1007/978-3-030-01593-0_16}{https://dx.doi.org/10.1007/978-3-030-01593-0_16}.
	
	\bibitem{KH69}
	K. ~Hepp.
	\newblock {\em On the equivalence of additive and analytic renormalization}.
	\newblock Comm. Math. Phys. \textbf{14}, (1969), 67--69.
	\newblock \burlalt{doi:10.1007/BF01645456}{http://dx.doi.org/10.1007/BF01645456}.
	
  \bibitem{J81}
  A.~Joyal.
  \newblock {\em Une théorie combinatoire des séries formelles.} 
  \newblock Adv. in Math. \textbf{42.1}, (1981), 1--82.
  \newblock \burlalt{doi:10.1016/0001-8708(81)90052-9}{https://doi.org/10.1016/0001-8708(81)90052-9}

  \bibitem{Kop07}
  C.~Kopper. {\em Renormalization theory based on flow equations.}
  \newblock Rigorous Quantum Field Theory: A Festschrift for Jacques Bros. Basel : Birkhäuser Basel, (2007). p. 161-174.
  \newblock \burlalt{doi:10.1007/978-3-7643-7434-1_12}{https://doi.org/10.1007/978-3-7643-7434-1_12}.

  %\bibitem{KM00}
  %C. Kopper, V.F. Müller. {\em Renormalization Proof for Spontaneously Broken Yang-Mills Theory with Flow Equations}
  %\newblock Commun. Math. Phys. \textbf{209} (2000) 477 - 516.
  %\newblock \burlalt{doi:10.1007/s002200050028}{https://doi.org/10.1007/s002200050028}.
	
	\bibitem{K16}
	A.~Kupiainen.
	\newblock {\em Renormalization group and stochastic PDEs}. Ann. Henri.
	Poincaré, \textbf{17}, no.~3, (2016), 497--535.
	\burlalt{doi:10.1007/s00023-015-0408-y}{http://dx.doi.org/10.1007/s00023-015-0408-y}.
	
	\bibitem{KM17}
	A.~Kupiainen, M.~Marcozzi,
	\newblock {\em Renormalization of generalized KPZ equation}. J. Stat. Phys. \textbf{166}, no.~3, (2017), 876--902.
	\burlalt{doi:10.1007/s10955-016-1636-3}{http://dx.doi.org/10.1007/s10955-016-1636-3}.

  \bibitem{Mey26}
  S-J.~Meyer.
  \newblock{\em An FBSDE Construction of the Sine-Gordon EQFT for $\beta^2<\frac{6}{7}8\pi$ and Perturbative Renormalization in the Full Subcritical Regime.}
  \newblock\burlalt{arXiv:2607.20632}{https://arxiv.org/abs/2607.20632}.
	
	\bibitem{M03}
	V. F. Müller, \newblock {\em Perturbative Renormalization by Flow Equation}. 
	\newblock Rev. Math. Phys. \textbf{15}, no.~5, (2003) 491--558.
	\burlalt{doi:10.1142/S0129055X03001692}{https://www.doi.org/10.1142/S0129055X03001692}. 
	
	\bibitem{OSSW}
	F.~Otto, J.~Sauer, S.~Smith, H.~Weber.
	\newblock {\em A priori bounds for quasi-linear SPDEs in the full sub-critical regime}. J. Eur. Math. Soc. (JEMS)  \textbf{27}, no. 1, (2025),  71--118.
	\burlalt{doi:10.4171/JEMS/1574}{http://dx.doi.org/10.4171/JEMS/1574}.
	
	\bibitem{LOTT}
	{\rm P.~Linares, F.~Otto, M.~Tempelmayr, P.~Tsatsoulis}
	\newblock {\em A diagram-free approach to the stochastic estimates in regularity structures.} Invent. Math. \textbf{237}, (2024), 1469--1565.
	\burlalt{doi:10.1007/s00222-024-01275-z}{https://dx.doi.org/10.1007/s00222-024-01275-z}.
	
	\bibitem{P84}
	J. Polchinski. \newblock {\em Renormalization and effective lagrangians}. 
	\newblock Nucl. Phys. B, \textbf{231}, no.~2, (1984) 269--295.
	\burlalt{doi:10.1016/0550-3213(84)90287-6}{https://www.doi.org/10.1016/0550-3213(84)90287-6}.

  \bibitem{Riv91}
  V.~Rivasseau. {\em Rivasseau, Vincent. From Perturbative to Constructive Renormalization.}
  \newblock Princeton University Press, (1991).
  \burlalt{doi:10.1515_9781400862085.fm}{https://www.doi.org/10.1515-9781400862085}.
	
	\bibitem{VS2007}
	W. D.~van Suijlekom.
	\newblock {\em Renormalization of Gauge Fields: A Hopf Algebra Approach.} Comm. Math. Phys. \textbf{276}, (2007), 773--798.
	\burlalt{doi:10.1007/s00220-007-0353-9}{https://dx.doi.org/10.1007/s00220-007-0353-9}.

  \bibitem{Wil71}
  K.~Wilson. \newblock {\em Renormalization Group and Critical Phenomena. I. Renormalization Group and the Kadanoff Scaling Picture}
  Phys. Rev. B, \textbf{4}, (1971), 3174–3183,.
  \burlalt{doi:10.1103/PhysRevB.4.3174}{https://dx.doi.org/10.1103/PhysRevB.4.3174}.
	
	\bibitem{WZ69}
	W. ~Zimmermann.
	\newblock{\em Convergence of Bogoliubov’s method of renormalization in momentum space.}
	\newblock {Comm. Math. Phys. \textbf{15}, (1969), 208--234.}
	\newblock \burlalt{doi:10.1007/BF01645676}{http://dx.doi.org/10.1007/BF01645676}.


\end{thebibliography}
\end{document}